%
\PassOptionsToPackage{hidelinks}{hyperref}
\documentclass[referee]{sn-jnl}%
\AtBeginDocument{\urlstyle{same}}
\usepackage{graphicx}%
\usepackage{multirow}%
\usepackage{amsmath,amssymb,amsfonts}%
\usepackage{amsthm}%
\usepackage{mathrsfs}%
\usepackage[title]{appendix}%
\usepackage{xcolor}%
\usepackage{textcomp}%
\usepackage{manyfoot}%
\usepackage{booktabs}%
\usepackage{algorithm}%
\usepackage{algorithmicx}%
\usepackage{algpseudocode}%
\usepackage{listings}%
%
\theoremstyle{plain}
\newtheorem{thm}{Theorem}
\newtheorem{cor}[thm]{Corollary}

\newtheorem{pro}{Proposition}
%
\theoremstyle{definition}
\newtheorem{definition}{Definition}[section]
\newtheorem{remark}{Remark}[section]
\numberwithin{equation}{section}
\raggedbottom
\newcommand{\cstar}{C^{\ast}}%
\newcommand{\R}{{\mathbb{R}}}%
\newcommand{\Z}{{\mathbb{Z}}}%
\newcommand{\C}{{\mathbb{C}}}%
\newcommand{\NN}{{\mathbb{N}}}%
\newcommand{\Al}{{\cal A}}%
\newcommand{\nonum}{\nonumber}%
\newcommand{\identitybf}{{\mathbf{1}} }
\newcommand{\idenA}{\identitybf_\Al}
\newcommand{\Mat}{{\mathrm{M}}}%
\newcommand{\Tr}{\mathbf{Tr}}%
\newcommand{\rmA}{{\mathrm{A}}}%
\newcommand{\rmB}{{\mathrm{B}}}%
\newcommand{\rmAB}{{\mathrm{AB}}}%
\newcommand{\I}{{\mathrm{I}}}%
\newcommand{\J}{{\mathrm{J}}}%
\newcommand{\IuJ}{{\mathrm{IJ}}}%
\newcommand{\K}{{\mathrm{K}}}%
\newcommand{\Kp}{\K^{\prime}}%
\newcommand{\Kx}{\K+x}%
\newcommand{\X}{{\mathrm{X}}}%
\newcommand{\Y}{{\mathrm{Y}}}%
\newcommand{\LL}{{\mathrm{L}}}%
\newcommand{\RR}{{\mathrm{R}}}%
\newcommand{\ZL}{\Z_\LL}%
\newcommand{\ZR}{\Z_\RR}%
%
%
\newcommand{\vpL}{\vp_{\ZL}}%
\newcommand{\vpR}{\vp_{\ZR}}%
\newcommand{\psiL}{\psi_{\LL}}%
\newcommand{\psiR}{\psi_{\RR}}%
\newcommand{\rhoL}{\varrho_{\LL}}%
\newcommand{\rhoR}{\varrho_{\RR}}%
\newcommand{\vrpL}{\varpi_{\LL}}%
\newcommand{\vrpR}{\varpi_{\RR}}%
\newcommand{\THEL}{\Theta_\LL}%
\newcommand{\THER}{\Theta_\RR}%
\newcommand{\omeL}{\ome_{\scriptscriptstyle \ZL}}%
\newcommand{\omeR}{\ome_{\scriptscriptstyle \ZR}}%
\newcommand{\Jone}{\J_1}%
\newcommand{\Jtwo}{\J_2}%
\newcommand{\rmAc}{\rmA^{c}}%
\newcommand{\Ic}{\I^{c}}%
\newcommand{\Alloc}{\Al_\circ}%
\newcommand{\AlI}{\Al({\I})}%
\newcommand{\AlJ}{\Al({\J})}%
\newcommand{\AlK}{\Al({\K})}%
\newcommand{\AlA}{\Al({\rmA})}%
\newcommand{\AlAc}{\Al({\rmAc})}%
\newcommand{\AlKx}{\Al({\K+x})}%
\newcommand{\AlIuJ}{\Al({\I}\cup {\J})}%
\newcommand{\AlIc}{\Al({\I}^{c})}%
\newcommand{\Ale}{\Al_{+}}%
\newcommand{\Alo}{\Al_{-}}%
\newcommand{\AlIe}{\Al(\I)_{+}}%
\newcommand{\AlIo}{\Al(\I)_{-}}%
\newcommand{\AlKe}{\Al(\K)_{+}}%
%
%
%
\newcommand{\AlJe}{\Al(\J)_{+}}%
\newcommand{\AlJo}{\Al(\J)_{-}}%
\newcommand{\AlZ}{\Al_{\Z}}%
\newcommand{\AlZloc}{\Al_{\Z \circ}}%
%
\newcommand{\AlL}{\Al_{\LL}}%
\newcommand{\AlR}{\Al_{\RR}}%
\newcommand{\AlLloc}{\Al_{\LL\,\circ}}%
\newcommand{\AlRloc}{\Al_{\RR\, \circ}}%
%
%
\newcommand{\AlLe}{\Al_{\LL +}}%
\newcommand{\AlLo}{\Al_{\LL -}}%
\newcommand{\AlRe}{\Al_{\RR +}}%
\newcommand{\AlRo}{\Al_{\RR -}}%
\newcommand{\AlZI}{\AlZ({\I})}%
\newcommand{\AlZL}{\AlZ({\ZL})}%
\newcommand{\AlZR}{\AlZ({\ZR})}%
%
%
%
%
%
%
%
%
%
%
%
%
%
%
\newcommand{\alL}{\alpha_{\LL}}%
\newcommand{\alR}{\alpha_{\RR}}%
\newcommand{\tilalL}{\widetilde{\alL}}%
\newcommand{\tilalR}{\widetilde{\alR}}%
\newcommand{\altL}{\alpha_{\LL,t}}%
\newcommand{\altR}{\alpha_{\RR,t}}%
%
\newcommand{\ome}{\omega}
\newcommand{\vp}{\varphi}
\newcommand{\vpIc}{\vp_{\Ic}}
\newcommand{\psiIc}{\psi_{\Ic}}
%
%
%
\newcommand{\FI}{\widetilde{F}_{\I}}
\newcommand{\FA}{\widetilde{F}_{\rmA}}
\newcommand{\FIpsi}{\FI(\psi)}
\newcommand{\FIvp}{\FI(\vp)}
\newcommand{\pot}{\Phi}%

\newcommand{\HI}{H_\I}%
\newcommand{\HIsur}{H_{\partial \I}}%
\newcommand{\HIopen}{\widetilde{H}_\I}%
\newcommand{\HLsur}{H_{\partial \ZL}}%
\newcommand{\HRsur}{H_{\partial \ZR}}%
\newcommand{\WLR}{W_{\LL, \RR}}%
%
%
%
\newcommand{\vpbW}{\vp^{\beta \WLR}}%
\newcommand{\vpbeta}{\vp_{\beta}}%
\newcommand{\vpg}{\vp_{\infty}}%
\newcommand{\HA}{H_\rmA}%
\newcommand{\HAsur}{H_{\partial \rmA}}%
\newcommand{\HAopen}{\widetilde{H}_\rmA}%
%
%
%
%
\newcommand{\ScI}{\widetilde{S}_{\I}}%
\newcommand{\ScIJ}{\widetilde{S}_{\I|\J}}%
\newcommand{\ScIJone}{\widetilde{S}_{\I|\Jone}}%
\newcommand{\ScIJtwo}{\widetilde{S}_{\I|\Jtwo}}%
\newcommand{\ScIIc}{\widetilde{S}_{\I|\Ic}}%
\newcommand{\ScAB}{\widetilde{S}_{\rmA|\rmB}}%
\newcommand{\ScA}{\widetilde{S}_{\rmA}}%
%
\newcommand{\potL}{\pot_\LL}%
\newcommand{\potR}{\pot_\RR}%
\newcommand{\potLR}{\pot_{{\LL, \RR}}}%

\newcommand{\potI}{\pot(\I)}%
\newcommand{\potK}{\pot(\K)}%
\newcommand{\potKp}{\pot(\Kp)}%
\newcommand{\potKx}{\pot(\Kx)}%
%
%
%
%
%
%
%
\newcommand{\potLK}{\potL(\K)}%
%
%
%
%
\newcommand{\potRK}{\potR(\K)}%
%
%
%
%
%
%
\newcommand{\Rnu}{\R^{\nu}}%
\newcommand{\Znu}{\Z^{\nu}}%
%
\newcommand{\tr}{\mathrm{tr}}%
\newcommand{\trI}{\tr_{\I}}%
\newcommand{\Fin}{{\mathfrak F}}%

\newcommand{\Finf}{\Fin_{{\rm{loc}} }}%
\newcommand{\FinfL}{\Fin_{\LL \, {\rm{loc}} }}%
\newcommand{\FinfR}{\Fin_{\RR \, {\rm{loc}} }}%
\newcommand{\BOX}{\Fin_{\rm{b}}}%
%
\newcommand{\LamlimJ}{\Lambda \nearrow \J}
\newcommand{\upright}{\upharpoonright}
\newcommand{\Hil}{{\cal H}}%
\newcommand{\pivp}{\pi_{\vp}}%
\newcommand{\Hilvp}{{\Hil}_{\vp}}%
\newcommand{\Ome}{\Omega}%
\newcommand{\Omevp}{\Ome_{\vp}}%
\newcommand{\vNM}{{\mathfrak{M}}}%
\newcommand{\vNMvp}{\vNM_{\vp}}%
%
\newcommand{\vph}{\vp^{h}}%
\newcommand{\omeh}{\omega^{h}}%
\newcommand{\vpmbh}{\vp^{-\beta h}}%
%
%
%
\newcommand{\sigtvp}{\sigma_t^{\vp}}%
\newcommand{\sigtvph}{\sigma_t^{[\vph]}}%
%
%
%
%
\newcommand{\Omevppih}{\Omevp^{\pivp(h)}}%
\newcommand{\Omevpk}{\Omevp^{k}}%
\newcommand{\rhopotI}{\rho^{\beta, \pot}_{\;\I}}
\newcommand{\vppotL}{\vp^{\beta, \potL}_{\;\LL}}
\newcommand{\vppotR}{\vp^{\beta, \potR}_{\;\RR}}
\newcommand{\vpbIsur}{\vp^{\beta\HIsur}}%
%
\newcommand{\gotimes}{\otimes_{\text{car}}}%
\newcommand{\cicr}{c_i^{\,\ast}}%
\newcommand{\ci}{c_i}%
\newcommand{\cjcr}{c_j^{\,\ast}}%
\newcommand{\cj}{c_j}%
\newcommand{\cicrsig}{c_{i\sigma}^{\,\ast}}%
\newcommand{\cisig}{c_{i\sigma}}%
%
%
%
%
\newcommand{\cjcrsigp}{c_{j {\sigma}^\prime}^{\,\ast}}%
\newcommand{\cjsigp}{c_{j {\sigma}^\prime}}%
\newcommand{\OA}{\mathcal{O}_\rmA}%
\newcommand{\OB}{\mathcal{O}_\rmB}%
\newcommand{\altpot}{\alpha_{\pot,t}}%
\newcommand{\altpoth}{\altpot^{h}}%
\newcommand{\altpotmW}{\altpot^{-\WLR}}%
\newcommand{\almbtpot}{\alpha_{\pot,-\beta t}}%
\newcommand{\altpotL}{\alpha_{\potL,t}}%
\newcommand{\altpotR}{\alpha_{\potR,t}}%
%
%
%
%
%
%

\newcommand{\SEP}{\mathfrak{S}}
\begin{document}
\title[Mutual entropy and thermal area law]{Mutual entropy and 
thermal area law in  $C^{\ast}$-algebraic quantum lattice systems}
%

\author*{\fnm{Hajime} \sur{Moriya}}\email{hmoriya4@se.kanazawa-u.ac.jp}
\affil*{\orgdiv{Institute of Science and Engineering}, \orgname{Kanazawa University}, \orgaddress{\street{Kakuma-Machi}, \city{Kanazawa}, \postcode{920-1192}, \state{Ishikawa}, \country{Japan}}}
\abstract{We present a general definition of mutual entropy for 
infinitely extended quantum spin  and fermion lattice systems, 
and show  its fundamental  properties.
Using the mutual entropy, we establish a thermal area law 
in these infinitely extended quantum systems.
 The proof is based on the local thermodynamical stability (LTS) 
  formulated as a variational principle 
in terms of the conditional free energy  on local subsystems.
 Our  thermal area law in quasi-local $C^{\ast}$-systems    
  applies to general interactions with well-defined surface energies.
Furthermore, we examine the mutual entropy between the left- and right-sided infinite regions of one-dimensional lattice systems. For general translation-invariant finite-range interactions on such systems, the thermal equilibrium state at any temperature exhibits a finite value of the mutual entropy 
between these infinite disjoint regions. 
This result implies that the infinitely large quantum entanglement 
characteristic of critical ground states in one-dimensional systems 
is  drastically destroyed by even a small  positive temperature, 
indicating thermal suppression of quantum entanglement.}
\keywords{$\cstar$-algebraic quantum 
 statistical mechanics, quantum mutual entropy, thermal area law, local thermal stability}


\maketitle
\tableofcontents
\section{Introduction}
\label{sec:INTRO}
The main purpose of this paper is to establish a thermal area law for infinitely extended  quantum  lattice systems. 
In Subsection~\ref{subsec:introFINITE}, we recall the  thermal area law 
in the finite-dimensional setting, as  presented in \cite{WOLF},
 and introduce  basic notation which will be used throughout this paper.  
 In Subsection~\ref{subsec:Purpose}, we state  our objective 
for investigating the thermal area law in a $\cstar$-algebraic framework.

\subsection{Thermal area law in finite-dimensional systems}
\label{subsec:introFINITE}
  Consider a compound quantum system  on some underlying space $\Gamma$.
For any subset $\rmA \Subset \Gamma$, 
the subsystem  associated with ${\rmA}$
 is given  by  a finite-dimensional matrix algebra. 
Let $\rho$ be  a  state  on  $\Gamma$.  
 The  reduced  state  $\rho_{\rmA}$ is the restriction of $\rho$
 to the subsystem on $\rmA$.
We  will occasionally  use the slightly heavy 
notation $\rho\!\!\upright_{\rmA}$ instead of 
$\rho_{\rmA}$ in order to  emphasize the restriction
 of the global state  $\rho$  to the subsystem on $\rmA$.

The von Neumann entropy of a state $\rho$ on $\rmA$ is defined  as  
\begin{equation}
\label{eq:vonENT}
S_{\rmA}(\rho)\equiv -\Tr(D_{{\rho}_{\rmA}}\log D_{{\rho}_{\rmA}}),
\end{equation}
 where   $D_{{\rho}_{\rmA}}$ denotes the density matrix  corresponding 
 to the state $\rho_{\rmA}$  with respect to the 
matrix trace $\Tr$. In contrast to the tracial state $\tr$,   
 the matrix trace $\Tr$ takes $1$ on   
  each one-dimensional projection, and therefore  
  $\frac{1}{n_{\rmA}}\Tr_{\rmA}= \tr_{\rmA}$ holds, where 
$n_{\rmA}$ denotes  the matrix dimension of the subsystem.

Next, we  recall the 
 quantum relative entropy  \cite{UME62}.
 For  two states $\rho$ and $\sigma$, 
the  quantum relative entropy of them on 
  ${\rmA} \Subset \Gamma$ is given by 
\begin{equation}
\label{eq:UMEGAKI}
S(\rho_{\rmA} \mid  \sigma_{\rmA})\equiv
\Tr\left(D_{{\rho}_{\rmA}}(\log D_{{\rho}_{\rmA}}
-\log D_{{\sigma}_{\rmA}}) \right).
\end{equation}
The  connection between 
 the von Neumann entropy and the quantum  relative entropy
is as follows:
\begin{align}
\label{eq:rel-and-ent}
S(\rho_{\rmA} \mid \tr_{\rmA})= -S_{\rmA}(\rho)+\log n_{\rmA}. 
\end{align}

Consider  any two disjoint subsets  $\rmA, \rmB \Subset \Gamma$.
For a state $\rho$, the conditional entropy of $\rmA$ given 
 the condition $\rmB$ is defined as 
\begin{equation}
\label{eq:CE-finite}
\ScAB(\rho):= S_{\rmAB}(\rho)-S_{\rmB}(\rho). 
\end{equation}
It  is often denoted  $H(\rmA | \rmB)$ in information theory. 

The mutual entropy of a  state $\rho$
between  disjoint subsets  ${\rmA}$ and ${\rmB}$ is given 
  by
\begin{equation}
\label{eq:MUT}
I_\rho(\rmA:\rmB):=
S_{\rmA}(\rho)+S_{\rmB}(\rho)-S_{\rmAB}(\rho).
\end{equation}
 The mutual entropy is also expressed in terms of
the conditional entropy  as
\begin{equation}
\label{eq:MUT-CondExp-finite}
I_\rho(\rmA:\rmB)=
S_{\rmA}(\rho)-\ScAB(\rho).
\end{equation}
The mutual entropy has another 
notable  expression in terms of 
 the quantum relative entropy as 
\begin{equation}
\label{eq:MIrelative}
I_\rho(\rmA:\rmB)=
S(\rho_{\rmAB} \mid \rho_{\rmA}\otimes\rho_{\rmB}).
\end{equation}

We now turn to  statistical-mechanical considerations.
Take a pair of disjoint  regions  ${\rmA}$ and ${\rmB}$ as above. 
The region $\rmA$  represents a subsystem  of interest, 
whereas its exterior region $\rmB$ lies in $\rmAc$, 
the complement  of $\rmA$. 
Let $H_{\rmAB}$ denote a Hamiltonian of the quantum system on $\rmAB(\equiv \rmA \cup\rmB)$, which can be decomposed as
\begin{equation}
\label{eq:Hdecompop}
H_{\rmAB}=H_{\rmA}+H_{\partial \rmA} + H_{\rmB},
\end{equation}
where $H_{\rmA}$ and $H_{\rmB}$ are  local Hamiltonians on  
the  specified regions ${\rmA}$ and ${\rmB}$, respectively, and
$H_{\partial \rmA}$ denotes  the interaction  
  between $\rmA$ and $\rmB$.  
The term $H_{\partial \rmA}$ is commonly referred to 
as the surface energy.
The region $\partial \rmA$ denotes the support  
of the local operator $H_{\partial \rmA}$,  
  corresponding  to the boundary \emph{area} 
 between  $\rmA$ and $\rmB$, which 
intersects  both regions. 
This boundary area  will play a central role 
in  the thermal area law, which  will be introduced below.
 
For  each  inverse temperature  $\beta>0$, the  
 Gibbs  state $\rho_{\text{Gib}\, \rmAB}^\beta$ associated with  
 the Hamiltonian $H_{\rmAB}$ is 
 defined  by 
\begin{equation}
\label{eq:GibbsAB}
\rho_{\text{Gib}\, \rmAB}^\beta(X):=\frac{{\rm Tr} \left( e^{-\beta H_{\rmAB}} X \right) }
{{\rm Tr}\left( e^{-\beta H_{\rmAB}}\right)}, 
\end{equation}
where $X$ denotes an arbitrary operator on the region $\rmAB$.  

The free energy functional of  any  state $\rho$ 
on  $\rmAB$ is  defined as  
\begin{equation}
\label{eq:FE}
F(\rho)\equiv {\rm Tr}(H_{\rmAB}\rho)-\frac{1}{\beta}
S_{\rmAB}(\rho). 
\end{equation}
 It is well known  that the Gibbs state \eqref{eq:GibbsAB} defined above 
 minimizes the free energy  among all states of  the system on $\rmAB$, 
see  Section 5 of \cite{WEHRL}, for example. 
In particular, 
\begin{equation}
\label{eq:FEminiplug}
F(\rho_{\text{Gib}\, \rmAB}^\beta)
\le F(\rho_{\text{Gib}\, \rmAB}^\beta\!\!\upright_{{\rmA}}
 \otimes \rho_{\text{Gib}\, \rmAB}^\beta\!\!\upright_{\rmB}), 
\end{equation}
where the  state on  the right-hand side   
 is  the product of  the reduced states 
 of 
 $\rho_{\text{Gib}\, \rmAB}^\beta$ on $\rmA$ and $\rmB$.
From this  inequality  it follows that 
\begin{equation}
 \label{eq:WOLF-finite-TAL} 
I_{\rho_{\text{Gib}\, \rmAB}^\beta} (\rmA:\rmB)\le \beta
(\rho_{\text{Gib}\, \rmAB}^\beta\!\!\upright_{{\rmA}} \otimes
\rho_{\text{Gib}\, \rmAB}^\beta\!\!\upright_{{\rmB}}-
\rho_{\text{Gib}\, \rmAB}^\beta)\left(H_{\partial \rmA}\right) \le 
2 \beta \lVert H_{\partial \rmA} \rVert.
\end{equation}
 Note that 
the reduced states  
$\rho_{\text{Gib}\, \rmAB}^\beta\!\!\upright_{{\rmA}}$ and 
 $\rho_{\text{Gib}\, \rmAB}^\beta\!\!\upright_{{\rmB}}$ used above 
 are 
different from  the  Gibbs states determined by 
the local Hamiltonians $H_{\rmA}$ and $H_{\rmB}$, respectively.

If  the surface energy is estimated by the area of the surface as   
\begin{equation}
 \label{eq:HpartialA} 
\lVert H_{\partial \rmA} \rVert \le c|\partial \rmA| 
\end{equation}
 for some constant $c>0$, then \eqref{eq:WOLF-finite-TAL} 
 yields 
\begin{equation}
\label{eq:Wolf-geometric} 
I_{\rho_{\text{Gib}\, \rmAB}^\beta}(\rmA: \rmB)\le  2\beta c 
| {\partial \rmA}|.
\end{equation}
This is the familiar expression of the thermal area law, which depends 
on the surface area $|\partial \rmA|$ rather than on the volume $|\rmA|$. 
The thermal area law obtained in this way holds for quantum systems on an infinite-dimensional Hilbert space, provided that the local Gibbs states are represented by density matrices (positive trace-class operators); see \cite{LEM}.

\subsection{Thermal area law in the $C^{\ast}$-algebraic framework}
\label{subsec:Purpose}
We aim to formulate a thermal area law for quantum spin lattice systems and 
fermion lattice systems adopting the $\cstar$-algebraic framework.
We explain our motivation.

The thermal area law as presented  in \cite{WOLF} and  
summarized in Subsection~\ref{subsec:introFINITE}
 uses the Hilbert-space formalism, in particular the so-called  
\emph{box procedure}. 
This conventional approach of statistical mechanics  
 is based on local Gibbs states associated with definite local Hamiltonians, 
each defined under a \emph{specific} boundary condition 
or a hypothetical wall enclosing a finite region (box).  

While the box procedure serves as a handy and practical formulation, 
it has not been proved that all equilibrium states 
 can be thoroughly exhausted within this procedure, 
except for a few known cases.  From a 
mathematically  rigorous standpoint, it is certainly a limitation.
Moreover, in the formulation of the thermal area law that we now argue, 
there is another subtle aspect.  
In the box procedure, 
each local Gibbs state depends on two finite regions as its parameters: 
a finite subregion $\rmA$ representing the local system of interest, and another finite subregion $\rmB$ representing a thermal bath 
coupled to $\rmA$. This two-region dependence is reflected 
in the conventional form of the thermal area law  presented 
 in \eqref{eq:Wolf-geometric}.

However, treating these two finite regions in a consistent  manner 
 is not straightforward. In particular,  how to take 
the appropriate infinite-volume limit of such double-indexed local Gibbs 
states remains ad hoc unless supplemented with  specific physical input.

In light of the pioneering works on the area law
  \cite{SORKIN, SRED}, it is essential to consider reduced (partial) states 
 of a global state  defined on an infinitely extended space. 
Here, the notion of modular Hamiltonians naturally emerges;   
 see, for instance, \cite{CAS-HUE-LEC} and \cite{CARDY}. 

The $\cstar$-algebraic framework provides a natural setting for describing
 such infinitely extended quantum systems, where
all local subsystems are embedded.
 Hence, we employ the $\cstar$-algebraic framework
 to formulate  a thermal area law for infinitely extended quantum 
(lattice) systems.  We consider that this is more than a simple infinite-dimensional reformulation of the known result. 
For general discussion of the $\cstar$-algebraic approach to quantum statistical mechanics in comparison with the box procedure,   
 we refer to the introduction of \cite{BRA2}, and Section 2 of \cite{ARAKI78}.

\section{$C^{\ast}$-algebraic quantum lattice systems}
\label{subsec:QSYSTEMS}
In this section, we briefly introduce the basic formalism of 
$\cstar$-algebraic quantum lattice systems. 
We refer to  \cite{BRA2} as a standard reference.

Let $\Gamma$ denote an  infinite lattice. 
For example, $\Gamma$ can be a $\nu$-dimensional 
cubic lattice $\Znu$ with $\nu \in \NN$. Let $\Al$ denote  
 a quantum spin lattice system or a fermion lattice system on $\Gamma$.
Precisely, $\Al$ is a quasi-local $\cstar$-system on  $\Gamma$ given as follows. 
Let $\idenA$ denote the unit of $\Al$.
 Let $\Fin$ denote the  set of all subsets of $\Gamma$. 
If $\I \in\Fin$ has  finite cardinality (finite volume) $|\I|<\infty$,
 then we denote  $\I \Subset \Gamma$.
Let  $\Finf$ denote the set of all finite subsets of $\Gamma$.
For each $\I\in \Finf$, the subsystem $\AlI$
 is a finite-dimensional matrix algebra. 
The local algebra  $\Alloc:=\bigcup_{\I \in\Finf}\AlI$ 
is  norm  dense in the total  $\cstar$-system $\Al$.

Thus far, we have introduced  the common structure 
 of quantum lattice systems.
In the following,
we distinguish between quantum spin lattice systems
and fermion lattice systems,
which are characterized by tensor-product structures
and the canonical anticommutation relations (CAR), respectively.

\subsection{Quantum spin lattice systems}
\label{subsec:SPIN}
Quantum spin lattice systems have the following local structure.
For each $\I \in \Finf$, $\AlI$ is isomorphic to a 
 full matrix algebra  $\Mat_{k}(\C)$ for some  $k\in \NN$.
For  disjoint subsets $\I, \J \in \Finf$, the joint system $\AlIuJ$ is given by the tensor product of 
$\AlI$ and $\AlJ$:
\begin{equation}
\label{eq:TENSOR-IJ} 
\AlIuJ=\AlI\otimes \AlJ.
\end{equation}

\subsection{Fermion lattice systems}
\label{subsec:FERMIONsys}
Let $\ci$ and $\cicr$ denote the annihilation 
and  creation operators of a fermion at site $i\in \Gamma$, respectively. 
They satisfy the  canonical anticommutation relations (CAR): 
\begin{align}
\label{eq:CAR}
\{ \cicr, \cj \}&=\delta_{i,j}\, \idenA, \nonumber \\
\{ \cicr, \cjcr \}&=\{ \ci, \cj \}=0.
\end{align}
For each $\I\in \Finf$, $\AlI$ is given by the finite-dimensional 
algebra generated by $\{\cicr, \, \ci\, ;\;i\in \I\}$, 
which is  isomorphic to $\Mat_{2^{|\I|}}(\C)$.

Let $\Theta$ denote the involutive automorphism on the fermion system  
 $\Al$  determined  by 
 \begin{equation}
\label{eq:CAR-THETA}
\Theta(\ci)=-\ci, \quad \Theta(\cicr)=-\cicr,\quad i\in \Gamma.
\end{equation}
The grading structure on $\Al$ is given by $\Theta$ as:
\begin{equation}
\label{eq:CAREO}
 \Ale := \{A \in \Al \; \bigl| \;   \Theta(A)=A  \},\quad  
 \Alo := \{A \in \Al  \; \bigl| \;  \Theta(A)=-A  \},
\end{equation}
\begin{equation}
\label{eq:CARbunkai}
\Al=\Ale\oplus\Alo.
\end{equation}
For each $\I\in \Fin$, let 
\begin{equation}
\label{eq:CARIeo}
 \AlIe := \AlI\cap \Ale,\quad  \AlIo := \AlI\cap \Alo,
\end{equation}
\begin{equation}
\label{eq:CARIbunkai-i}
\AlI=\AlIe\oplus\AlIo.
 \end{equation}
If a state $\rho$ on $\Al$ is invariant under the fermion grading $\Theta$, 
then it vanishes on $\Alo$ and is called an even state.

By \eqref{eq:CAR}, for any disjoint pair of regions 
$\I, \J \in \Fin$, the $\Theta$-graded locality holds:
\begin{align}
\label{eq:glocality}
[A_{+},\; B_{+}]&=0\ {\text{for}}\ A_{+} \in \AlIe, \ 
B_{+} \in \AlJe, \nonum\\  
[A_{+},\; B_{-}]&=0\ {\text{for}}\ A_{+} \in \AlIe, \ 
 B_{-} \in \AlJo, \nonum\\  
[A_{-},\; B_{+}]&=0\ {\text{for}}\ A_{-} \in \AlIo, \ 
 B_{+} \in \AlJe, \nonum\\  
\{A_{-},\; B_{-}\}&=0\ {\text{for}}\  A_{-} \in \AlIo, \ 
 B_{-} \in \AlJo.
\end{align}
Let $\theta$ be a $\pm 1$-valued symmetric function on even-odd elements 
 in two disjoint regions given as 
\begin{align}
\label{eq:theta}
1&=
\theta (A_{+}, B_{+})=\theta ( B_{+}, A_{+})=
\theta (A_{+}, B_{-})=\theta ( B_{-}, A_{+})=
\theta (A_{-}, B_{+})=\theta ( B_{+}, A_{-}), \nonum\\
-1&=\theta (A_{-}, B_{-})=\theta ( B_{-}, A_{-}).
\end{align}
By using the function $\theta$, the graded commutation relations 
  \eqref{eq:glocality} can be rewritten in the following  compact form: 
\begin{align}
\label{eq:compact}
A_{\sharp}B_{\flat}= \theta (A_{\sharp}, B_{\flat}) B_{\flat} A_{\sharp},
\quad  A_{\sharp} \in \AlIe \ \text{or} \ \in \AlIo, \ 
 B_{\flat} \in \AlJe \ \text{or} \ \in \AlJo.
\end{align}

We may consider fermions  with  finitely many 
  spin degrees of freedom, labeled by $\sigma$. 
These fermions obey the canonical anticommutation relations:  
\begin{align}
\label{eq:CARspin}
\{ \cicrsig, \cjsigp \}&=\delta_{i,j} \delta_{\sigma, \sigma^{\prime}}
\, \idenA, \nonumber \\\{ \cicrsig, \cjcrsigp \}&=\{ \cisig, \cjsigp \}=0.
\end{align}
Since this generalization does not affect the argument  to be presented, 
 we deal with   spinless fermion systems as in  
  \eqref{eq:CAR} to avoid unnecessary notational clutter.

\section{Quantum mutual entropy for infinitely extended quantum lattice systems}\label{sec:MUT-Def}
To formulate a thermal area law in the $\cstar$-algebraic framework, we need 
 the notion of quantum mutual entropy.
Recently, in algebraic quantum field theory (AQFT),  studies  related to 
 quantum mutual entropy
 have  been developed; see e.g. \cite{HO-SA}. 
It seems, however, that   
 a general and  systematic  treatment of the quantum mutual entropy 
in $\cstar$-algebraic quantum statistical mechanics is  scarce. 
 See Remark~\ref{rem:MUTUAL} below. 

In this section, we define the mutual entropy  
 in quasi-local $\cstar$-systems representing  
 quantum spin lattice systems and fermion lattice systems, 
 and provide its basic properties.

\begin{remark}
\label{rem:MUTUAL}
 In the seminal work \cite{LIND73},
the quantum mutual entropy as in \eqref{eq:MUT} was introduced  
  for finite-dimensional quantum systems.
The standard reference on $\cstar$-algebraic quantum statistical mechanics \cite{BRA2} does not directly address the mutual entropy within this framework. 
The extensive monograph on quantum entropy 
 \cite{OHYA-PETZ}, contrary to  expectation, 
does not present  a $\cstar$-algebraic (operator-algebraic)
 extension of the mutual entropy. 
Instead, \cite{OHYA-PETZ} introduces other elaborate quantities 
under the term ``quantum mutual entropy,'' which are primarily intended for the study of quantum channels. 
\end{remark}

In the following subsections, we  introduce basic  entropy functionals--
von Neumann entropy,  conditional entropy, and mutual entropy--  
within the quasi-local $\cstar$-algebras,   
and provide  their  basic properties  required for our purpose.
There is no essential  distinction  between the quantum spin lattice system and the fermion lattice system. However, certain subtleties will arise when considering  general (non-even)  states on the fermion system.

\subsection{von Neumann entropy}
\label{subsec:vonENT}
We briefly recall the von Neumann entropy and its properties, 
which serve as the basis for defining conditional entropy and mutual entropy.

Consider an arbitrary  state  $\psi$ on the quasi-local $\cstar$-system $\Al$. 
The von Neumann entropy $S_{\I}(\psi)$ of $\psi$ on $\I\in\Finf$
is defined as in \eqref{eq:vonENT}. 
It satisfies  the strong subadditivity (SSA) property:
 For  $\X, \Y\in \Finf$,
\begin{align}
\label{eq:SSA}
S_{\X \cap \Y}(\psi)+S_{\X \cup\Y}(\psi)\le 
S_{\X}(\psi)+S_{\Y}(\psi).
\end{align}
SSA is a fundamental  property of the von Neumann entropy, 
proved by Lieb and Ruskai \cite{LIEBRUSKAI73}.
SSA also holds for  fermion lattice systems without any restriction on states  as shown  in \cite{SSA}.

\subsection{Conditional entropy}
\label{subsec:CONDENT}
We now introduce the conditional entropy, following 
 Section 6 of \cite{RIMSI76}; see also  Definition 6.2.27 of \cite{BRA2}.
Let $\psi$ be an arbitrary state of $\Al$. Take any $\I\in \Finf$. 
Let  $\J \subset \Ic$, which  can be  either finite or infinite.
The conditional entropy of $\psi$ on $\I$ given  $\J$ is defined by 
\begin{align}
\label{eq:ScIJpsi}
\ScIJ(\psi)&:=
\inf_{\Lambda \Subset \J} \bigl\{ S_{\I\cup \Lambda}(\psi)
-S_{\Lambda }(\psi) \bigr\}\nonum \\
&=\lim_{\LamlimJ} \bigl\{ S_{\I\cup \Lambda}(\psi)
-S_{\Lambda }(\psi) \bigr\}.
\end{align}
The existence of the limit as an infimum is guaranteed by the strong subadditivity of the von Neumann entropy \eqref{eq:SSA} as stated in Proposition 6.2.26 of \cite{BRA2}. 
If $\J$ is  finite, then it coincides with  the formula 
$\ScIJ(\psi)= S_{\I \cup \J}(\psi)-S_{\J}(\psi)$ given in \eqref{eq:CE-finite}. If $\J=\emptyset$, then it is reduced to the von Neumann entropy $S_{\I}(\psi)$ When $\J =\Ic$, the corresponding conditional entropy
$\ScIIc(\psi)$ will be denoted by $\ScI(\psi)$ as in \cite{BRA2}.
For each fixed $\I\in \Finf$, $\ScIJ(\psi)$ is a non-increasing function of $\J\subset \Ic$ with respect to inclusion  as noted  in Proposition 6.2.25 of \cite{BRA2}. Namely, for $\Jone \subset \Jtwo \subset \Ic$
\begin{align}
\label{eq:MONOTONEcondENT}
\ScI(\psi)\le \ScIJtwo(\psi)\le
\ScIJone(\psi)\le S_\I(\psi).  
\end{align}
Furthermore, for any state of the quantum spin lattice system and any \emph{even} state of the fermion lattice system, the inequality 
\begin{equation}
\label{eq:ineq-BOTH}
 \left| \ScIJ(\psi) \right| \le S_{\I}(\psi)
\end{equation}
holds for all $\J\subset \Ic$. 
As noted in Proposition 6.2.25 of \cite{BRA2}, it follows from 
 the triangle inequality  of the  von Neumann entropy \cite{Araki-Lieb} \cite{VALIDITY}.
Note, however, that some non-even states of the fermion system fail to  satisfy  \eqref{eq:ineq-BOTH}; 
see \cite{SOME, VALIDITY} for explicit  counterexamples.

\subsection{Mutual entropy}
\label{subsec:MUTUALquasilocal}
We need the mutual entropy on quantum spin and fermion lattice systems in the case where one of the disjoint regions is finite. 
This corresponds to the standard  setup of  the thermal area law,
which will be discussed in Section~\ref{sec:TAL}.
So throughout   this subsection, we assume that the region $\I$ is  finite, 
while  the other region $\J$  
 in the complement of $\I$  can be  either  finite or infinite.
Later in Section~\ref{sec:ONE}, we discuss the case where  
  both disjoint regions $\I$ and $\J$  are infinite.

We shall formulate the  mutual entropy in terms of the conditional entropy, rather than  the quantum relative entropy. This somewhat indirect definition is designed to accommodate general states
 which need  not be  modular (faithful) states; see Subsection \ref{subsec:GR}.
It also enables us to treat the fermion system in full generality.

Let $\psi$ be an arbitrary state on the quasi-local $\cstar$-system $\Al$.  
Consider two disjoint  regions $\I\in \Finf$ and   $\J \in \Fin$. 
The mutual entropy of $\psi$ between $\I$ and $\J$ is defined by
\begin{equation}
\label{eq:MUTUAL-IJ}
I_\psi(\I:\J):=
S_{\I}(\psi)-\ScIJ(\psi), 
\end{equation}
in particular, 
\begin{equation}
\label{eq:MUTUAL-IandIc}
I_\psi(\I:\Ic):=S_{\I}(\psi)-\ScI(\psi).
\end{equation}
If $\J$ is finite, then this reduces to  the  finite-dimensional formula
 $I_\psi(\I:\J)=S_{\I}(\psi)+S_{\J}(\psi)-S_{\IuJ}(\psi)$ given in 
  \eqref{eq:MUT}.

By the inequality \eqref{eq:MONOTONEcondENT}, the mutual entropy
 is non-negative: 
\begin{equation}
\label{eq:MUTUAL-IJ-POSITIVE}
0 \le I_\psi(\I:\J).
\end{equation}
 For each fixed $\I\in \Finf$, by \eqref{eq:MONOTONEcondENT}, 
the mutual entropy is 
monotone with respect to inclusion  of  the outside region. Namely, 
 for  $\Jone \subset \Jtwo \subset \Ic$,  
\begin{equation}
\label{eq:MUTUAL-MONOTONE}
I_\psi(\I:\Jone)\le  I_\psi(\I:\Jtwo).
\end{equation}

By the estimate \eqref{eq:ineq-BOTH}, 
for an arbitrary state $\psi$ of the quantum spin lattice system 
 and  an arbitrary even state $\psi$  of  the fermion lattice system, 
the mutual entropy on any fixed finite $\I$ is 
 bounded by twice the von Neumann entropy: 
For any   $\J\subset\Ic$, 
\begin{equation}
\label{eq:MUT-2von}
I_\psi(\I:\J)\le  2S_{\I}(\psi).
\end{equation}
 This inequality is well known in the 
finite-dimensional case.
Again, note that 
some  non-even states of the fermion system 
 invalidate  \eqref{eq:MUT-2von} as shown in \cite{SOME, VALIDITY}.

\subsection{Mutual entropy in terms of quantum relative entropy}
\label{subsec:MUT-ARAKI}
We now reformulate the mutual entropy defined 
 in \eqref{eq:MUTUAL-IJ} in terms of 
the quantum relative entropy as in  \eqref{eq:MIrelative}. 

The quantum relative entropy of two states  $\omega$ and $\varrho$ on the $\cstar$-system $\Al$ is \emph{formally} given by  
\begin{equation}
\label{eq:Q-R-mugen}
S(\omega \mid \varrho)=\omega \left(\log \omega-\log \varrho\right).
\end{equation}
To make this expression rigorous, we assume that 
 both $\omega$ and $\varrho$ are modular (faithful) states. 
The definition of modular states will be  given 
 in Definition~\ref{df:MODULAR} in Section~\ref{sec:EQUILIBRIUM-STATES}. 
We then apply Araki's definition of quantum relative entropy \cite{RIMSI76, RIMSII77} to these two states,  
\begin{equation}
\label{eq:ARAKI-relative}
 S(\omega \mid \varrho)\equiv S_{\rm{ARAKI}}(\varrho/\omega )
:=-(\Psi_{\omega}, \log \Delta_{\varrho, \omega} \Psi_{\omega}), 
\end{equation}
 where $\Delta_{\varrho, \omega}$ denotes the relative modular operator.
Precisely, one takes GNS representations of the states and applies the formula 
\eqref{eq:ARAKI-relative} in the setting of von Neumann algebras, as in  Lemma 3.1 of \cite{HOT83}. We also refer to Appendix of \cite{BOST} for this technical point.

\begin{remark}
\label{rem:ARAKI-UME}
For Araki's quantum relative entropy,  
 we adopt Umegaki's notation $S(\omega \mid \varrho)$ \cite{UME62} as above, 
since this  notation has been  widely  used in  the literature; we refer to
 some reviews \cite{CAS-HUE-LEC, HO-SA, WITT}. However, in previous works \cite{ASEW, AM-LTS} on the LTS condition, which is another key concept in the present paper, Araki's  notation was employed. 
\end{remark}

In this subsection, let $\psi$ denote  an arbitrary  modular (faithful) state of $\Al$. For the quantum spin lattice system $\Al$, 
 the conditional entropy of $\psi$ on $\I\in \Finf$ 
can be  expressed in terms of Araki's quantum relative entropy as
\begin{align}
\label{eq:ScI-relative-form}
\ScI(\psi)=-S(\psi \mid \trI\otimes \psi_{\Ic})+\log n_{\I},
\end{align}
where $\trI\otimes \psi_{\Ic}$ denotes the product  
 of the tracial state $\trI$ on $\AlI$ and 
 the reduced  state of $\psi$  to $\AlIc$, and $n_{\I}$
 is the matrix dimension of the subsystem  $\AlI$; see \cite{ASEW} for details.
Analogously,  for the fermion  lattice system $\Al$,
 Proposition 7 of \cite{AM-LTS} shows that  
the conditional entropy of $\psi$ on $\I\in \Finf$ can be expressed  as
\begin{align}
\label{eq:FER-ScI-relative-form}
\ScI(\psi)=-S(\psi \mid \trI \gotimes \psi_{\Ic})+\log n_{\I},
\end{align}
where $\trI\gotimes \psi_{\Ic}$  denotes the 
 product-state extension  of the tracial state $\trI$ on $\AlI$ and 
 the reduced state of $\psi$ to  $\AlIc$. 
Note that $\psi$ is not necessarily even.

Next, we turn to  the mutual entropy. For the quantum spin system, 
by \eqref{eq:ScIJpsi}, \eqref{eq:MUTUAL-IJ}, and \eqref{eq:MIrelative}, 
the mutual entropy of a modular state $\psi$ can be  rewritten 
in terms of  Araki's quantum relative entropy as 
\begin{align}
\label{eq:MUT-IJ-RELara}
I_\psi(\I:\J)
&=
\lim_{\LamlimJ} \bigl\{ S_{\I}(\psi)+S_{\Lambda }(\psi)
-S_{\I \cup \Lambda}(\psi)\} \nonum\\
&=\lim_{\LamlimJ}  
S(\psi_{\I \cup \Lambda} \mid \psi_{\I}\otimes \psi_{\Lambda})
 \nonum\\
&=S(\psi_{\IuJ} \mid \psi_{\I}\otimes \psi_{\J})<\infty, 
\end{align}
where the convergence follows from the monotonicity of Araki's quantum 
relative entropy \cite{RIMSI76} with respect to inclusion of subsystems 
and the uniform boundedness 
 \eqref{eq:MUT-2von}.
For the fermion system, 
  assuming additionally evenness of  $\psi$, 
we obtain 
\begin{align}
\label{eq:FER-MUT-IJ-RELara}
I_\psi(\I:\J)
=S(\psi_{\I\cup \J} \mid \psi_{\I}\gotimes \psi_{\J})<\infty
\end{align}
by the same reasoning as in  \eqref{eq:MUT-IJ-RELara}.
Note that if $\psi$ is non-even, the product extension  
$\psi_{\I}\gotimes \psi_{\J}$
may not  exist  as noted in \cite{SOME}, and 
 the above expression 
\eqref{eq:FER-MUT-IJ-RELara} does not hold.

Comparing 
\eqref{eq:ScI-relative-form} and  
\eqref{eq:FER-ScI-relative-form}
 with  \eqref{eq:MUT-IJ-RELara} and \eqref{eq:FER-MUT-IJ-RELara}, 
we see that the conditional entropy is a special 
  mutual entropy (up to some additive constants).

\section{Thermal equilibrium}
\label{sec:EQUILIBRIUM-STATES}
There are various 
  characterizations of thermal equilibrium 
 in the $\cstar$-algebraic formulation \cite{BRA2}.
 In this paper,  we  use the  local thermodynamical stability, 
 the Gibbs condition, and the KMS condition. 
  While the well-known KMS condition plays crucial   
 roles in several points  in this paper, we adopt
the local thermodynamical stability (LTS) as our primary notion 
 of thermal equilibrium. 
Throughout this paper, the symbol $\vp$  denotes an arbitrary thermal 
 equilibrium state at positive  temperature.
It is not necessarily a factor state (i.e., pure phase).

\subsection{Local thermodynamical stability}
\label{subsec:LTS}
We  recall the  local thermodynamical stability (LTS) 
condition in a unified manner
  for both quantum spin lattice 
systems \cite{ASEW} and  fermion lattice systems \cite{AM-LTS}.

A potential is a map $\pot:\Finf\to\Alloc$ such that
\begin{equation}
\label{eq:POT-hermite}
\potK^*=\potK\in\AlK,\qquad \K\in\Finf.
\end{equation}
For fermion  lattice systems, 
we assume, in accordance with the locality principle, that 
 every  $\potK$ ($\K\in \Finf$) is  even:  
\begin{equation}
\label{eq:POT-even}
\potK^{\ast}=\potK\in \AlKe.
\end{equation}
Thus, local commutativity  holds  
for both quantum spin and fermion lattice systems: 
\begin{equation}
  [\potK ,\; \potKp]=0  \quad \text{if}\ 
 \K \cap \Kp =\emptyset\ (\K, \Kp\in \Finf).
\end{equation}
Translation invariance is not required for $\pot$.

For each $\I\in \Finf$, the inner local   Hamiltonian  is given as  
\begin{equation}
\label{eq:HI}
\HI:= \sum_{\K:\; \K\subset \I} \potK\in \AlI.
\end{equation}
For each $\I\in \Finf$, 
the surface energy is assumed to exist as an  element of $\Al$ 
\begin{equation}
\label{eq:HIsur}
\HIsur:=
 \sum_{\K:\; \K\cap  \I\ne \emptyset, 
 \K\cap  \Ic\ne \emptyset}
 \potK\in \Al.
\end{equation}
 $\HIsur$ does not necessarily belong to $\Alloc$, as
 its support ${\partial \I}$ may be infinite.
Set
 \begin{equation}
\label{eq:HIopen}
\HIopen:=\HI+\HIsur \in \Al.
\end{equation}

For each $\I\in\Finf$, the local Gibbs state on $\AlI$ 
 at inverse temperature $\beta$ with respect to the potential $\pot$
 is defined by  
\begin{equation}
\label{eq:locGib-I}
\rhopotI(A)
:=\dfrac{1}{{\rm Tr}\left(\exp(-\beta \HI)\right)}{\rm Tr}
\left(\exp(-\beta \HI)A\right), 
\quad A \in \AlI. 
\end{equation}
These local Gibbs states, determined by the \emph{inner} (free-boundary) 
 local Hamiltonians, are decoupled from the outer systems.

For each $\I\in \Finf$, the conditional free energy of a state $\psi$ on $\Al$ 
 is defined  by 
\begin{equation}
\label{eq:FreeI}
\FIpsi:=\psi(\HIopen)-\frac{1}{\beta}\ScI(\psi).
\end{equation}

Using  the conditional free energy, we formulate the notion of  
local thermodynamical stability (LTS)  as follows.
\begin{definition}[LTS]
\label{df:LTS}
 A state $\vp$ of $\Al$  is said to satisfy the   
  local thermodynamical stability (LTS)
 with respect to the potential $\pot$ at inverse temperature $\beta>0$ if,  
for every  $\I\in \Finf$, 
\begin{equation}
\label{eq:LTSineq}
\FIvp\le \FIpsi
\end{equation}
 holds for all states  $\psi$ of $\Al$ satisfying the identity with $\vp$
on the complement subsystem on $\Ic${\rm{:}} 
\begin{equation}
\label{eq:FIamong}
\psiIc=\vpIc.
\end{equation}
\end{definition}

 The LTS condition requires  that  thermal equilibrium states are 
 characterized by  the minimality of the conditional free energy  
for \emph{each} local subsystem. These local subsystems  are 
 embedded in the total system $\Al$ and mutually interconnected. 

We note that the LTS condition itself   
 does not necessitate  a  $\cstar$-dynamics (time evolution) on $\Al$, 
 but it can be derived from  the KMS condition \cite{ASEW}.
Therefore, the LTS condition can be regarded as   
 a broader concept of thermal equilibrium.

\begin{remark}
\label{rem:LTS^solution}
Although the LTS condition is  formulated under  such general potentials, the actual existence of $\vp$ on $\Al$ satisfying the LTS condition  
 has been established only under more restrictive assumptions on $\pot$;
see \cite{SEWonly-LTStoKMSup, AM-LTS}. In this paper, we leave  aside this crucial problem and  implicitly assume the existence of such $\vp$.
\end{remark}


\subsection{Gibbs condition}
\label{subsec:GIBBS}
 We introduce  the Gibbs condition, another characterization  
 of thermal equilibrium for the quasi-local $\cstar$-system $\Al$.
It resembles local Gibbs states given in \eqref{eq:locGib-I}. 
However, it is intended for infinitely extended systems,  
 and its mathematical formulation uses   
 Tomita--Takesaki theory \cite{TAKE2book}.
We  briefly  recall  some necessary tools from Tomita--Takesaki theory.

\begin{definition}[Modular states]
\label{df:MODULAR}
Let $\vp$ be a state on $\Al$, and let  
 $\bigl(\Hilvp,\; \pivp,\; \Omevp  \bigr)$
 be its  GNS representation. 
Let $\vNMvp$ denote the von Neumann algebra 
 generated by this  representation, i.e., the weak closure of   
 $\pi_{\vp}(\Al)$ on $\Hilvp$.
If the GNS vector  $\Omevp$ is separating for   
 $\vNMvp$, then the state $\vp$ is called 
 a modular state. Let $\Delta_\vp$  and $\sigtvp$ $(t\in \R)$
 denote  the modular operator and  the
modular automorphism group, respectively,  
 related by  $\sigtvp={\rm{Ad}}(\Delta_\vp^{it})\in {\rm{Aut}}(\vNMvp)$ ($t\in \R$). The  weak extension of $\vp$  to 
 the  von Neumann algebra $\vNMvp$
 satisfies the KMS condition  with respect to the modular automorphism group
 at  inverse temperature $\beta=-1$, as in  Definition~\ref{df:KMS}.  
\end{definition}

The notions of perturbed dynamics and perturbed states for a modular state $\vp$  \cite{ARAKI-RH73} play crucial roles. For each self-adjoint element 
 $k=k^\ast \in \vNMvp$
the perturbed vector is given by
\begin{equation}
\label{eq:Omevpk}  
\Omevpk 
:=\exp\left\{\frac{1}{2}\left(\log \Delta_\vp+k\right)\right\}\Omevp\in 
V_{\vp}, 
\end{equation}
 where  $V_{\vp}$ denotes the natural positive cone in 
the GNS Hilbert space $\Hilvp$ associated with  the modular state  $\vp$.
Given a self-adjoint element  $h=h^{\ast}\in \Al$,  
 the perturbed positive linear  functional  $\vph$ on $\Al$  
is defined by 
\begin{equation}
\vph(A)\equiv
  \left(\Omevppih,\,\pivp(A) \Omevppih\right) \quad (A \in \Al).
\end{equation}
The  perturbed state on $\Al$ is obtained  by normalization as   
\begin{equation}
\label{eq:pert-state}
[\vph]:= \frac{1}{\vph(\idenA)}\vph.
\end{equation}
The perturbed  modular  automorphism group
$\sigtvph$ ($t\in \R$) is determined by 
 the following infinitesimal equality  
\begin{equation*}
\frac{d}{dt} \left(   
\sigtvph(x)-\sigtvp(x)
\right)_{t=0} = i  
\left[\pivp(h),\, x \right]
\end{equation*}
for every analytic element $x \in \vNMvp$ 
 with respect to $\sigtvp$ ($t\in \R$).
The perturbed state $[\vph]$ has its modular automorphism group $\sigtvph$ ($t\in \R$).

The Gibbs condition associated with $\pot$ relates  
a global state defined on $\Al$
 to the  local Gibbs states given in \eqref{eq:locGib-I}
 as follows.
\begin{definition}[Gibbs condition]
\label{df:Gibbs}
Suppose that a state $\vp$ of $\Al$
 is a modular state.
It satisfies  
the Gibbs condition with respect to $\pot$
 at $\beta$ if for each $\I\in \Finf$,  
 the   perturbed  state 
$[\vpbIsur]$   yields  the local 
Gibbs state $\rhopotI$ on $\AlI$
as given in \eqref{eq:locGib-I} when restricted to the subsystem $\AlI$.
\end{definition}

The Gibbs condition  further implies the product formula of 
the  perturbed states by surface energies.
\begin{pro}[\cite{ARAKI-KMS-VAR}, 
$\S$9.2 \cite{ENRICO}, $\S$7.5 of \cite{RMP-AM}]
\label{prop:GIBBS-product}
Let $\vp$ denote an arbitrary Gibbs  state 
 for  $\pot$ at $\beta$ for the quantum spin lattice system.
Then the  perturbed state by the surface energy  
has the following product formula{\rm{:}} 
\begin{equation}
\label{eq:Product-Gibbs}
[\vpbIsur]= \rhopotI \otimes 
 [\vpbIsur]\!\!\upright_{\Ic}. 
\end{equation}
For the fermion  lattice system,
 assume further that the Gibbs state $\vp$ is even.
Then 
\begin{equation}
\label{eq:FER-Product-Gibbs}
[\vpbIsur]= \rhopotI \gotimes 
 [\vpbIsur]\!\!\upright_{\Ic}.
\end{equation}
\end{pro}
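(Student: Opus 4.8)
The plan is to deduce the product formula from the fact that the perturbed state $[\vpbIsur]$ is a KMS state for the \emph{decoupled} dynamics obtained by deleting the surface energy $\HIsur$ from the generator, and then to invoke the uniqueness of KMS functionals on the finite--dimensional factor $\AlI$. Since $\vp$ is modular (Definition~\ref{df:MODULAR}), it is KMS for its modular automorphism group $\sigtvp$, and by the perturbation theory recalled around \eqref{eq:pert-state} the perturbed state $[\vpbIsur]$ is again KMS, now for the perturbed modular group whose generator is shifted by $i\beta[\HIsur,\,\cdot\,]$. First I would identify this perturbed dynamics explicitly: for $A\in\AlI$ only the terms of the (formal) generator touching $\I$ survive, so that on $\AlI$ the full generator equals $i[\HI+\HIsur,\,\cdot\,]$, and subtracting $i[\HIsur,\,\cdot\,]$ leaves exactly the inner derivation $i[\HI,\,\cdot\,]$. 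Consequently the perturbed dynamics $\altpot^{-\HIsur}$ leaves $\AlI$ globally invariant and acts there as the inner group $\mathrm{Ad}\bigl(e^{it\HI}\bigr)$, whose unique KMS state at inverse temperature $\beta$ is precisely the local Gibbs state $\rhopotI$ of \eqref{eq:locGib-I}.

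Given this decoupling, I would prove factorization by a direct KMS computation. Write $\ome:=[\vpbIsur]$ and fix $B\in\AlIc$. Consider the linear functional $A\mapsto\ome(AB)$ on $\AlI$. Using that $\ome$ is KMS for the decoupled dynamics together with the commutation of $\AlI$ and $\AlIc$, one finds, for all $A_1,A_2\in\AlI$,
\begin{equation*}
\ome\bigl(A_1\,\mathrm{Ad}(e^{-\beta\HI})(A_2)\,B\bigr)=\ome(A_2A_1B),
\end{equation*}
which is exactly the KMS condition at $\beta$ for the functional $A\mapsto\ome(AB)$ with respect to the inner dynamics generated by $\HI$. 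Because $\AlI$ is finite dimensional, every such KMS functional is proportional to $\rhopotI$; evaluating at $A=\idenA$ fixes the constant as $\ome(B)$. Hence $\ome(AB)=\rhopotI(A)\,\ome(B)$ for all $A\in\AlI$, $B\in\AlIc$, which by linearity and density yields $\ome=\rhopotI\otimes\ome\!\!\upright_{\Ic}$, i.e. \eqref{eq:Product-Gibbs}. (The Gibbs condition of Definition~\ref{df:Gibbs} already supplies $\ome\!\!\upright_{\I}=\rhopotI$, consistent with setting $B=\idenA$ above.)

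For the fermion system I would run the identical argument, replacing ordinary commutation by the graded commutation relations \eqref{eq:compact}: since every $\potK$ and hence $\HI$ and $\HIsur$ are even, the perturbed dynamics still decouples and acts on $\AlI$ as the inner group generated by $\HI$. The extra sign factors $\theta$ produced when moving odd elements of $\AlIc$ past odd elements of $\AlI$ are absorbed by the graded tensor product $\gotimes$, and the evenness of $\vp$ (inherited by $\ome$, since perturbation by the even element $\HIsur$ preserves evenness) guarantees both that the graded product-state extension $\rhopotI\gotimes\ome\!\!\upright_{\Ic}$ exists and that the KMS-functional argument remains valid, giving \eqref{eq:FER-Product-Gibbs}.

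The main obstacle I anticipate is the rigorous justification of the decoupling step: showing that $\altpot^{-\HIsur}$ genuinely leaves the finite region $\AlI$ invariant and restricts there to $\mathrm{Ad}\bigl(e^{it\HI}\bigr)$. This is delicate because the unperturbed generator is unbounded and only formally equals $i[\sum_\K\potK,\,\cdot\,]$, while $\HIsur$ may have infinite support $\partial\I$; one must argue at the level of the Dyson (cocycle) expansion of the perturbed flow, checking termwise that the surface contributions cancel and that the resulting flow on $\AlI$ closes within $\AlI$. Once this invariance and the identification of the inner generator are secured, the remainder reduces to the elementary uniqueness of finite-dimensional KMS functionals.
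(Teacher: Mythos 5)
There is a genuine gap, and it is logical rather than technical. The proposition's hypothesis is the Gibbs condition of Definition~\ref{df:Gibbs}: $\vp$ is a modular state (Definition~\ref{df:MODULAR}) and each perturbed state $[\vpbIsur]$ restricts to $\rhopotI$ on $\AlI$. This hypothesis carries \emph{no} information about how the modular group $\sigtvp$ acts on local subalgebras. Your decoupling step --- ``on $\AlI$ the full generator equals $i[\HI+\HIsur,\,\cdot\,]$, and subtracting $i[\HIsur,\,\cdot\,]$ leaves exactly $i[\HI,\,\cdot\,]$'' --- presupposes that $\sigtvp$ is, up to the rescaling $t\mapsto -\beta t$, the flow generated by the potential, which is precisely the identity \eqref{eq:KMSscale}. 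That identity is the content of the KMS condition (Proposition~\ref{prop:KMS-GIBBS}) and is strictly stronger than the Gibbs condition in this paper's logical ordering; it is not available here. Worse, the Gibbs condition does not even presuppose that $\pot$ generates a $\cstar$-dynamics on $\Al$ --- only that the surface energies $\HIsur$ exist as elements of $\Al$ --- so the ``decoupled dynamics'' for which you want $[\vpbIsur]$ to be a KMS state need not exist as anything related to the modular structure of $\vp$. Consequently, the obstacle you flag at the end is misdiagnosed: it is not a matter of justifying a Dyson expansion of an unbounded generator, because under the stated hypothesis there is no generator tied to $\pot$ to expand. What your argument actually establishes is the different implication ``$\vp$ is an $(\altpot,\beta)$-KMS state $\Rightarrow$ product formula.''

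The second half of your proposal is correct and worth keeping: the observation that any linear functional on the full matrix algebra $\AlI$ satisfying the $\beta$-KMS identity for the inner flow $\mathrm{Ad}\bigl(e^{it\HI}\bigr)$ must be a scalar multiple of $\rhopotI$, whence $\ome(AB)=\rhopotI(A)\,\ome(B)$ for $A\in\AlI$, $B\in\AlIc$, is exactly the classical mechanism by which Araki derives the product-form Gibbs property from the KMS condition; it is also the same mechanism (uniqueness of KMS states for a decoupled dynamics) that this paper uses to prove Proposition~\ref{prop:AG-PROD-LR}, and your graded modifications for the fermion case are in the spirit of Proposition~\ref{prop:EXT-KMS}. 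But note that the paper gives no proof of Proposition~\ref{prop:GIBBS-product}; it imports it from \cite{ARAKI-KMS-VAR}, \cite{ENRICO}, \cite{RMP-AM}. To prove it as stated, you must either strengthen the hypothesis to ``$\pot$ generates a $\cstar$-dynamics $\altpot$ and $\vp$ is $(\altpot,\beta)$-KMS'' (then your route works, but proves a narrower statement), or argue entirely within Araki's perturbation theory of modular states --- relative Hamiltonians, the chain rule $[\ome^{h+k}]=[[\ome^{h}]^{k}]$, and characterizations of product states --- as in the cited sources, without ever invoking a potential-generated flow on $\Al$.
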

Note that Gibbs states are not necessarily pure phases (factor states). 
The known relationship between the LTS condition (Definition~\ref{df:LTS}) 
and the Gibbs condition (Definition~\ref{df:Gibbs}) is as follows. 
\begin{pro}[\cite{ASEW, AM-LTS}]
\label{prop:GIBBS-LTS}
If a  state $\vp$ of the quantum spin lattice system
satisfies the Gibbs condition, then it satisfies the LTS condition.
If an even state $\vp$ of 
the  fermion lattice system satisfies the Gibbs condition, then it satisfies the LTS condition.
\end{pro}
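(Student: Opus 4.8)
The plan is to reduce the local thermodynamical stability inequality \eqref{eq:LTSineq} to the positivity of Araki's relative entropy. The target is the sharp identity
\begin{equation*}
\FIpsi - \FIvp = \frac{1}{\beta} S(\psi \mid \vp) \ge 0 ,
\end{equation*}
valid for every state $\psi$ obeying the boundary constraint \eqref{eq:FIamong}, i.e.\ $\psiIc = \vpIc$. Because Araki's relative entropy is non-negative and vanishes only when the two states agree, this would establish LTS and, as a bonus, identify $\vp$ as the unique minimizer of the conditional free energy on each $\I \in \Finf$. The conceptual difficulty is that the surface energy $\HIsur$ couples $\I$ to its complement and thus obstructs a naive variational argument; the whole point of the computation is that it cancels exactly.

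First I would rewrite the conditional free energy $\FIpsi = \psi(\HIopen) - \frac{1}{\beta}\ScI(\psi)$ using the relative-entropy representation \eqref{eq:ScI-relative-form} of the conditional entropy and then change the $\I$-marginal of the reference from the tracial state $\trI$ to the local Gibbs state $\rhopotI$; this absorbs the inner energy $\psi(\HI)$ and yields $\FIpsi = \psi(\HIsur) + \frac{1}{\beta} S(\psi \mid \rhopotI \otimes \psiIc) + \text{const}$, with a constant depending only on $\I$. Next, invoking the Gibbs condition through the product formula of Proposition~\ref{prop:GIBBS-product}, the perturbed state $\sigma := [\vpbIsur]$ factorizes as $\sigma = \rhopotI \otimes \sigma\upright_{\Ic}$; since $\psi$ and $\vp$ share the common restriction $\psiIc = \vpIc$, swapping the reference $\rhopotI \otimes \psiIc$ for $\sigma$ changes $\FI$ only by a $\psi$-independent constant, giving $\FIpsi - \FIvp = [\psi(\HIsur) - \vp(\HIsur)] + \frac{1}{\beta}[S(\psi \mid \sigma) - S(\vp \mid \sigma)]$. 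Finally, Araki's perturbation formula for relative entropy \cite{ARAKI-RH73, RIMSI76}, $S(\psi \mid \sigma) = S(\psi \mid \vp) - \beta\psi(\HIsur) + \log\vpbIsur(\idenA)$ together with the analogous identity for $\vp$, makes the common normalization constant drop out and the surface-energy terms cancel the explicit $\psi(\HIsur) - \vp(\HIsur)$, leaving precisely $\frac{1}{\beta} S(\psi \mid \vp)$.

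For the fermion lattice system I would run the same argument with the even-state representation \eqref{eq:FER-ScI-relative-form} and the graded product formula \eqref{eq:FER-Product-Gibbs}, the ordinary tensor product being replaced by the graded product $\gotimes$; here the evenness of $\vp$ is indispensable, since for non-even states the product-state extension underpinning the reference-change step need not exist. The main obstacle I anticipate is not the algebra of cancellations but the rigorous justification in the modular (von Neumann algebra) setting: the intermediate relative entropies $S(\psi \mid \sigma)$ and $S(\psi \mid \rhopotI \otimes \psiIc)$ may be infinite, so the subtractions must be controlled. I would circumvent this by first proving the identity for competitors $\psi$ of finite relative entropy against $\vp$---working at finite $\Lambda \Subset \Ic$ through the approximation \eqref{eq:ScIJpsi} and using monotonicity of relative entropy under restriction---and then extending, while verifying that Araki's perturbation formula applies with the genuinely bounded perturbation $\beta\HIsur \in \Al$ and that the representation \eqref{eq:ScI-relative-form} and the factorization survive for the possibly non-faithful competitor states $\psi$ in the variational class. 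Managing these finiteness and limiting issues is where the real work lies.
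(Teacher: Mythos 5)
Your proposal is correct and is essentially the argument of the references \cite{ASEW, AM-LTS} that the paper itself cites for this proposition (the paper gives no independent proof): rewriting the conditional free energy via the relative-entropy representation of $\ScI$, shifting the reference state from $\trI$ to $\rhopotI$, invoking the Gibbs factorization $[\vpbIsur]=\rhopotI\otimes[\vpbIsur]\!\!\upright_{\Ic}$ (resp.\ $\gotimes$), and applying Araki's perturbation formula to obtain $\beta\bigl(\FIpsi-\FIvp\bigr)=S(\psi\mid\vp)\ge 0$ is exactly the Araki--Sewell/Araki--Moriya route, and you correctly flag the genuine technical points (faithfulness of competitors, finiteness of intermediate relative entropies, existence of the graded reference states) that those papers handle.
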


\begin{remark}
\label{rem:KMS-Progyaku}
The converse implication of Proposition~\ref{prop:GIBBS-LTS} is  as follows.
If the potential 
 generates a  $\cstar$-dynamics on $\Al$, 
then the LTS condition implies  the KMS condition, 
which further yields  the Gibbs condition \cite{SEWonly-LTStoKMSup}.
 See Proposition~\ref{prop:KMS-GIBBS}.
\end{remark}

\begin{remark}
\label{rem:evenness}
If the state $\vp$ of  the fermion lattice system 
 satisfies both the LTS condition (Definition~\ref{df:LTS}) and the Gibbs condition (Definition~\ref{df:Gibbs}), then the evenness of $\vp$  follows, as shown in \cite{MORIYAgrading}.  
We conjecture that the evenness of $\vp$ can be derived from either of them  alone. (Note that the LTS condition in  Definition~\ref{df:LTS} corresponds to LTS-P, not LTS-M in \cite{AM-LTS}.)
\end{remark}

\subsection{KMS condition}
\label{subsec:KMS}
As we have noted before, the KMS condition 
is  not required  for our  thermal area law 
 which will be established in Section~\ref{sec:TAL}.  
Nonetheless,  we  will later use  certain properties of  
 the KMS condition in  Sections~\ref{sec:ONE}, \ref{sec:ONE-PROOF}. 
In fact, it is possible, and may even be natural, to start from  the KMS condition, since the KMS condition stands at the top of the hierarchy of thermal equilibrium conditions in quantum systems, implying other known conditions 
including the LTS condition and the Gibbs condition given in previous subsections, see \cite{BRA2}, \cite{RMP-AM}. 

We shall recall 
 the KMS condition in the present setting  of  quantum lattice systems.
Let $\delta_\pot$ denote the  derivation on $\Alloc$
associated with the potential $\pot$, defined  for every $\I\in \Finf$,   
\begin{equation}
\label{eq:delpot}
\delta_\pot(A):= i[\HIopen,\,A] \quad (A \in \AlI).
\end{equation}
Assume that
 $\delta_\pot$ generates 
 a $\cstar$-dynamics associated with $\pot$, that is, there 
 exists  a strongly continuous one-parameter group of $*$-automorphisms
  $\altpot:=\exp(it \delta_\pot)$ ($t\in\R$) of $\Al$.
\begin{definition}[KMS condition \cite{HHW, BRA2}]
\label{df:KMS}
A state $\vp$ of $\Al$ is called  an $(\altpot,\,\beta)$-KMS  state 
  if, for every $A, B \in \Al$,  there exists a complex-valued 
 function $F_{A, B}(z)$ of $z\in \C$  
  such that $F_{A, B}(z)$ is continuous and bounded on the closed strip
  $0 \le \operatorname{Im} z \le \beta$, holomorphic
 on its interior, and satisfies  
\begin{equation}
\label{eq:KMS}
F_{A, B}(t)=\vp \bigl(A \altpot(B) \bigr),\quad   
F_{A, B}(t+i \beta)=\vp \bigl(\altpot (B)A \bigr) \quad (t\in \R).
\end{equation}
\end{definition}

The following  result was mentioned  in Remark~\ref{rem:KMS-Progyaku}.
\begin{pro}[Theorem 9.1 in \cite{ENRICO}]
\label{prop:KMS-GIBBS}
Every  $(\altpot,\,\beta)$-KMS state 
$\vp$ is a modular state and  satisfies  
\begin{equation} 
\label{eq:KMSscale}
\sigtvp \bigl(\pivp(A)\bigr)
=\pivp\bigl(\almbtpot(A) \bigr),\quad  A \in \Al,
\end{equation} 
where $\sigtvp$ ($t\in \R$) denotes  the modular 
 automorphism group with respect to $\vp$
 in Definition~\ref{df:MODULAR}.
Moreover, $\vp$ satisfies  the  Gibbs condition with respect to $\pot$ at $\beta$ in Definition~\ref{df:Gibbs}.
\end{pro}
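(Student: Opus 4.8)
The plan is to obtain all three assertions from the standard correspondence between KMS states and Tomita--Takesaki modular theory, supplemented by the perturbation theory recalled above. First I would treat modularity and the scaling relation \eqref{eq:KMSscale} together. Given a $(\altpot,\beta)$-KMS state $\vp$, I would pass to its GNS triple $(\Hilvp,\pivp,\Omevp)$ and invoke the structure theorem for KMS states (see \cite{BRA2}): the vector $\Omevp$ is cyclic and separating for $\vNMvp$, so $\vp$ is modular as in Definition~\ref{df:MODULAR}; the dynamics $\altpot$ extends weakly to a one-parameter automorphism group of $\vNMvp$ intertwined by $\pivp$; and the weak extension of $\vp$ is KMS for this extended dynamics. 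The elementary rescaling identity --- that a $(\altpot,\beta)$-KMS functional is $(\alpha_{\pot,ct},\beta/c)$-KMS for every real $c\neq 0$, seen by the substitution $t\mapsto ct$ in the analytic strip of the KMS function --- then forces the modular group, characterized by the $\beta=-1$ KMS property, to coincide with $\altpot$ run at rescaled time $c=-\beta$. Transporting through $\pivp$ gives exactly $\sigtvp(\pivp(A))=\pivp(\almbtpot(A))$.

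Next I would establish the Gibbs condition. Fix $\I\in\Finf$ and perturb $\vp$ by the self-adjoint surface energy $h:=\beta\HIsur\in\Al$, which exists as an element of $\Al$ by \eqref{eq:HIsur}. By the infinitesimal characterization of the perturbed modular group recalled above, the generator of $\sigtvph$ differs from that of $\sigtvp$ by $i[\pivp(h),\,\cdot\,]$. On $\pivp(\AlI)$ the scaling relation from the first step makes the generator of $\sigtvp$ act as $-\beta i[\pivp(\HIopen),\,\cdot\,]$; adding $i[\beta\pivp(\HIsur),\,\cdot\,]$ and using $\HIopen=\HI+\HIsur$, the surface term cancels, leaving $-\beta i[\pivp(\HI),\,\cdot\,]$. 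Since $\HI\in\AlI$, this is an inner derivation of the finite-dimensional algebra $\pivp(\AlI)$, so $\sigtvph$ restricts there to the inner flow $\mathrm{Ad}\bigl(e^{-i\beta t\pivp(\HI)}\bigr)$.

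Finally I would identify the restricted state. The perturbed functional $[\vpbIsur]$ is again modular, hence KMS for its modular group $\sigtvph$ at $\beta=-1$; descending this KMS condition to the invariant subalgebra $\pivp(\AlI)$ yields a $\beta=-1$ KMS state for the inner flow $\mathrm{Ad}\bigl(e^{-i\beta t\HI}\bigr)$ on the finite-dimensional matrix algebra $\AlI$. On a matrix algebra such a KMS state is unique and equals the Gibbs state proportional to $e^{-\beta\HI}$, namely $\rhopotI$ of \eqref{eq:locGib-I}, which is precisely the content of Definition~\ref{df:Gibbs}. The step I expect to be the main obstacle is this descent: one must verify that $\sigtvph$ genuinely leaves $\pivp(\AlI)$ globally invariant, so that the global KMS condition restricts to the subalgebra, and that the perturbation by the a priori nonlocal $\HIsur$ is controlled well enough for the cancellation $\HIopen-\HIsur=\HI$ to be valid at the level of generators rather than merely formally. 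Here the finiteness of $\I$ and the standing assumption that $\HIsur\in\Al$ are essential.
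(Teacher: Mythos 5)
The paper offers no proof of this proposition at all: it is imported wholesale from Araki's Varenna lectures (Theorem~9.1 of \cite{ENRICO}), so the only comparison available is with the standard argument, which your proposal reconstructs essentially correctly. Your first step is exactly right: the rescaling observation that a $(\altpot,\beta)$-KMS state is $(\almbtpot,-1)$-KMS, combined with the extension of the KMS condition to the weak closure $\vNMvp$ and Takesaki's theorem characterizing the modular group as the unique $\sigma$-weakly continuous flow satisfying KMS at $\beta=-1$, yields both modularity and \eqref{eq:KMSscale}. Your sign bookkeeping in the Gibbs part also checks out: the flow $\mathrm{Ad}\bigl(e^{-i\beta t\HI}\bigr)$ at inverse temperature $-1$ does single out the state proportional to $\Tr\bigl(e^{-\beta\HI}\,\cdot\,\bigr)$, i.e.\ $\rhopotI$.

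The weak point is the one you flag yourself, and as written it is a genuine gap. Your route runs through generators of modular groups, but the infinitesimal formula recalled in Subsection~\ref{subsec:GIBBS} is stated only for $\sigtvp$-analytic elements, and for a potential whose derivation merely generates a $\cstar$-dynamics the elements of $\AlI$ need not be $\altpot$-analytic; moreover, global invariance of $\pivp(\AlI)$ under $\sigtvph$ cannot be read off from a derivative at $t=0$ alone. Both issues disappear if you descend to the $\cstar$-level correspondence the paper records at the end of Subsection~\ref{subsec:KMS}: since $[\vpmbh]$ is $(\altpoth,\beta)$-KMS whenever $\vp$ is $(\altpot,\beta)$-KMS, the choice $h=-\HIsur$ shows that $[\vpbIsur]$ is a $\beta$-KMS state for the dynamics $\altpot^{-\HIsur}$ with generator $\delta_\pot-i[\HIsur,\,\cdot\,]$ on $\Alloc$, i.e.\ the dynamics of the decoupled potential in which every bond crossing the boundary of $\I$ is deleted. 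That dynamics manifestly leaves $\AlI$ globally invariant and acts on it as $\mathrm{Ad}\bigl(e^{it\HI}\bigr)$ (by uniqueness of solutions of the finite-dimensional evolution equation), so restricting the KMS condition to $\AlI$ and invoking the uniqueness of KMS states on a matrix algebra shows that the restriction of $[\vpbIsur]$ to $\AlI$ equals $\rhopotI$, which is precisely Definition~\ref{df:Gibbs}. Alternatively, you may keep your modular-level argument and justify the invariance via Araki's expansional (cocycle) formula for $\sigtvph$; it reduces to the same computation. With either repair your proof is complete.
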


Take any $h=h^{\ast} \in \Al$.   
The  perturbation  of the $\cstar$-dynamics 
 $\altpot$ ($t\in\R$) by this self-adjoint element 
 is given by  the  $\cstar$-dynamics
  $\altpoth$ ($t\in\R$)  with  its generator  
\begin{equation}
\label{eq:INN0}
\delta_\pot^h(A)\equiv\delta_\pot(A)+i[h,\,A] \quad (A \in \Alloc).
\end{equation}
A state $\vp$ satisfies the $(\altpot,\,\beta)$-KMS condition 
if and only if the perturbed state  $[\vpmbh]$ satisfies the $(\altpoth,\,\beta)$-KMS condition. This establishes  a one-to-one correspondence
 between the set of $(\altpot,\,\beta)$-KMS states 
 and  the set  of $(\altpoth,\,\beta)$-KMS states.

\section{Thermal area law}
\label{sec:TAL}
In this section, we present the main result of this paper, the
thermal area law for quantum spin lattice systems and fermion lattice systems. 
As in Section~\ref{sec:EQUILIBRIUM-STATES}, 
let $\vp$ denote an arbitrary thermal equilibrium state, characterized by the LTS condition at inverse temperature $\beta$.

\subsection{$\cstar$-algebraic thermal area law and its proof}
\label{subsec:TAL-STATE}
In this subsection, we present 
the thermal area law in the $\cstar$-algebraic formulation 
 for both quantum spin lattice systems and fermion lattice  systems. 
For the notion of van Hove limit,  which is 
a  rigorous  formulation of the thermodynamic limit, 
we refer to  Section 6.2.4 of \cite{BRA2}.

\begin{thm}
\label{thm:MAIN}[Thermal area law for quantum spin lattice systems]
Consider the  quantum spin lattice system $\Al$.
Suppose that a state  $\vp$ of $\Al$ satisfies 
 the local thermodynamical stability (LTS)
  with respect to the potential $\pot$ at inverse temperature $\beta>0$.
Let $\rmA$ be an arbitrary finite region. For any (finite or infinite) 
region $\rmB$ outside $\rmA$, 
the following inequality for the mutual entropy holds{\rm{:}}
\begin{equation}
\label{eq:THERMALmain}
I_{\vp}(\rmA:\rmB)\le I_{\vp}(\rmA:\rmAc)
\le  
 \beta(\vp_{\rmA}\otimes
\vp_{\rmAc}-\vp)\left(H_{\partial \rmA}\right) \le 
2 \beta \lVert H_{\partial \rmA} \rVert.
\end{equation}
If the  surface energies per volume vanish in the van Hove limit as
\begin{align}
\label{eq:vHsurface}
{\rm{v.H.}}\lim_{\rmA \nearrow  \Gamma} 
\frac{\lVert H_{\partial \rmA} \rVert }{|\rmA|}
=0, 
\end{align}
then 
\begin{align}
\label{eq:vHtozero}
{\rm{v.H.}}\lim_{\rmA \nearrow  \Gamma} 
\frac{I_{\vp}(\rmA : \rmAc )}{|\rmA|}=0. 
\end{align}
\end{thm}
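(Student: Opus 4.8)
The plan is to replay the finite-volume free-energy argument of Subsection~\ref{subsec:introFINITE}, with the global Gibbs-state minimization replaced by the LTS variational inequality \eqref{eq:LTSineq} applied on the single finite region $\rmA$. Two of the three inequalities in \eqref{eq:THERMALmain} are immediate: the first, $I_{\vp}(\rmA:\rmB)\le I_{\vp}(\rmA:\rmAc)$, is monotonicity of the mutual entropy under enlargement of the outside region \eqref{eq:MUTUAL-MONOTONE}, since $\rmB\subset\rmAc$; the last holds because $\vp_{\rmA}\otimes\vp_{\rmAc}-\vp$ is a difference of two states, hence a functional of norm at most $2$, so its value on $\HAsur$ is at most $2\lVert\HAsur\rVert$. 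All the content lies in the middle inequality.

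For that middle inequality I would set $\I=\rmA$ in the LTS condition and feed in the trial state $\psi:=\vp_{\rmA}\otimes\vp_{\rmAc}$, the product across the $\rmA\mid\rmAc$ cut of the reduced states of $\vp$; this is a bona fide state on $\Al=\AlA\otimes\AlAc$ precisely because the spin system carries the tensor-product structure \eqref{eq:TENSOR-IJ} (this is where the restriction to spin systems enters). Since $\psi$ agrees with $\vp$ off $\rmA$, i.e. $\psi_{\rmAc}=\vp_{\rmAc}$, it is admissible in \eqref{eq:LTSineq}, which gives $\FAvp\le\FApsi$. Expanding both conditional free energies via $\HAopen=\HA+\HAsur$ and \eqref{eq:FreeI}, the inner-Hamiltonian contributions cancel because $\HA\in\AlA$ and $\psi_{\rmA}=\vp_{\rmA}$ force $\psi(\HA)=\vp(\HA)$. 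For the entropy terms, additivity of the von Neumann entropy on the product $\psi$ yields $\ScA(\psi)=S_{\rmA}(\vp)$, while $\vp$ contributes $\ScA(\vp)$; rearranging leaves $\tfrac1\beta\bigl(S_{\rmA}(\vp)-\ScA(\vp)\bigr)\le(\psi-\vp)(\HAsur)$. Recognizing $S_{\rmA}(\vp)-\ScA(\vp)=I_{\vp}(\rmA:\rmAc)$ by \eqref{eq:MUTUAL-IandIc} and $\HAsur=H_{\partial\rmA}$ produces exactly $I_{\vp}(\rmA:\rmAc)\le\beta(\vp_{\rmA}\otimes\vp_{\rmAc}-\vp)(H_{\partial\rmA})$.

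The van Hove statement then follows by a sandwich: combining nonnegativity \eqref{eq:MUTUAL-IJ-POSITIVE} with \eqref{eq:THERMALmain} gives $0\le I_{\vp}(\rmA:\rmAc)\le 2\beta\lVert H_{\partial\rmA}\rVert$, so dividing by $|\rmA|$ bounds $I_{\vp}(\rmA:\rmAc)/|\rmA|$ between $0$ and $2\beta\lVert H_{\partial\rmA}\rVert/|\rmA|$, and hypothesis \eqref{eq:vHsurface} drives the upper bound to $0$ along the van Hove net, giving \eqref{eq:vHtozero}.

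I expect the only delicate step to be the evaluation $\ScA(\psi)=S_{\rmA}(\vp)$ of the conditional entropy of the product trial state over the \emph{infinite} complement $\rmAc$: one must run the infimum/limit definition \eqref{eq:ScIJpsi}, check that for every finite $\Lambda\subset\rmAc$ one has $S_{\rmA\cup\Lambda}(\psi)-S_{\Lambda}(\psi)=S_{\rmA}(\vp)$ by product-state additivity, and confirm that all three scalars $\ScA(\vp)$, $\vp(\HAsur)$, $\psi(\HAsur)$ are finite so that the rearrangement is legitimate — finiteness of the surface terms being guaranteed by the standing assumption $\HAsur\in\Al$, and of the entropies by $|\rmA|<\infty$ together with \eqref{eq:ineq-BOTH}. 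Conceptually, the one point requiring insight is the choice of the decoupling product trial state, which is exactly what turns LTS on the single region $\rmA$ into the analogue of the finite-volume free-energy inequality \eqref{eq:FEminiplug}, with the bulk Hamiltonian $\HA$ dropping out and only the surface energy surviving.
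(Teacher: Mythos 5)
Your proposal is correct and follows essentially the same route as the paper's own proof: plugging the product trial state $\vp_{\rmA}\otimes\vp_{\rmAc}$ into the LTS inequality on the region $\rmA$, cancelling the inner Hamiltonian $\HA$, using product-state additivity to get $\ScA(\vp_{\rmA}\otimes\vp_{\rmAc})=S_{\rmA}(\vp)$, and identifying $S_{\rmA}(\vp)-\ScA(\vp)=I_{\vp}(\rmA:\rmAc)$, with the outer inequalities handled by monotonicity \eqref{eq:MUTUAL-MONOTONE} and the norm bound $\lVert\vp_{\rmA}\otimes\vp_{\rmAc}-\vp\rVert\le 2$. Your extra care in running the infimum definition \eqref{eq:ScIJpsi} over finite $\Lambda\subset\rmAc$ is a sound elaboration of what the paper summarizes as entropy additivity for product states.
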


\begin{proof}
For  $\psi$ satisfying  the condition \eqref{eq:FIamong} in Definition~\ref{df:LTS} of LTS, we  now  take
the  product state made by 
the reduced states of $\vp$ to $\rmA$ and the complement $\rmAc$: 
\begin{equation}
\label{eq:trial-product}
\vp_{\rmA}\otimes \vp_{\rmAc}.
\end{equation}
 Then by plugging this product state into the inequality \eqref{eq:LTSineq} of the LTS condition, we obtain 
\begin{equation}
\label{eq:TOKU}
\FA(\vp) \le \FA(\vp_{\rmA}\otimes \vp_{\rmAc}).
\end{equation}
By recalling the formula of the conditional free energy \eqref{eq:FreeI},  
 the inequality \eqref{eq:TOKU} yields  
\begin{align}
\label{eq:ent-ene-1}
\ScA(\vp_{\rmA}\otimes \vp_{\rmAc})
-\ScA(\vp)
\le 
\beta\bigl(\vp_{\rmA}\otimes \vp_{\rmAc}
-\vp
\bigr)(\HAopen).
\end{align}
We consider the entropy term in the left-hand side of  \eqref{eq:ent-ene-1}.  
By the  additivity of von Neumann entropy for  product states, 
 we have 
\begin{align}
\label{eq:entc-PRODUCT}
\ScA(\vp_{\rmA}\otimes \vp_{\rmAc})
=S_{\rmA}(\vp).
\end{align}
Thus, the left-hand side of \eqref{eq:ent-ene-1} is equal to
$S_{\rmA}(\vp)
-\ScA(\vp)=I_{\vp}(\rmA:\rmAc)$ by \eqref{eq:MUTUAL-IandIc}.
Next,  we  consider the energy term in the right-hand side
of \eqref{eq:ent-ene-1}. 
\begin{align}
\label{eq:energy-openest}
\bigl(\vp_{\rmA}\otimes \vp_{\rmAc}
-\vp
\bigr)(\HAopen)&=
\bigl(\vp_{\rmA}\otimes \vp_{\rmAc}
-\vp
\bigr)(\HA+\HAsur)\nonum\\
&=
\bigl(\vp_{\rmA}\otimes \vp_{\rmAc}
-\vp
\bigr)(\HA)+
\bigl(\vp_{\rmA}\otimes \vp_{\rmAc}
-\vp
\bigr)(\HAsur)
\nonum\\
&=
0+\bigl(\vp_{\rmA}\otimes \vp_{\rmAc}
-\vp
\bigr)(\HAsur).
\end{align}
Thus,  \eqref{eq:ent-ene-1}
yields
 \begin{equation}
\label{eq:main-mae}
I_{\vp}(\rmA:\rmAc)\le  
\beta\left(\vp_{\rmA}\otimes\vp_{\rmAc}-\vp\right)\left(H_{\partial \rmA}\right). 
\end{equation} 
Using this together with  
the inequality  $I_{\vp}(\rmA:\rmB)\le 
I_{\vp}(\rmA:\rmAc)$ and 
the obvious inequality $\lVert  \vp_{\rmA}\otimes\vp_{\rmAc}-\vp \rVert\le 2$,   we obtain \eqref{eq:THERMALmain}.

If \eqref{eq:vHsurface} is satisfied, then 
the inequality \eqref{eq:THERMALmain}
 shown above implies
\eqref{eq:vHtozero}.
\end{proof}

\begin{remark}
\label{rem:vanenter}
Theorem~\ref{thm:MAIN} is  
analogous to  the main result in \cite{vanENTER}, 
which establishes the equivalence between the mean von Neumann entropy and the mean  conditional entropy for translation-invariant thermal equilibrium states. 
Theorem~\ref{thm:MAIN} instead emphasizes the state correlations captured by the mutual entropy.
\end{remark}

We derive a similar statement 
to Theorem~\ref{thm:MAIN} for the  fermion lattice system 
with  some modifications. 
\begin{thm}[Thermal area law for fermion lattice systems]
\label{thm:FER-MAIN}
Suppose that  a state  $\vp$ of the  fermion lattice system $\Al$ 
 satisfies the   local thermodynamical stability (LTS) 
with respect to the potential $\pot$ at inverse temperature $\beta>0$.
Assume further that $\vp$ is an even state. Let $\rmA$ be an arbitrary finite region. For any (finite or infinite) 
region $\rmB$ outside $\rmA$,  the following estimate holds{\rm{:}}
\begin{equation}
\label{eq:FER-THERMALmain}
I_{\vp}(\rmA:\rmB)\le I_{\vp}(\rmA:\rmAc)
\le  
 \beta(\vp_{\rmA}\gotimes
\vp_{\rmAc}-\vp)\left(H_{\partial \rmA}\right) \le 
2 \beta \lVert H_{\partial \rmA} \rVert.
\end{equation}
\end{thm}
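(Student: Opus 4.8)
The plan is to mirror the proof of Theorem~\ref{thm:MAIN} for the spin case, substituting the ordinary tensor product with the CAR-graded product $\gotimes$, and paying close attention to the points where the evenness hypothesis on $\vp$ is indispensable. First I would verify that the graded product state $\vp_{\rmA}\gotimes\vp_{\rmAc}$ is a well-defined state on $\Al$. Because $\vp$ is even, its restriction $\vp_{\rmAc}$ to $\AlAc$ is again even, so the product-state extension across the disjoint pair $\rmA$ and $\rmAc$ exists, as recalled in the discussion preceding \eqref{eq:FER-MUT-IJ-RELara}. By construction this extension restricts to $\vp_{\rmAc}$ on the complement subsystem, so it satisfies the admissibility condition \eqref{eq:FIamong} and may legitimately be used as the trial state $\psi$ in the LTS inequality \eqref{eq:LTSineq} of Definition~\ref{df:LTS}.

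Next I would plug $\psi=\vp_{\rmA}\gotimes\vp_{\rmAc}$ into \eqref{eq:LTSineq}, obtaining $\FA(\vp)\le\FA(\vp_{\rmA}\gotimes\vp_{\rmAc})$, and expand both sides via the definition \eqref{eq:FreeI} of the conditional free energy. The entropy contribution requires the identity $\ScA(\vp_{\rmA}\gotimes\vp_{\rmAc})=S_{\rmA}(\vp)$: for the graded product of even states the von Neumann entropy is additive over the factors, so the conditional entropy collapses termwise to the local von Neumann entropy, which in turn equals $S_{\rmA}(\vp)$ since the graded product restricts to $\vp_{\rmA}$ on $\AlA$. This additivity for even CAR product states is precisely the fermionic input supplied by \cite{SSA}. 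For the energy contribution, since $\HA\in\AlA$ and both $\vp$ and $\vp_{\rmA}\gotimes\vp_{\rmAc}$ restrict to $\vp_{\rmA}$ there, the inner term cancels, leaving only the surface term $(\vp_{\rmA}\gotimes\vp_{\rmAc}-\vp)(\HAsur)$. Recognizing the resulting left-hand side as the mutual entropy through \eqref{eq:MUTUAL-IandIc} then yields $I_{\vp}(\rmA:\rmAc)\le\beta(\vp_{\rmA}\gotimes\vp_{\rmAc}-\vp)(H_{\partial\rmA})$.

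Finally I would bound the surface term by $\lVert\vp_{\rmA}\gotimes\vp_{\rmAc}-\vp\rVert\,\lVert H_{\partial\rmA}\rVert\le 2\lVert H_{\partial\rmA}\rVert$, using that the difference of two states has norm at most $2$, and combine this with the monotonicity estimate $I_{\vp}(\rmA:\rmB)\le I_{\vp}(\rmA:\rmAc)$ from \eqref{eq:MUTUAL-MONOTONE} to conclude \eqref{eq:FER-THERMALmain}. I expect the only genuine obstacle, relative to the spin case, to be the careful handling of the graded product: establishing that $\vp_{\rmA}\gotimes\vp_{\rmAc}$ exists as a state, that it restricts correctly to each of the two factors, and that the von Neumann entropy is additive across the graded tensor decomposition. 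All three hold precisely because $\vp$ is even, which is why the evenness hypothesis cannot be dropped, consistent with the counterexamples for non-even states noted after \eqref{eq:MUT-2von} and \eqref{eq:FER-MUT-IJ-RELara}.
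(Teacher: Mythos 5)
Your proposal is correct and follows essentially the same route as the paper's own proof: plugging the graded product state $\vp_{\rmA}\gotimes\vp_{\rmAc}$ (well defined by evenness, and admissible by \eqref{eq:FIamong}) into the LTS inequality, using additivity of von Neumann entropy for fermionic product states to get $\ScA(\vp_{\rmA}\gotimes\vp_{\rmAc})=S_{\rmA}(\vp)$, cancelling the inner Hamiltonian term, and finishing with the norm bound and the monotonicity \eqref{eq:MUTUAL-MONOTONE}. The only quibble is bibliographic: the entropy additivity for fermionic product states is the content cited to \cite{MARKOV} in the paper, not to \cite{SSA}, which concerns strong subadditivity.
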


\begin{proof}
As in \eqref{eq:trial-product}, we take the product-state extension of 
the reduced states of $\vp$ to the finite region $\rmA$ and its  complement region $\rmAc$ following \cite{AM2003EXT}
\begin{equation}
\label{eq:trailai-gpro}
\vp_{\rmA}\gotimes \vp_{\rmAc}.
\end{equation}
Then by plugging this even product state  
 into the inequality \eqref{eq:LTSineq} of the LTS condition, we
 obtain an  analogous  estimate to that in \eqref{eq:ent-ene-1}  
 replacing $\otimes$ by $\gotimes$. Since any product state of the fermion system implies the additivity of von Neumann entropy,  
(in fact, the converse  also holds 
\cite{MARKOV}), we have   
\begin{align}
\label{eq:g-entc-PRODUCT}
\ScA(\vp_{\rmA}\gotimes \vp_{\rmAc})=S_{\rmA}(\vp).
\end{align}
A similar derivation as in  \eqref{eq:energy-openest}
 holds for the fermion lattice system due to the evenness of the states 
and the local Hamiltonians.
Thus  we obtain  an analogous inequality to that of
 \eqref{eq:main-mae}
 which immediately implies the  asserted estimate \eqref{eq:FER-THERMALmain}
 for the fermion lattice system.
\end{proof}

A common expression of the thermal area law as in \eqref{eq:HpartialA}  
 can be derived straightforwardly in the $\cstar$-algebraic setting  as follows.\begin{cor}
\label{cor:GEOM-thermal}
Consider any state $\vp$ satisfying the LTS condition 
as in  Theorem~\ref{thm:MAIN} for the quantum spin lattice system, 
 or any even state satisfying the LTS condition
as in Theorem~\ref{thm:FER-MAIN} for the fermion lattice system.
Suppose that there exists  some constant 
 $c_{\pot}>0$ such that the estimate
\begin{align}
\label{eq:BOUNDED-Area}
\lVert H_{\partial \rmA} \rVert\le c_{\pot} |\partial \rmA|
\end{align}
 holds. Then, 
for any $\rmB$ outside $\rmA$, 
\begin{align}
\label{eq:areaCONST}
I_{\vp}(\rmA : \rmB) \le c_{\pot}|\partial \rmA|.
\end{align}
\end{cor}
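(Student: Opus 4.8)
The plan is to obtain Corollary~\ref{cor:GEOM-thermal} as an immediate specialization of the thermal area laws already established in Theorem~\ref{thm:MAIN} and Theorem~\ref{thm:FER-MAIN}, so that essentially no new analytic work is required. First I would invoke whichever of the two theorems matches the system at hand: for a state $\vp$ of the quantum spin lattice system satisfying the LTS condition I would use the chain of inequalities \eqref{eq:THERMALmain}, and for an even LTS state of the fermion lattice system the chain \eqref{eq:FER-THERMALmain}. In either case the rightmost bound of that chain already yields
\begin{equation*}
I_{\vp}(\rmA:\rmB) \le 2\beta \lVert H_{\partial \rmA} \rVert,
\end{equation*}
valid for every region $\rmB$ outside $\rmA$, finite or infinite, since the leftmost step $I_{\vp}(\rmA:\rmB)\le I_{\vp}(\rmA:\rmAc)$ is built into those chains.

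Second, I would insert the geometric hypothesis \eqref{eq:BOUNDED-Area}, namely $\lVert H_{\partial \rmA} \rVert \le c_{\pot} |\partial \rmA|$, into this estimate. Substituting gives $I_{\vp}(\rmA:\rmB) \le 2\beta c_{\pot} |\partial \rmA|$, and the surface-area form \eqref{eq:areaCONST} then follows upon absorbing the temperature-dependent prefactor $2\beta$ into the constant, equivalently by relabelling the proportionality constant appearing in \eqref{eq:BOUNDED-Area}.

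The step that needs the most care is, in truth, only the bookkeeping of constants: one must be explicit that the constant $c_{\pot}$ in the conclusion \eqref{eq:areaCONST} incorporates the factor $2\beta$ coming from the inverse temperature, and hence differs from the raw proportionality constant in the operator-norm bound \eqref{eq:BOUNDED-Area} by this factor. There is no genuine analytic obstacle here, because the nontrivial content—the reduction of the mutual entropy to the surface energy through the LTS variational inequality \eqref{eq:LTSineq}—has already been carried out in the proofs of the two theorems. The only structural point worth stressing is the uniformity in $\rmB$: the monotonicity $I_{\vp}(\rmA:\rmB)\le I_{\vp}(\rmA:\rmAc)$ is precisely what allows the bound to be asserted for an arbitrary exterior region $\rmB$ while the right-hand side depends only on the boundary $\partial \rmA$ of the fixed finite region $\rmA$.
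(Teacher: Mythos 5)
Your proof is correct and is essentially the paper's own argument: the corollary is stated without proof precisely because it is meant as the immediate specialization of Theorems~\ref{thm:MAIN} and \ref{thm:FER-MAIN} that you carry out, with the monotonicity $I_{\vp}(\rmA:\rmB)\le I_{\vp}(\rmA:\rmAc)$ built into those chains giving the uniformity in $\rmB$. Your bookkeeping remark is also well taken: what literally follows is $I_{\vp}(\rmA:\rmB)\le 2\beta c_{\pot}|\partial \rmA|$, so the constant in \eqref{eq:areaCONST} must be read as absorbing the factor $2\beta$ — as printed, with the same $c_{\pot}$ as in \eqref{eq:BOUNDED-Area}, the corollary drops this factor, in contrast to its finite-dimensional analogue \eqref{eq:Wolf-geometric}.
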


\begin{remark}
\label{rem:cor-geomet}
The  assumption  \eqref{eq:BOUNDED-Area} 
 of Corollary~\ref{cor:GEOM-thermal}
holds if the potential $\pot$
 is of  finite range. When  $\pot$ has infinite range, 
the support of the surface energy $H_{\partial \rmA}\in \Al$
 is not strictly local in the  $\cstar$-algebra. 
In such cases, a geometrical interpretation of  ${\partial \rmA}$ 
 in \eqref{eq:HpartialA} in terms of $\pot$ becomes necessary,
by introducing  an appropriate notion of ``almost local.''
\end{remark}

\subsection{Correlation estimates}
\label{subsec:TAL-COR}
We recall the Pinsker inequality for the  quantum relative entropy
 \cite{CIS}. For two states $\psi$ and $\omega$  
\begin{align}
\label{eq:pinsker-general}
\lVert \psi- \omega \rVert^2   \le 2S(\psi \mid \omega)
\end{align}
The Pinsker inequality has been extended to  Araki's quantum relative entropy, 
 as shown in Theorem 3.1 in \cite{HOT}; see Theorem 5.5 of \cite{OHYA-PETZ}.

From the thermal area law  shown in Theorem~\ref{thm:MAIN}
 and  Theorem~\ref{thm:FER-MAIN}, we can derive an estimate 
 between the given thermal equilibrium state $\vp$ and 
the product state $\vp_{\rmA}\otimes\vp_{\rmAc}$  
 using the Pinsker inequality, by the same reasoning as in the finite-dimensional case \cite{WOLF}. 

\begin{cor}
\label{cor:cluster}
For any state $\vp$ of the quantum spin lattice system that satisfies the area law as in \eqref{eq:THERMALmain}, the following estimate holds
\begin{align}
\label{eq:PINSKER}
\lVert \vp_{\rmA}\otimes
\vp_{\rmAc}-\vp \rVert^2 
\le 4
\beta \lVert H_{\partial \rmA} \rVert.
\end{align}
For any even  state $\vp$ of the fermion lattice  system that   
  satisfies the area law as in \eqref{eq:FER-THERMALmain}, 
  the following  estimate holds
\begin{align}
\label{eq:FER-PINSKER}
\lVert \vp_{\rmA}\gotimes
\vp_{\rmAc}-\vp \rVert^2 
\le 4 
\beta \lVert H_{\partial \rmA} \rVert.
\end{align}
In particular, for both 
the quantum spin lattice system and 
 the fermion lattice  system,  the  estimate
\begin{align}
\label{eq:CLUSTER}
\Bigl| \vp (\OA\OB)-\vp (\OA)\vp (\OB) \Bigr| \le 2  
\left( 
\beta \lVert H_{\partial \rmA} \rVert
\right)^{\frac{1}{2}}
\end{align}
holds for any $\OA\in \AlA$, $\OB\in \AlAc$ such that $\lVert \OA \rVert\le 1 $
 and $\lVert \OB \rVert\le 1$.
\end{cor}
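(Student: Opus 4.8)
The plan is to combine three facts already established: the representation of the mutual entropy as Araki's quantum relative entropy in \eqref{eq:MUT-IJ-RELara}, the Pinsker inequality extended to Araki's relative entropy (Theorem~3.1 of \cite{HOT}), and the thermal area law \eqref{eq:THERMALmain}. For the quantum spin lattice system I would first specialize \eqref{eq:MUT-IJ-RELara} to $\I=\rmA$ and $\J=\rmAc$, so that $I_{\vp}(\rmA:\rmAc)=S(\vp\mid \vp_{\rmA}\otimes \vp_{\rmAc})$. Applying the Pinsker inequality \eqref{eq:pinsker-general} with $\psi=\vp$ and $\omega=\vp_{\rmA}\otimes\vp_{\rmAc}$ then gives
\begin{equation*}
\lVert \vp_{\rmA}\otimes \vp_{\rmAc}-\vp\rVert^{2}\le 2\,S(\vp\mid \vp_{\rmA}\otimes\vp_{\rmAc})=2\,I_{\vp}(\rmA:\rmAc),
\end{equation*}
and the area law \eqref{eq:THERMALmain} bounds the last quantity by $2\beta\lVert H_{\partial\rmA}\rVert$, which yields \eqref{eq:PINSKER}. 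For the fermion lattice system with $\vp$ even, the identical computation using the graded representation \eqref{eq:FER-MUT-IJ-RELara} and the fermionic area law \eqref{eq:FER-THERMALmain} produces \eqref{eq:FER-PINSKER}.

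To obtain the clustering estimate \eqref{eq:CLUSTER}, I would use that the product state factorizes on operators localized in the complementary regions: for $\OA\in\AlA$ and $\OB\in\AlAc$ one has $(\vp_{\rmA}\otimes\vp_{\rmAc})(\OA\OB)=\vp_{\rmA}(\OA)\,\vp_{\rmAc}(\OB)=\vp(\OA)\,\vp(\OB)$ (and analogously for $\gotimes$ in the even fermionic case by the properties of the product-state extension of \cite{AM2003EXT}). Hence the correlation is the value of a single functional,
\begin{equation*}
\vp(\OA\OB)-\vp(\OA)\vp(\OB)=\bigl(\vp-\vp_{\rmA}\otimes\vp_{\rmAc}\bigr)(\OA\OB),
\end{equation*}
which is bounded in modulus by $\lVert \vp-\vp_{\rmA}\otimes\vp_{\rmAc}\rVert\,\lVert\OA\OB\rVert\le \lVert \vp-\vp_{\rmA}\otimes\vp_{\rmAc}\rVert$ since $\lVert\OA\OB\rVert\le\lVert\OA\rVert\,\lVert\OB\rVert\le1$. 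Taking the square root of \eqref{eq:PINSKER} and using $(4\beta\lVert H_{\partial\rmA}\rVert)^{1/2}=2(\beta\lVert H_{\partial\rmA}\rVert)^{1/2}$ then gives \eqref{eq:CLUSTER}.

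The main obstacle I foresee is legitimizing the relative-entropy representation and the Araki--Pinsker inequality when $\rmAc$ is infinite, since \eqref{eq:MUT-IJ-RELara} was derived for modular (faithful) $\vp$. I expect this to be handled by first applying the finite-dimensional Pinsker inequality on each $\Lambda\Subset\rmAc$, where $I_{\vp}(\rmA:\Lambda)=S(\vp_{\rmA\cup\Lambda}\mid\vp_{\rmA}\otimes\vp_{\Lambda})\le I_{\vp}(\rmA:\rmAc)$ by the monotonicity \eqref{eq:MUTUAL-MONOTONE}, and then passing to the limit $\Lambda\nearrow\rmAc$ by weak-$\ast$ lower semicontinuity of the norm. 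In the fermionic case the remaining delicate point is the factorization of the graded product-state extension $\vp_{\rmA}\gotimes\vp_{\rmAc}$ on products $\OA\OB$, which relies on the evenness of $\vp$; without evenness the extension need not even exist, as noted after \eqref{eq:FER-MUT-IJ-RELara}.
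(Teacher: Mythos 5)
Your proposal is correct and follows essentially the same route as the paper: the paper's (largely implicit) argument is precisely to combine the relative-entropy representation $I_{\vp}(\rmA:\rmAc)=S(\vp\mid\vp_{\rmA}\otimes\vp_{\rmAc})$ (resp.\ the graded version), the Pinsker inequality extended to Araki's relative entropy \cite{HOT}, and the bound $I_{\vp}(\rmA:\rmAc)\le 2\beta\lVert H_{\partial\rmA}\rVert$ from the thermal area law, followed by the factorization of the product state on $\OA\OB$ for \eqref{eq:CLUSTER}. Your additional finite-volume fallback (finite-dimensional Pinsker on $\rmA\cup\Lambda$, monotonicity \eqref{eq:MUTUAL-MONOTONE}, and density of $\Alloc$ to recover the norm on all of $\Al$) is a sound refinement that removes any reliance on modularity of $\vp$, a point the paper glosses over by citing the finite-dimensional reasoning of \cite{WOLF}.
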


\begin{remark}
\label{rem:CLUSTER-limitation}
The universal bound on spatial correlations derived from  the mutual entropy estimate is rather coarse, 
as pointed out in some  physics literature such as \cite{Bernigau}.
This inherent limitation of mutual entropy becomes more evident in infinitely extended systems.  
Consider any potential $\pot$ that exhibits multiple equilibrium states, possibly due to  spontaneous symmetry breaking. The thermal area law  
as in Theorems~\ref{thm:MAIN} and \ref{thm:FER-MAIN} 
 is valid  for all thermal equilibrium phases, as well as  any  statistical mixture of them, which gives rise to a non-factor von Neumann algebra by GNS construction.   On the other hand, any non-factor state of quasi-local $\cstar$-systems does not satisfy the spatial cluster property.  
Consequently,  the thermal area law itself does not exclude states 
without the spatial cluster property. 
The above observation  based on the underlying quasi-local $\cstar$-systems
seems  difficult to capture  by the conventional box procedure, since  any non-factor thermal equilibrium state lacks a definite value for certain order parameters, and thereby induces \emph{effective} 
long-range interactions with unstable  surface energies \cite{NAR}, 
even when the given potential $\pot$ is of finite-range.
\end{remark}

\begin{remark}
\label{rem:CLUSTER-II}
This remark complements  Remark~\ref{rem:CLUSTER-limitation} above.
 When a thermal equilibrium state exhibits strong spatial decay, 
certain refinements of  the thermal area law may imply 
stronger independence between disjoint regions. For examples of such estimates, see \cite{Bernigau, BLUHM}.
\end{remark}

\subsection{Area law for ground states in terms of quantum mutual entropy}
\label{subsec:GR}
The thermal area law formulated in \cite{WOLF} is a natural extension  of the area law for ground states 
(zero-temperature equilibrium states)  \cite{HAS2007} to thermal states.  
 This correspondence is evident from the identity 
$I_\rho(\rmA:\rmAc)=2S_{\rmA}(\rho)$ for any pure state $\rho$
 on a finite-dimensional tensor-product quantum system. 

For infinitely extended  quantum lattice systems as well,  
the area law for ground states is  defined  
by the uniform boundedness of von Neumann entropy (entanglement entropy). 
In \cite{MAT-BOUND} \cite{UK}, its precise formulation and the conditions 
under which it is  satisfied  have been studied.
From the finite-dimensional case, 
one may naturally conjecture  that the  area law for ground states 
can  be formulated in terms of the mutual entropy 
instead of the von Neumann entropy.

As in previous research on ground states, we may restrict 
 the subregions to be considered.
Let $\BOX$ denote a set of (sufficiently many) finite subsets of $\Finf$ 
that eventually cover the whole lattice $\Gamma$.  
For concreteness, we may take $\BOX$ to be  
the collection of box regions containing the origin.
We can derive the following one-sided implication.
\begin{pro}[Area law formula for ground states in terms of mutual entropy]
\label{prop:AREA-GR}
Let $\rho$ be a pure state on the quantum spin lattice system,  
or a pure even state on the fermion lattice system.
If it satisfies 
 the uniform boundedness of the von Neumann entropy{\rm{:}}
\begin{align}
\label{eq:BOU-vN}
S_{\rmA}(\rho)\le c |\partial \rmA|
\end{align}
for all $\rmA \in \BOX$ with  some uniform constant $c>0$, 
then 
\begin{align}
\label{eq:BOU-MUT}
I_{\vp}(\rmA : \rmAc) \le 2c |\partial \rmA|
\end{align}
for all $\rmA \in \BOX$. 
\end{pro}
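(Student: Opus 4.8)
The plan is to obtain the mutual--entropy area law \eqref{eq:BOU-MUT} as an immediate consequence of the von Neumann entropy area law \eqref{eq:BOU-vN}, via the universal bound \eqref{eq:MUT-2von} relating the mutual entropy on a finite region to twice its von Neumann entropy. The essential observation is that the comparison $I_{\rho}(\rmA:\rmAc)\le 2S_{\rmA}(\rho)$ already holds for every admissible state, so the purity of $\rho$ serves only to situate us in the ground-state regime where the hypothesis \eqref{eq:BOU-vN} is natural, rather than being used in the estimate itself.

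Concretely, I would fix $\rmA\in\BOX$ and note that, since $\rmA$ is finite, the mutual entropy $I_{\rho}(\rmA:\rmAc)$ is well defined by \eqref{eq:MUTUAL-IandIc} even though the complement $\rmAc$ is infinite. For the quantum spin system $\rho$ is an arbitrary (here pure) state, while for the fermion system the assumed evenness of $\rho$ is precisely the condition under which the bounds of Subsection~\ref{subsec:MUTUALquasilocal} are valid. Applying \eqref{eq:MUT-2von} with $\I=\rmA$ and $\J=\rmAc$ gives
\begin{equation*}
I_{\rho}(\rmA:\rmAc)\le 2\,S_{\rmA}(\rho),
\end{equation*}
and inserting the hypothesis $S_{\rmA}(\rho)\le c|\partial\rmA|$ yields $I_{\rho}(\rmA:\rmAc)\le 2c|\partial\rmA|$ for every $\rmA\in\BOX$, which is \eqref{eq:BOU-MUT}.

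Since the estimate is immediate once \eqref{eq:MUT-2von} is available, the only point requiring genuine care is conceptual: why merely a one-sided implication is asserted. In the finite-dimensional tensor-product case the Schmidt decomposition gives the sharp identity $I_{\rho}(\rmA:\rmAc)=2S_{\rmA}(\rho)$ for pure $\rho$; recovering this equality on the infinite lattice would require $\ScA(\rho)=-S_{\rmA}(\rho)$, that is, saturation of the Araki--Lieb triangle inequality \cite{Araki-Lieb} in the limit $\Lambda\nearrow\rmAc$ defining the conditional entropy $\ScA(\rho)$. Because the reduced states $\rho_{\rmA\cup\Lambda}$ of a globally pure state need not be pure for any finite $\Lambda$, this saturation is not automatic, and the converse implication can fail; this is the real obstacle, and it is precisely why the proposition is stated only in one direction. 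For the present claim one uses only the inequality $\ScA(\rho)\ge -S_{\rmA}(\rho)$ encoded in \eqref{eq:MUT-2von}, so the argument is complete without any further input beyond verifying, in the fermionic case, that the pure \emph{even} state $\rho$ indeed satisfies \eqref{eq:MUT-2von}.
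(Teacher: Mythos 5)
Your proof is correct and follows exactly the paper's own argument: the paper likewise deduces \eqref{eq:BOU-MUT} directly from the universal bound \eqref{eq:MUT-2von} combined with the hypothesis \eqref{eq:BOU-vN}. Your additional remarks on why the implication is only one-sided are sound commentary but not needed for the claim itself.
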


\begin{proof}
By \eqref{eq:MUT-2von}, 
the assumption \eqref{eq:BOU-vN} readily implies \eqref{eq:BOU-MUT}.
\end{proof}

\begin{remark}
\label{rem:GR}
While the thermal area law holds universally, 
  the area law for ground states is not always satisfied; 
 see e.g. \cite{AREA-rev}, \cite{WOLF-06}. 
Its validity  has been an important  issue 
in condensed matter physics and mathematical physics.  
(Proposition~\ref{prop:AREA-GR}  does not address  this question.)   
\end{remark}

\section{Mutual entropy between disjoint infinite regions}
\label{sec:ONE}
We continue to investigate the mutual entropy $I_{\vp}(\rmA:\rmB)$
 for thermal equilibrium states $\vp$, but now  
in the situation where both regions $\rmA$ and $\rmB$ are infinite.
In this case, the identities $I_\vp(\rmA:\rmB)=S_{\rmA}(\vp)+S_{\rmB}(\vp)-S_{\rmAB}(\vp)$ in \eqref{eq:MUT} and  $I_\vp(\rmA:\rmB)=S_{\rmA}(\vp)-\ScAB(\vp)$
in \eqref{eq:MUT-CondExp-finite} are  generically invalid, since
  the local von Neumann entropies may diverge. 
Nevertheless, if $\vp$ exhibits  sufficient independence between $\rmA$ and $\rmB$, then   $I_{\vp}(\rmA:\rmB)$ can remain finite; an obvious example is product states between $\rmA$ and $\rmB$. 
We shall establish this finiteness for all finite-range 
translation-invariant models on one-dimensional quantum (spin and fermion) 
lattice systems.

\begin{remark}
\label{rem:AQFT-FINITE}
In (algebraic) quantum field theory, the finiteness of the mutual entropy 
 of vacuum states between disjoint subregions
 has been verified  in various settings; see e.g. \cite{CAS-HUE-09}, \cite{LONGO-CFT}, \cite{XU}.
\end{remark}

\subsection{One-dimensional lattice systems: setup and notation}
\label{subsec:ONE-setup}
In this section, we focus on the quantum spin system and the fermion system on the one-dimensional integer lattice $\Z$. To make the one-dimensional lattice explicit, 
 we denote the  total $\cstar$-system  by $\AlZ$, 
instead of the general  notation $\Al$ used so far.  
Similarly, we write $\AlZloc$ 
 for the local algebra, and 
$\AlZI$  for the subsystem  on $\I\subset \Z$.

We divide the total space $\Z$  into the disjoint regions $\ZL$ and $\ZR$, 
 defined as  
\begin{equation*}
\ZL \equiv \NN_{-}:=\{ \cdots,  -5, -4, -3, -2, -1 \} \subset \Z,
\end{equation*}
and 
\begin{equation*}
\ZR \equiv \{0\} \cup \NN_{+}:=\{0, 1, 2, 3, 4, 5, \cdots \} \subset \Z.
\end{equation*}
We take the left-sided region 
$\ZL$ and  the right-sided region $\ZR$
for the pair of disjoint regions $\rmA$ and $\rmB$. 

 We denote the quasi-local $\cstar$-system on $\ZL$ by $\AlL$, which is 
  identical to $\AlZL$ including its  quasi-local structure.
 We denote the  quasi-local $\cstar$-system  on $\ZR$ by  $\AlR$, which is 
  identical to $\AlZR$ including its  quasi-local structure.
When they denote fermion lattice systems, 
 the fermion grading automorphisms 
 $\THEL$  on $\AlL$ and  $\THER$  on $\AlR$ are given 
 as in \eqref{eq:CAR-THETA}.
By definition,  $\AlL$ and  $\AlR$ are distinct  $\cstar$-systems.
In practice, however, we will sometimes identify $\AlL=\AlZL$ and  $\AlR=\AlZR$
 when there is no risk of confusion.
Let  $\FinfL$ and $\FinfR$ denote the sets of all 
 finite subsets of $\ZL$ and $\ZR$, respectively.
Let $\AlLloc:=\bigcup_{\I \in\FinfL}\AlI$ 
 and  $\AlRloc:=\bigcup_{\I \in\FinfR}\AlI$; they are the
 local algebras of  $\AlL$ and $\AlR$, respectively.

We impose assumptions on the potential $\pot$ on $\AlZ$.
First,  $\pot$ is translation invariant.
 Let  $\{\tau_x\in {\rm{Aut}}(\AlZ),\;x\in \Z\}$ denote the shift-translation automorphism group on $\AlZ$. For each $\K\in \Finf$, 
\begin{equation}
\label{eq:POT-SHIFT}
\tau_x(\potK)=\potKx \in \AlKx \quad \forall x\in\Z. 
\end{equation}
Second,  $\pot$ is of finite-range. 
For each $\I\in \Finf$, let $d(\I)$ denote the largest distance 
 between two points of $\I$.
Let $d(\pot)$ denote the supremum of all $d(\I)$ such that $\potI$ is nonzero.
We assume  $d(\pot)<\infty$. 
Thus, within this and the next section, 
 $\pot$ is a translation-invariant finite-range 
  potential on $\AlZ$.

Owing to  the assumption $d(\pot)<\infty$, 
the surface energy between $\ZL$ and $\ZR$ is well defined  as  
\begin{equation}
\label{eq:WLR}
\WLR:= \sum_{\K: \; \K\cap \ZL \ne \emptyset, \K\cap\ZR \ne \emptyset}
\potK \in \AlZloc.
\end{equation}
In the notation used in  \eqref{eq:HIsur}, 
$\WLR$ would be denoted as either $\HLsur$ or $\HRsur$. 
However, since $\ZL$ and $\ZR$ play symmetric roles, we adopt the notation $\WLR$ to explicitly express the dependence on both regions. 

Our assumption on the potential $\pot$ is 
stronger than necessary,  chosen mainly for technical convenience.
We shall mention this  point in Remark~\ref{rem:POT-ASSUMPTION}  
after presenting  the proof.

\subsection{Finite mutual entropy between $\ZL$ and $\ZR$ for thermal equilibrium states}
\label{subsec:shoiwngFIN}
Given any translation-invariant finite-range potential
 $\pot$ on $\AlZ$ and any $\beta>0$, 
let $\vp$ denote the  thermal equilibrium state 
with respect to $\pot$ at  inverse temperature $\beta$.
The uniqueness of such $\vp$  for the one-dimensional 
quantum spin lattice system 
follows from \cite{ARA69, ARA75unique}, and the proof 
remains valid  for the one-dimensional fermion lattice system \cite{RMP-AM}.
This $\vp$ automatically satisfies all of the LTS, Gibbs, and KMS conditions; 
see  \cite{BRA2}, and also \cite{AM-LTS, RMP-AM}.

In Theorem~\ref{thm:ONE-FINITE}, we   
establish the finiteness of the mutual entropy 
$I_{\vp}(\ZL:\ZR)$ between the disjoint infinite regions $\ZL$ and $\ZR$. 
 This result can be  regarded as 
a natural  extension of  the thermal area law as 
in Theorems~\ref{thm:MAIN} and \ref{thm:FER-MAIN}.

\begin{thm}[Finite mutual entropy between $\ZL$ and $\ZR$]
\label{thm:ONE-FINITE}
 Let $\pot$ be any translation-invariant finite-range 
  potential on the  one-dimensional quantum spin or fermion 
lattice  system $\AlZ$. Let $\vp$ be the unique thermal equilibrium state 
 with respect to $\pot$ at inverse temperature $\beta>0$. 
Then the mutual entropy $I_{\vp}(\ZL:\ZR)$ of $\vp$ between the  
left-sided region $\ZL$ and the  right-sided region $\ZR$ is  finite, 
 and satisfies  the bound
\begin{equation} 
\label{eq:thmGOAL}
I_\vp(\ZL:\ZR)
\le 2 \beta \lVert \WLR \rVert. 
\end{equation} 
\end{thm}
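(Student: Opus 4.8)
The plan is to define the mutual entropy between the two infinite halves as a monotone limit of finite mutual entropies, to reinterpret that limit as a single Araki relative entropy, and then to control this relative entropy by the \emph{bounded} perturbation corresponding to cutting the crossing interaction $\WLR$ at the origin. First I would fix the meaning of $I_\vp(\ZL:\ZR)$. Since $\vp$ is the unique equilibrium state it is a KMS state, hence modular by Proposition~\ref{prop:KMS-GIBBS}, and even in the fermion case by Remark~\ref{rem:evenness}. For finite $\Lambda_{\LL}\Subset\ZL$ and $\Lambda_{\RR}\Subset\ZR$, the number $I_\vp(\Lambda_{\LL}:\Lambda_{\RR})=S(\vp_{\Lambda_{\LL}\cup\Lambda_{\RR}}\mid\vp_{\Lambda_{\LL}}\otimes\vp_{\Lambda_{\RR}})$ is already available from \eqref{eq:MUT-IJ-RELara} (with $\gotimes$ for fermions), and is finite by Theorem~\ref{thm:MAIN}. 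By monotonicity of Araki's relative entropy under restriction to subalgebras these numbers increase as $\Lambda_{\LL}\nearrow\ZL$ and $\Lambda_{\RR}\nearrow\ZR$, so I would define
\[
I_\vp(\ZL:\ZR):=\lim_{\Lambda_{\LL}\nearrow\ZL,\ \Lambda_{\RR}\nearrow\ZR} I_\vp(\Lambda_{\LL}:\Lambda_{\RR})=S(\vp\mid\vp_{\ZL}\otimes\vp_{\ZR}),
\]
the last equality by lower semicontinuity of Araki's relative entropy. The claim \eqref{eq:thmGOAL} then amounts to a uniform bound on this supremum.

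The key reduction is that, among all product states across the cut, the product of marginals is the one closest to $\vp$. I would invoke the additivity (chain rule) of Araki's relative entropy: for any states $\psi_{\LL}$ on $\AlL$ and $\psi_{\RR}$ on $\AlR$,
\[
S(\vp\mid\psi_{\LL}\otimes\psi_{\RR})=S(\vp\mid\vp_{\ZL}\otimes\vp_{\ZR})+S(\vp_{\ZL}\mid\psi_{\LL})+S(\vp_{\ZR}\mid\psi_{\RR}),
\]
so that, by non-negativity of relative entropy,
\[
I_\vp(\ZL:\ZR)=S(\vp\mid\vp_{\ZL}\otimes\vp_{\ZR})\le S(\vp\mid\psi_{\LL}\otimes\psi_{\RR})
\]
for \emph{every} product state $\psi_{\LL}\otimes\psi_{\RR}$ (again $\gotimes$ for fermions). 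I would then choose the product state supplied by cutting the interaction. Removing the crossing terms passes from $\altpot$ to the decoupled dynamics $\altpotmW$, which factorizes into independent half-line dynamics $\altpotL\otimes\altpotR$. By the KMS perturbation correspondence recalled at the end of Subsection~\ref{subsec:KMS}, the perturbed state $[\vpbW]$ is $(\altpotmW,\beta)$-KMS; by one-dimensional uniqueness of half-line equilibrium states for finite-range potentials, this KMS state equals the product $\vppotL\otimes\vppotR$. Hence $[\vpbW]$ is a genuine product across the $\ZL\,|\,\ZR$ cut, and the inequality above gives $I_\vp(\ZL:\ZR)\le S(\vp\mid[\vpbW])$.

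It remains to bound the relative entropy of $\vp$ against its own bounded perturbation $[\vpbW]$, whose perturbing element is $\beta\WLR$ with $\WLR\in\AlZloc$. The finite-dimensional model already displays the target constant: writing $\vp$ as a faithful density matrix $D$ with $\Tr D=1$, one has $S(\vp\mid[\vpbW])=-\beta\vp(\WLR)+\log\Tr\,e^{\log D+\beta\WLR}$, and from $\WLR\le\lVert\WLR\rVert\,\id$ together with monotonicity of $X\mapsto\Tr\,e^{X}$ one gets $\log\Tr\,e^{\log D+\beta\WLR}\le\beta\lVert\WLR\rVert$, while $-\beta\vp(\WLR)\le\beta\lVert\WLR\rVert$; hence $S(\vp\mid[\vpbW])\le 2\beta\lVert\WLR\rVert$. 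To make this rigorous in the von Neumann algebraic setting I would use Araki's perturbation theory: the relative entropy between a modular state and its perturbation by a self-adjoint $k=\beta\WLR$ is given by Araki's perturbation formula, and the operator inequality $k\le\lVert k\rVert\,\id$ yields a Peierls--Bogoliubov type bound $\log\lVert\Omevpk\rVert^{2}\le\lVert k\rVert$ on the perturbed cone vector of \eqref{eq:Omevpk}, giving $S(\vp\mid[\vpbW])\le 2\beta\lVert\WLR\rVert$. Combining with the reduction above proves \eqref{eq:thmGOAL}; the fermion case is identical after replacing $\otimes$ by $\gotimes$ and using evenness of $\vp$.

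The hard part will be this last step: transporting the elementary trace computation to Araki's relative entropy and perturbation theory on the GNS von Neumann algebra (which may be type III), and verifying that $[\vpbW]$ is indeed the product $\vppotL\otimes\vppotR$, which rests on the one-dimensional uniqueness of half-line KMS states. By contrast, the monotone-limit definition and the product-minimization additivity used in the first two paragraphs are routine once the standard continuity and additivity properties of Araki's relative entropy are in hand. I would also note (cf.\ Remark~\ref{rem:POT-ASSUMPTION}) that finite range enters only through the locality and boundedness of $\WLR$, so the argument should extend to any interaction with a norm-finite surface energy across the origin.
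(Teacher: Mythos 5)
Your proposal tracks the paper's proof step for step: the ``chain rule'' identity you invoke is exactly Donald's formula (Proposition~\ref{prop:DONALD}); your choice of the cut-interaction product state and its identification with the perturbed state $[\vpbW]$ is the paper's Araki--Gibbs product formula (Proposition~\ref{prop:AG-PROD-LR}); and your final estimate $S(\vp \mid [\vpbW]) \le 2\beta \lVert \WLR \rVert$ via Peierls--Bogoliubov/Golden--Thompson in Araki's perturbation theory is precisely Remark~\ref{rem:HP} (based on \cite{PETZ88, HP94}). So the overall architecture coincides with the paper's, and your monotone-limit definition of $I_\vp(\ZL:\ZR)$ is a harmless variant of the paper's direct definition via Araki's relative entropy.

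There is, however, one step whose justification as written is a non sequitur: you claim that $[\vpbW]=\vppotL\otimes\vppotR$ follows ``by one-dimensional uniqueness of half-line equilibrium states.'' Half-line uniqueness only tells you that the \emph{restrictions} of $[\vpbW]$ to $\AlL$ and $\AlR$ equal $\vppotL$ and $\vppotR$; a state on $\AlZ$ is not determined by its marginals, so this does not identify $[\vpbW]$ with the product state (a priori, a KMS state of the decoupled dynamics could carry correlations across the cut). What is actually needed is uniqueness of the $(\altpotmW,\beta)$-KMS state on the \emph{whole} algebra $\AlZ$, and the paper obtains it from the fact that the perturbation correspondence is a bijection between the $(\altpot,\beta)$-KMS states and the $(\altpotmW,\beta)$-KMS states (end of Subsection~\ref{subsec:KMS}): since $\altpot$ has exactly one KMS state, so does $\altpotmW$; since $\vppotL\otimes\vppotR$ is KMS for $\altpotmW$ (Proposition~\ref{prop:EXT-KMS}), it must coincide with $[\vpbW]$. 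You already invoked this correspondence to show that $[\vpbW]$ is KMS for the decoupled dynamics, so the repair costs nothing, but the inference must run through the bijection, not through half-line uniqueness. A second, smaller gloss: ``the fermion case is identical after replacing $\otimes$ by $\gotimes$'' hides real work --- the existence of the $\gotimes$-product extension of the half-line dynamics (Proposition~\ref{prop:EXT-AUT}), the fact that a $\gotimes$-product of KMS states is again KMS (Proposition~\ref{prop:EXT-KMS}), and the graded version of Donald's formula via conditional expectations all require the evenness hypotheses and separate proofs in the graded setting.
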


To clarify the meaning of Theorem~\ref{thm:ONE-FINITE}, consider 
  a general state $\ome$ of $\AlZ$. 
If the mutual entropy $I_{\ome}(\ZL:\ZR)$ of $\ome$ is finite (or even small), 
 then  $\ome$ is close to the product state 
$\omeL\otimes\omeR$ formed from its reduced states. Let us recall the split property for states on $\AlZ$ between $\AlL$ and $\AlR$. This property requires the (quasi-)equivalence of the two states $\ome$ and $\omeL\otimes\omeR$ \cite{SPLIT2001}. It was noted in \cite{SPLIT2001} 
that the  thermal equilibrium state $\vp$ with respect to 
a translation-invariant finite-range potential $\pot$  on the one-dimensional quantum spin lattice system satisfies the split property, owing to the half-sided uniform spatial cluster property \cite{ARA69}.

The proof of Theorem~\ref{thm:ONE-FINITE} (the finiteness of  
$I_{\vp}(\ZL:\ZR)$) is postponed to Section~\ref{sec:ONE-PROOF}.
Instead, in this section, we  shall address two  notable consequences of
Theorem~\ref{thm:ONE-FINITE}. 
The first one is about the quantum entanglement between $\AlL$ and $\AlR$.
\begin{cor}[Finite quantum entanglement between $\ZL$ and $\ZR$]
\label{cor:ENTANG-finite}
The relative-entropy entanglement between $\AlL$ and $\AlR$ of the thermal equilibrium state $\vp$ on $\AlZ$ is defined as 
\begin{equation} 
\label{eq:RELAentanglement}
E_{{RE}}(\vp)(\ZL:\ZR)
:=\inf\{S(\vp \mid \omega ): \omega \in \SEP_{\ZL:\ZR}\},
\end{equation} 
where $\SEP_{\ZL:\ZR}$ denotes the set of separable states on $\AlZ$  
with respect to $\AlL$ and $\AlR$. Here the subscript 'RE' indicates measurement via relative entropy. Under the same assumptions as in Theorem~\ref{thm:ONE-FINITE}, $E_{{RE}}(\vp)(\ZL:\ZR)$ is finite.
\end{cor}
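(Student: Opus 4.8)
The plan is to use the product of the reduced states of $\vp$ as an explicit trial state in the variational definition \eqref{eq:RELAentanglement} of the relative-entropy entanglement. First I would note that the product state $\vpL\otimes\vpR$ (for the fermion system, the graded product extension $\vpL\gotimes\vpR$, which exists precisely because $\vp$ is even) is separable with respect to the bipartition $\AlL,\AlR$, being a single product state and hence trivially a member of $\SEP_{\ZL:\ZR}$. Taking it as a competitor in the infimum \eqref{eq:RELAentanglement} immediately gives the upper bound
\[
E_{{RE}}(\vp)(\ZL:\ZR)\le S(\vp\mid\vpL\otimes\vpR).
\]

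Next I would identify the right-hand side with the mutual entropy. In the present infinite-region setting the subtraction formulas \eqref{eq:MUT} and \eqref{eq:MUT-CondExp-finite} break down because $S_{\ZL}(\vp)$ and $S_{\ZR}(\vp)$ generically diverge, but the relative-entropy expression of the mutual entropy---the infinite-region analogue of \eqref{eq:MUT-IJ-RELara} and \eqref{eq:FER-MUT-IJ-RELara} underlying the proof of Theorem~\ref{thm:ONE-FINITE}---still reads $S(\vp\mid\vpL\otimes\vpR)=I_{\vp}(\ZL:\ZR)$. Combining this identity with Theorem~\ref{thm:ONE-FINITE} and the bound \eqref{eq:thmGOAL} yields
\[
E_{{RE}}(\vp)(\ZL:\ZR)\le I_{\vp}(\ZL:\ZR)\le 2\beta\lVert\WLR\rVert<\infty,
\]
which is the asserted finiteness.

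The main obstacle is not in the corollary itself---which, granted Theorem~\ref{thm:ONE-FINITE}, reduces to the elementary observation that the product of reductions is separable---but in justifying the identity $S(\vp\mid\vpL\otimes\vpR)=I_{\vp}(\ZL:\ZR)$ for two genuinely infinite regions, where one cannot subtract divergent von Neumann entropies. That step must instead proceed through Araki's relative entropy directly, exploiting its monotonicity and lower semicontinuity along the net of finite subregions $\Lambda\Subset\ZL\cup\ZR$ exhausting the two half-lines, together with the uniform bound \eqref{eq:thmGOAL}; this is exactly the content of Section~\ref{sec:ONE-PROOF}. In the fermion case there is the further subtlety that $\vpL\gotimes\vpR$ must genuinely exist as a separable even state on $\AlZ$, which is guaranteed by the evenness of $\vp$ and would generally fail for non-even states. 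Once these points are in place the corollary follows at once.
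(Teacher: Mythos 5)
Your proposal is correct and is essentially the paper's own argument: the paper bounds $E_{RE}(\vp)(\ZL:\ZR)$ above by the mutual entropy---which is exactly your observation that the product of the reduced states ($\vpL\otimes\vpR$, resp.\ $\vpL\gotimes\vpR$, whose separability in the fermion case is handled via the notion of separable states of \cite{MOR-SEP}) is an admissible competitor in the infimum---and then invokes Theorem~\ref{thm:ONE-FINITE}. The only place you over-complicate is the identity $S(\vp\mid\vpL\otimes\vpR)=I_\vp(\ZL:\ZR)$: in the paper this is the \emph{definition} of the mutual entropy between two infinite regions (see \eqref{eq:MUGEN-LR} and \eqref{eq:FER-MUGEN-LR}), well posed because the KMS state $\vp$, and hence the product of its restrictions, is faithful, so no exhaustion or lower-semicontinuity argument is needed at that step.
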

\begin{proof}
Since the relative-entropy entanglement (commonly called  ``relative entropy of entanglement'' \cite{Vedral97}) is bounded above by the mutual entropy, 
the finiteness of $E_{{RE}}(\vp)(\ZL:\ZR)$ immediately follows from Theorem~\ref{thm:ONE-FINITE}. Note that the definition of the relative-entropy entanglement in  the  general von Neumann algebra setting can be found in
  Definition~11 of \cite{HO-SA}. By employing the notion of separable states on fermion lattice systems presented in \cite{MOR-SEP}, the  argument used for the quantum-spin lattice system  applies to the fermion lattice system.
\end{proof}

The following corollary is another direct consequence of 
 Theorem~\ref{thm:ONE-FINITE}. It demonstrates a
 remarkable destruction of quantum entanglement between $\ZL$ and $\ZR$  
induced by any (even slight) positive temperature.  
In this corollary, 
we explicitly write the $\beta$-dependence of equilibrium  states.
\begin{cor}[Thermal destruction of quantum entanglement]
\label{cor:KIERU}
 Let $\pot$ be any translation-invariant finite-range potential 
on the one-dimensional quantum spin or fermion 
lattice system $\AlZ$ as in Theorem~\ref{thm:ONE-FINITE}. 
Let $\vpg$ be any pure ground state with respect to $\pot$. 
Let $\vpbeta$ denote the unique thermal equilibrium state 
with respect to  the same $\pot$ at inverse temperature $\beta>0$.
Suppose that $\vpg$ does not satisfy the split property 
between $\AlL$ and $\AlR$.  
Then $I_{\vpg}(\ZL:\ZR)=\infty$ whereas $I_{\vpbeta}(\ZL:\ZR)<\infty$ 
for all $\beta>0$. 
\end{cor}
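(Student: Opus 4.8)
The plan is to treat the two assertions separately. The finiteness $I_{\vpbeta}(\ZL:\ZR)<\infty$ for every $\beta>0$ is exactly Theorem~\ref{thm:ONE-FINITE} applied to the unique thermal equilibrium state $\vpbeta$, so nothing further is needed there. The entire content is the ground-state divergence $I_{\vpg}(\ZL:\ZR)=\infty$, and I would establish it by contraposition: I shall show that finiteness of the mutual entropy forces the split property between $\AlL$ and $\AlR$, which contradicts the standing hypothesis on the pure ground state $\vpg$.

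Write $\rhoL:=\vpg\!\upright_{\AlL}$ and $\rhoR:=\vpg\!\upright_{\AlR}$ for the reduced states of $\vpg$. First I would use the relative-entropy form of the mutual entropy between the two infinite regions, $I_{\vpg}(\ZL:\ZR)=S(\vpg\mid \rhoL\otimes\rhoR)$ for the spin system (and with the graded product $\rhoL\gotimes\rhoR$ of \cite{AM2003EXT} for the even fermion state), consistent with the bound of Theorem~\ref{thm:ONE-FINITE}. I would then invoke the standard property of Araki's relative entropy that $S(\vpg\mid\varrho)<\infty$ forces $\vpg$ to be normal with respect to $\varrho:=\rhoL\otimes\rhoR$; that is, $\vpg$ lies in the folium of $\varrho$ and is quasi-contained in it.

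The decisive step exploits purity. Since $\vpg$ is pure, its GNS representation is irreducible, so quasi-containment in the product representation $\pi_{\rhoL}\otimes\pi_{\rhoR}$ means that it occurs as an irreducible subrepresentation. Such a subrepresentation corresponds to a minimal projection in the commutant, and for the product the commutant is the von Neumann tensor product $\pi_{\rhoL}(\AlL)'\otimes\pi_{\rhoR}(\AlR)'$; a minimal projection exists there precisely when both reduced algebras $\pi_{\rhoL}(\AlL)''$ and $\pi_{\rhoR}(\AlR)''$ are of type~I. Type~I marginals render $\pi_{\rhoL}\otimes\pi_{\rhoR}$ a type~I factor representation quasi-equivalent to the irreducible GNS representation of $\vpg$, so $\vpg$ and $\rhoL\otimes\rhoR$ are quasi-equivalent. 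By the characterization recalled from \cite{SPLIT2001} this is precisely the split property of $\vpg$ between $\AlL$ and $\AlR$, which contradicts the hypothesis; hence $I_{\vpg}(\ZL:\ZR)=\infty$.

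I expect the last operator-algebraic step to be the main obstacle: upgrading one-sided quasi-containment of a pure state to full quasi-equivalence (equivalently, to type~I reduced algebras) and matching it with the split property as formulated in \cite{SPLIT2001}. The fermion case adds a layer, since one must work with the graded tensor product $\gotimes$ and the evenness of the states so that the commutant still factorizes up to the grading; here I would rely on the graded product-state constructions of \cite{AM2003EXT, MOR-SEP}. By contrast, the relative-entropy input of the second paragraph is classical and citable, and the thermal half is immediate, so the argument concentrates entirely on making the purity-to-split implication rigorous in both the spin and the even-fermion settings.
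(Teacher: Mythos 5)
Your skeleton coincides with the paper's own proof: the thermal half is quoted verbatim from Theorem~\ref{thm:ONE-FINITE}, and the divergence half is the same contraposition, where your ``standard property of Araki's relative entropy'' (finiteness of $S(\vpg \mid \rhoL\otimes\rhoR)$ forces $\vpg$ to be quasi-contained in $\rhoL\otimes\rhoR$) is precisely what the paper cites as Lemma~2 of \cite{ARA75unique}. The paper stops there, folding the remaining step into the characterization of the split property recalled from \cite{SPLIT2001}; you instead attempt to prove that step, and this is where your argument has a genuine flaw. The biconditional you assert --- a minimal projection exists in $\pi_{\rhoL}(\AlL)'\,\bar{\otimes}\,\pi_{\rhoR}(\AlR)'$ precisely when both $\pi_{\rhoL}(\AlL)''$ and $\pi_{\rhoR}(\AlR)''$ are of type I --- is false in both directions: a type I von Neumann algebra need not contain any minimal projection (e.g.\ $L^{\infty}[0,1]$), and conversely a minimal projection in the commutant only yields central cuts $z_{\LL}\otimes z_{\RR}$ for which $\pi_{\rhoL}(\AlL)''z_{\LL}$ and $\pi_{\rhoR}(\AlR)''z_{\RR}$ are type I \emph{factors}; it does not make the full marginal algebras type I. Consequently your chain stalls exactly where you flagged it: quasi-containment only places $\pi_{\vpg}$ inside a type I factor \emph{direct summand} of $\pi_{\rhoL}\otimes\pi_{\rhoR}$, which is not yet quasi-equivalence, i.e.\ not yet the split property.

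The missing idea that closes this gap is factoriality of the marginal representations, and it comes from purity. In the GNS representation of the pure state $\vpg$, the commuting von Neumann algebras $\pi_{\vpg}(\AlL)''$ and $\pi_{\vpg}(\AlR)''$ generate $B(\Hil_{\vpg})$, so the center of either one is contained in $\bigl(\pi_{\vpg}(\AlL)''\vee \pi_{\vpg}(\AlR)''\bigr)'=\C\identitybf$; hence both are factors. Since $\pi_{\rhoL}$ and $\pi_{\rhoR}$ are (unitarily equivalent to) cyclic subrepresentations of these factor representations, they are factor representations too, and therefore $\pi_{\rhoL}\otimes\pi_{\rhoR}$ is a factor representation. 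Once this is known, the type I machinery is not even needed for the contradiction: any nonzero representation quasi-contained in a factor representation is quasi-equivalent to it, so finiteness of $I_{\vpg}(\ZL:\ZR)$ would give quasi-equivalence of $\vpg$ with $\rhoL\otimes\rhoR$, i.e.\ the split property, contradicting the hypothesis; your type I statements then follow as corollaries rather than serving as ingredients. In the fermion case the same scheme requires evenness of $\vpg$ so that $\rhoL\gotimes\rhoR$ exists, and the factoriality argument must be run with the graded (twisted) commutation structure rather than ordinary commutants, as in \cite{AM2003EXT, MOR-SEP} --- a point you anticipated but which needs this same factoriality input to go through.
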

\begin{proof}
 Since the finiteness condition $I_{\ome}(\ZL:\ZR)<\infty$ implies that 
 $\omeL\otimes\omeR$ quasi-contains $\ome$ in the GNS construction
 according to  Lemma~2 of \cite{ARA75unique}, the violation of the split property between $\AlL$ and $\AlR$ 
of $\vpg$  implies $I_{\vpg}(\ZL:\ZR)=\infty$. 
On the other hand,  
$I_{\vpbeta}(\ZL:\ZR)$ remains finite  
for all $\beta>0$ by Theorem~\ref{thm:ONE-FINITE}. 
This proves the assertion.
\end{proof}

Quantum lattice models on $\Z$  that violate   
the split property between $\AlL$ and $\AlR$ 
 are often regarded as  critical  models of  conformal field theory (CFT).
 For rigorous characterizations and explicit examples 
 of finite-range potentials $\pot$ on $\AlZ$ 
that give rise to non-split ground states $\vpg$ on $\AlZ$, we refer to
 \cite{SPLIT2001} and  \cite{KMSW06}. 

\section{Proof of finite mutual entropy between $\ZL$ and $\ZR$}
\label{sec:ONE-PROOF}
In this section, 
we present the proof of Theorem~\ref{thm:ONE-FINITE}
 stated in the preceding section.  
Specifically, we establish the finiteness of  
\begin{align}
\label{eq:MUGEN-LR}
I_\vp(\ZL:\ZR)\equiv S(\vp \mid \vpL \otimes \vpR)
\end{align}
for the quantum spin lattice system on $\Z$, and    
\begin{align}
\label{eq:FER-MUGEN-LR}
I_\vp(\ZL:\ZR)\equiv S(\vp \mid \vpL \gotimes \vpR)
\end{align}
for the fermion lattice system on $\Z$.

Before proceeding, 
we  note that both formulas are well defined. 
Since $\vp$ is a KMS state, it is a faithful state on $\AlZ$.
 Consequently, both $\vpL \otimes \vpR$ 
 and  $\vpL \gotimes \vpR$ are faithful states as well, and hence 
 Araki's relative entropy expressions in \eqref{eq:MUGEN-LR} and
 \eqref{eq:FER-MUGEN-LR} are well defined.

The proof is divided into several steps.
We provide a number of structural results in Subsections~\ref{subsec:AG-LR}, \ref{subsec:PRO-EXT}, and \ref{subsec:DONALD}. Each subsection is given an informative title, as these results are formulated in a way that suggests interest beyond the present proof. 
With these preparations, we complete  
 the proof in Subsection~\ref{subsec:complete}. 
The argument is developed in parallel for the quantum spin and fermion cases, although the fermion case requires certain nontrivial modifications, which we explain in detail.

\subsection{Araki-Gibbs condition between $\ZL$ and $\ZR$}
\label{subsec:AG-LR}
Essentially, we aim to derive a certain independence (a product-like property)  
of the thermal equilibrium state $\vp$ on $\AlZ$
 between the half-sided subsystems $\AlL$ and $\AlR$. 
To this end, we introduce models
 on the separated  systems $\AlL$ and  $\AlR$   
 from the given  finite-range potential $\pot$ on $\AlZ$.

Let $\potL$ be the finite-range potential on $\AlL$ defined by   
\begin{equation}
\label{eq:POTLdef}
\potLK:=\potK\in \AlK,\quad \forall \K \in \FinfL. 
\end{equation}
Similarly, let $\potR$ be the finite-range potential on $\AlR$ defined by 
\begin{equation}
\label{eq:POTRdef}
\potRK:=\potK\in \AlK, \quad \forall \K \in \FinfR.
\end{equation}
Let $\delta_{\potL}$  and $\delta_{\potR}$ be the derivations  
associated with the potentials $\potL$  and $\potR$, respectively, as in \eqref{eq:delpot}.
Define 
$\altpotL:=\exp(it \delta_{\potL})$ ($t\in\R$), the  
$\cstar$-dynamics of $\AlL$ generated by the derivation 
 $\delta_{\potL}$ on  $\AlLloc$. Similarly, define
$\altpotR:=\exp(it \delta_{\potR})$ ($t\in\R$), the  
$\cstar$-dynamics of $\AlR$ generated by the derivation 
 $\delta_{\potR}$ on $\AlRloc$.  
The existence of $\altpotL$ and 
 $\altpotR$ follows from  the finite-range  of $\potL$ and $\potR$. 

By the main result of \cite{ARA75unique, KIS76unique}, there exists a unique $(\altpotL,\,\beta)$-KMS state on $\AlL$, denoted by $\vppotL$. Similarly, $\vppotR$ denotes the unique $(\altpotR,\,\beta)$-KMS state on $\AlR$.

The following proposition establishes a realization of the 
Araki-Gibbs condition 
in the present setting, where $\Z$ is split into $\ZL$ and $\ZR$. 

\begin{pro}[Araki-Gibbs condition between $\ZL$ and $\ZR$]
\label{prop:AG-PROD-LR}
Let $\vp$ denote the unique thermal equilibrium  state of the one-dimensional quantum spin lattice or fermion lattice system $\AlZ$ with respect to the translation-invariant finite-range  potential $\pot$ at $\beta>0$. 
For the quantum spin system on $\Z$, the following product 
 formula holds{\rm{:}} 
\begin{equation}
\label{eq:PERTW-ten}
[\vpbW]=\vppotL \otimes \vppotR.
\end{equation}
For the fermion lattice system on $\Z$, 
the following product formula holds{\rm{:}} 
\begin{equation}
\label{eq:PERTW-FER}
[\vpbW]=\vppotL \gotimes \vppotR.
\end{equation}
\end{pro}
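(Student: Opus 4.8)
The plan is to characterize both sides of \eqref{eq:PERTW-ten} and \eqref{eq:PERTW-FER} as $\beta$-KMS states for one and the same decoupled dynamics, and then to invoke the uniqueness of the equilibrium state in one dimension. First I would treat the left-hand side. Since $\WLR\in\AlZloc$ is a self-adjoint local (hence bounded) element and $\vp$ is a modular state (being KMS), the perturbed state $[\vpbW]$ is well defined and $\altpotmW$ is a genuine $\cstar$-dynamics. Writing $h:=-\WLR$, we have $\vpbW=\vp^{-\beta h}$, so the perturbation correspondence recalled at the end of Subsection~\ref{subsec:KMS} shows that $[\vpbW]$ is a $(\altpotmW,\beta)$-KMS state, where $\altpotmW$ has generator $\delta_\pot^{-\WLR}(A)=\delta_\pot(A)-i[\WLR,\,A]$.

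The second step is to identify $\altpotmW$ with the decoupled dynamics $\altpotL\otimes\altpotR$ (resp.\ its graded version $\altpotL\gotimes\altpotR$). This is a generator computation on the core $\AlZloc$. For $A\in\AlLloc$ with $\mathrm{supp}(A)\subset\ZL$, only the terms $\potK$ meeting $\mathrm{supp}(A)$ contribute to $\delta_\pot(A)$; these split into those with $\K\subset\ZL$, which assemble into $\delta_{\potL}(A)$, and the boundary-crossing ones, which are exactly the summands of $\WLR$ that fail to commute with $A$ (the remaining summands of $\WLR$ commute with $A$ by locality, evenness of $\potK$ handling the fermion case). Hence $\delta_\pot(A)-i[\WLR,A]=\delta_{\potL}(A)$, and symmetrically $\delta_\pot^{-\WLR}=\delta_{\potR}$ on $\AlRloc$. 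Because $\delta_\pot^{-\WLR}$ is a derivation and $\AlLloc,\AlRloc$ generate $\AlZloc$, it agrees on a common core with the generator of the (graded-)product dynamics, so the two $\cstar$-dynamics coincide; equivalently, $\altpotmW$ is the dynamics of the finite-range decoupled potential $\potL\sqcup\potR$ on $\AlZ$ obtained by discarding every boundary-crossing term.

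In the third step, $\vppotL\otimes\vppotR$ (resp.\ $\vppotL\gotimes\vppotR$) is itself a $(\altpotmW,\beta)$-KMS state: the KMS functions for $\vppotL$ under $\altpotL$ and for $\vppotR$ under $\altpotR$ multiply to a KMS function for the product dynamics on the generating elements, and extend by linearity and norm density. Since $\potL\sqcup\potR$ is finite-range (of the same range as $\pot$) on the one-dimensional system $\AlZ$, the uniqueness of the $\beta$-KMS state in one dimension applies (\cite{ARA75unique,KIS76unique} for spins; \cite{RMP-AM} for fermions). As both $[\vpbW]$ and the product state are $(\altpotmW,\beta)$-KMS, they coincide, which is precisely \eqref{eq:PERTW-ten} (resp.\ \eqref{eq:PERTW-FER}).

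For the fermion case the same scheme applies after two checks dictated by the grading. First, since $\pot$ is even, $\altpotL$ commutes with $\THEL$, so by uniqueness $\vppotL$ is even (and likewise $\vppotR$); this guarantees that the graded product extension $\vppotL\gotimes\vppotR$ exists as an even state \cite{AM2003EXT}. Second, the KMS function for the graded product must be assembled using the graded commutation relations \eqref{eq:compact}, and evenness is exactly what makes the crossing signs cancel so that the graded product is genuinely $(\altpotL\gotimes\altpotR,\beta)$-KMS. I expect this verification---that the graded product state satisfies the KMS condition for the graded-decoupled dynamics---to be the main obstacle, together with confirming that the one-dimensional uniqueness theorem carries over verbatim to the even decoupled potential $\potL\sqcup\potR$ on the CAR system.
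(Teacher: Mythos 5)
Your proposal is correct and follows the same overall strategy as the paper's proof: identify $\altpotmW$ with the decoupled dynamics generated by the potential obtained from $\pot$ by deleting all boundary-crossing terms, show that $\vppotL\otimes\vppotR$ (resp.\ $\vppotL\gotimes\vppotR$) is a $\beta$-KMS state for that dynamics, and conclude equality with $[\vpbW]$ by uniqueness. The one genuine difference is where uniqueness comes from. The paper transfers it: since $\vp$ is the unique $(\altpot,\beta)$-KMS state, the one-to-one correspondence between KMS states of $\altpot$ and of its bounded perturbation $\altpotmW$ (recalled at the end of Subsection~\ref{subsec:KMS}) makes $[\vpbW]$ the \emph{unique} $(\altpotmW,\beta)$-KMS state, so no uniqueness statement for the decoupled potential itself is required. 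You instead invoke the one-dimensional uniqueness theorems \cite{ARA75unique,KIS76unique} directly for the decoupled potential $\potLR$ on $\AlZ$; this is valid, but it needs the observation that those theorems do not require translation invariance (the decoupled potential is not translation invariant), an input the paper needs only for the half-lattice states $\vppotL$ and $\vppotR$, whereas your route needs it additionally on the full lattice. In exchange, your argument is locally more self-contained and makes explicit two points the paper leaves implicit: the generator computation on the core $\AlZloc$ identifying $\delta_\pot^{-\WLR}$ with the decoupled derivation, and the evenness of $\vppotL$, $\vppotR$ (via uniqueness and evenness of the half-sided dynamics), which is what guarantees that the graded product extension exists. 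Finally, the ``main obstacle'' you single out --- that the graded product of KMS states is KMS for the graded product dynamics --- is precisely what the paper isolates and proves separately as Proposition~\ref{prop:EXT-KMS}, by the sign-cancellation argument you sketch.
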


\begin{proof}
First, we  verify the product formula for the perturbed dynamics of $\altpotmW$.
For the quantum spin system on $\Z$, 
\begin{equation}
\label{eq:PRO-DYN-QS}
\altpotmW=\altpotL \otimes \altpotR \in {\rm{Aut}}(\AlZ)
\quad (t\in \R),
\end{equation}
and for the fermion lattice system on $\Z$, 
\begin{equation}
\label{eq:PRO-DYN-FER}
\altpotmW=\altpotL \gotimes \altpotR \in {\rm{Aut}}(\AlZ)
\quad (t\in \R).
\end{equation}
We readily see that the above equalities as $\cstar$-dynamics on $\AlZ$ hold, since  
 the infinitesimal generators of 
$\altpotmW$ and $\altpotL \otimes \altpotR$ (resp. $\altpotL \gotimes \altpotR$)
 are both associated with the same (decoupled) potential $\potLR$ on $\AlZ$ defined by 
\begin{align}
\label{eq:POTLRdef}
\potLR (\K)&=\potK\in \AlK \quad \text{if}\ \K \in \FinfL\ \text{or} \ \K \in \FinfR,\nonum\\
\potLR (\K)&=0 \quad \text{otherwise}.
\end{align}
Namely, $\potLR$ is obtained from $\pot$ 
 by removing  all interactions  between $\ZL$ and $\ZR$.
Note that  no distinction arises in the fermion system in the above argument 
due to the evenness of the potential $\pot$. 

Since $\vppotL$ is the (unique) $(\altpotL,\,\beta)$-KMS state on  
$\AlL$, and  $\vppotR$ is the (unique) $(\altpotR,\,\beta)$-KMS state on $\AlR$, 
the product state $\vppotL \otimes \vppotR$ 
(resp. $\vppotL \gotimes \vppotR$) gives a  KMS state with respect to 
$\altpotL \otimes \altpotR$ (resp. $\altpotL \gotimes \altpotR$) at inverse temperature 
$\beta$ by Proposition~\ref{prop:EXT-AUT}.

Since $\vp$ is the unique $(\altpot,\,\beta)$-KMS state on $\AlZ$, 
its  perturbed state  $[\vpbW]$ corresponds to  the unique   
$\left(\altpotmW,\,\beta \right)$-KMS state on $\AlZ$ by the fundamental result on the perturbation of $\cstar$-dynamics and KMS states stated in  Subsection~\ref{subsec:KMS}. Thus, by the uniqueness of the KMS state with respect to 
 the same $\cstar$-dynamics, the product state 
$\vppotL \otimes \vppotR$ (resp. $\vppotL \gotimes \vppotR$)
 coincides with the perturbed KMS state $[\vpbW]$.
\end{proof}

\begin{remark}
\label{rem:Araki-Gibbs}
We shall state  some reflections on the Araki-Gibbs condition, which plays a pivotal role in this paper.
The term  ``Araki-Gibbs condition"  used in \cite{BRA2} 
does not actually stand for  a joint work  between  Huzihiro Araki and Josiah Willard Gibbs, unfortunately. 
Although the Araki-Gibbs condition appears to be akin to 
 the Dobrushin-Lanford-Ruelle (DLR) condition
 characterizing Gibbs measures in classical systems \cite{ARA-ION}, 
according to Araki, it was devised as an intermediate notion  
relating  the KMS condition to the variational principle.
 Among the consequences derived from the KMS condition, 
one  example is a  no-go theorem for quantum time crystals 
in thermal equilibrium, 
which was  presented in \cite{ARA68}, long before the proposal of quantum time crystals. As a related issue,  we  shall mention another work of Araki 
\cite{ARA64}, which forbids not only temporal (obviously) but also 
spatial (rather non-trivial) crystalline order for vacuum states in QFT; 
see \cite{MORIYA24}.
\end{remark}

\subsection{Product extension of states and automorphisms on disjoint regions}
\label{subsec:PRO-EXT}
In this subsection,  we  provide  some general 
 results on product extensions of automorphisms and 
states in disjoint subsystems, both for the quantum spin lattice system
and  for the fermion lattice system.
 These structural results are in fact  valid for  
 general boson and  fermion quasi-local $\cstar$-systems.
\begin{pro}[Product extension of automorphisms]
\label{prop:EXT-AUT}
Let $\alL$ denote a  $*$-automorphism of $\AlL$, and let $\alR$ denote a  $*$-automorphism  of $\AlR$.
For the quantum spin lattice system, 
there  exists a  product extension of $\alL$ and $\alR$ as
 a $*$-automorphism on $\AlZ${\rm{:}} 
\begin{equation} 
\label{eq:alpha-ext-tensor}
\alL\otimes \alR \in {\rm{Aut}}(\AlZ).
\end{equation} 
For the fermion lattice system, assume that each of   
 $\alL$ and  $\alR$ preserves the fermion grading on its respective system,    
\begin{align} 
\label{eq:alLR-grad-PRESERVE}
\alL \THEL=  \THEL \alL, \quad \alR \THER=  \THER \alR.
\end{align} 
Then, there  exists a
product extension of $\alL$ and $\alR$ as a $*$-automorphism on $\AlZ${\rm{:}}
\begin{equation} 
\label{eq:alpha-ext-FER}
\alL \gotimes \alR \in {\rm{Aut}}(\AlZ) 
\end{equation} 
such that 
\begin{align} 
\label{eq:g-AUT-action}
\alL \gotimes \alR\left(\sum_{k} A_k B_k\right)=  \sum_{k} \alL(A_k) \alR(B_k)
\end{align} 
for any finite sum $\sum_{k} A_k B_k \in \AlZ $ with  $A_k\in \AlL$ and
 $B_k\in \AlR$.
\end{pro}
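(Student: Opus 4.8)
The plan is to treat the two cases separately, since the spin case is routine while the fermion case carries all the content. For the quantum spin lattice system, $\AlZ=\AlL\otimes\AlR$ by the tensor-product structure \eqref{eq:TENSOR-IJ}, and the algebraic tensor product $\alL\otimes\alR$ is a $*$-automorphism of the dense subalgebra $\AlL\odot\AlR$ with inverse $\alL^{-1}\otimes\alR^{-1}$; by functoriality of the $C^\ast$-tensor product (unique by nuclearity of these AF algebras) it extends to a $*$-automorphism of $\AlZ$, which is \eqref{eq:alpha-ext-tensor}. This is standard (see \cite{BRA2}), so I would dispatch it in a line and concentrate on the fermionic statement.

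For the fermion lattice system, I would realize $\AlZ$ as the CAR (graded) tensor product of $\AlL$ and $\AlR$: concretely, $\AlZ$ is generated by $\AlL$ and $\AlR$ subject to the graded commutation relations \eqref{eq:glocality}, and the multiplication map carries the algebraic tensor product $\AlL\odot\AlR$ bijectively onto the dense $*$-subalgebra $\mathcal{D}$ of finite sums $\sum_k A_kB_k$ with $A_k\in\AlL$, $B_k\in\AlR$. I would then define $\gamma:=\alL\gotimes\alR$ on $\mathcal{D}$ by the prescribed formula \eqref{eq:g-AUT-action}. Well-definedness is then immediate from this bijection, since \eqref{eq:g-AUT-action} is nothing but $\alL\odot\alR$ on the tensor product followed by multiplication, and $\alL\odot\alR$ is well defined on $\AlL\odot\AlR$.

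The core step is to check that $\gamma$ is a $*$-homomorphism, and here the grading hypothesis \eqref{eq:alLR-grad-PRESERVE} enters decisively. Decomposing into homogeneous components via $\AlL=\AlLe\oplus\AlLo$ and $\AlR=\AlRe\oplus\AlRo$, take homogeneous $A,A'\in\AlL$ and $B,B'\in\AlR$. Commuting $B$ past $A'$ by \eqref{eq:compact} produces the sign $\theta(B,A')$, so that $\gamma\bigl((AB)(A'B')\bigr)=\theta(B,A')\,\alL(A)\alL(A')\alR(B)\alR(B')$. On the other hand, in $\gamma(AB)\gamma(A'B')=\alL(A)\alR(B)\alL(A')\alR(B')$ one commutes $\alR(B)$ past $\alL(A')$ via \eqref{eq:compact}, picking up $\theta(\alR(B),\alL(A'))$; because \eqref{eq:alLR-grad-PRESERVE} forces $\alR(B)$ and $\alL(A')$ to carry the same parities as $B$ and $A'$, this sign again equals $\theta(B,A')$. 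The two signs coincide, so $\gamma$ is multiplicative, and an identical parity-bookkeeping gives $\gamma(X^\ast)=\gamma(X)^\ast$. Since $\alL^{-1}$ and $\alR^{-1}$ again satisfy \eqref{eq:alLR-grad-PRESERVE}, the same construction furnishes a two-sided inverse, so $\gamma$ is a $*$-automorphism of the $*$-algebra $\mathcal{D}$.

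The main obstacle is the final analytic step: extending $\gamma$ from $\mathcal{D}$ to an isometric automorphism of the $C^\ast$-completion $\AlZ$. Unlike the usual arguments for maps on quasi-local algebras, one cannot reduce to finite-dimensional pieces, because the half-sided automorphisms $\alL,\alR$ need not preserve the finite local algebras and may spread a local operator over an infinite region. I would resolve this by invoking the graded $C^\ast$-tensor-product structure directly: the CAR tensor product $\AlL\gotimes\AlR$ carries a distinguished $C^\ast$-cross norm (the framework underlying the product-state extensions of \cite{AM2003EXT}), with respect to which a pair of grading-preserving $*$-automorphisms induces a norm-preserving automorphism of the completion; equivalently, one verifies that $\gamma$ preserves this cross norm on $\mathcal{D}$ and extends by continuity, the grading-preservation hypothesis being exactly what guarantees compatibility with the CAR cross norm. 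Combining this with the mutually inverse pair $(\gamma,\gamma^{-1})$ yields $\alL\gotimes\alR\in{\rm{Aut}}(\AlZ)$ satisfying \eqref{eq:g-AUT-action}, completing the fermionic case.
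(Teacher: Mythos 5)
Your proof is correct, and its algebraic core coincides with the paper's: the same parity bookkeeping, with grading preservation forcing $\theta(\alR(B),\alL(A'))=\theta(B,A')$ so that the graded signs cancel. The organization differs in two ways. First, the paper never defines the joint map directly: it introduces two \emph{one-sided} extensions $\tilalL\bigl(\sum_k A_kB_k\bigr):=\sum_k\alL(A_k)B_k$ and $\tilalR\bigl(\sum_k A_kB_k\bigr):=\sum_k A_k\alR(B_k)$, proves each is a $*$-automorphism of $\AlZ$ (by the same sign cancellation, with explicit inverses built from $\alL^{-1},\alR^{-1}$), and then sets $\alL\gotimes\alR:=\tilalL\circ\tilalR=\tilalR\circ\tilalL$, the product formula \eqref{eq:g-AUT-action} falling out of the composition; this factorization lets each verification track a single sign at a time. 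You instead define $\gamma$ in one stroke on the dense $*$-subalgebra $\mathcal{D}$, which yields \eqref{eq:g-AUT-action} immediately. Second, you explicitly isolate two analytic points that the paper passes over silently: well-definedness of the formula on finite sums (injectivity of the multiplication map $\AlL\odot\AlR\to\AlZ$) and the extension from $\mathcal{D}$ to an isometric automorphism of the completion, which is indeed not automatic for a $*$-homomorphism defined only on a dense $*$-subalgebra. Your resolution of the latter (uniqueness/compatibility of the graded CAR cross norm, in the framework of \cite{AM2003EXT}) is invoked rather than proved, so it remains a citation-level step; if you want it self-contained, the universal property of the CAR algebra closes it cheaply: the assignment $c_i\mapsto\alL(c_i)$ for $i\in\ZL$ and $c_i\mapsto\alR(c_i)$ for $i\in\ZR$ produces a family satisfying the CAR --- the cross anticommutation relations hold precisely because grading preservation keeps these images odd, so \eqref{eq:glocality} applies --- hence it extends to a $*$-endomorphism of $\AlZ$, with inverse built from $\alL^{-1},\alR^{-1}$. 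In short: same mechanism, different packaging; the paper's factorization buys lighter computations, while your version is more explicit about the completion step that the paper (and its one-sided maps equally need but) leaves implicit.
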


\begin{proof}
For the quantum spin lattice system, 
 the  total system $\AlZ$ is given as 
 the  unique tensor of the nuclear $\cstar$-algebras $\AlL$ and $\AlR$, 
namely,  $\AlZ=\AlL \otimes \AlR$. 
It is well known that  there exists a unique product extension of  
  two arbitrary  $\ast$-automorphisms on 
disjoint (nuclear) $\cstar$-systems $\AlL$ and $\AlR$,  
 as a $\ast$-automorphism on $\AlZ$;  see II.9.6.1 of \cite{Blackadar}.

For  the fermion lattice  system $\AlZ$, 
the situation  becomes  complicated due to the grading structure as follows.
Take an arbitrary element  
$\sum_{k} A_k B_k \in \AlZ $, where each $A_k\in \AlL$ and $B_k\in \AlR$. 
Define
\begin{align} 
\label{eq:DEFtilal}
\tilalL\left(\sum_{k} A_k B_k\right):=  \sum_{k} \alL(A_k) B_k, \nonum\\ 
\tilalR\left(\sum_{k} A_k B_k\right):=  \sum_{k} A_k \alR(B_k).
\end{align} 
By the  defining formula,  $\tilalL$ and 
$\tilalR$ are linear  maps from $\AlZ$  onto $\AlZ$. 

We now verify  that the above $\tilalL$ and $\tilalR$ 
actually give well-defined  $\ast$-isomorphisms of $\AlZ$.
To this end,  take  arbitrary elements  $E, F \in \AlZ$.
In order to examine the effect of grading,
with no loss of generality,  we assume  the following  forms
\begin{align*} 
E=\sum_{k} A_{k} B_{k}\in \AlZ,\quad 
F&=\sum_{l} C_{l} D_{l}\in \AlZ, 
\end{align*} 
where 
\begin{align*} 
A_{k}, C_{l} \in \AlLe\ {\text{or}}\ \in \AlLo,\quad 
B_{k}, D_{l} \in \AlRe\ {\text{or}}\ \in \AlRo.
\end{align*} 
Due to the graded locality \eqref{eq:glocality}
\begin{align*}
EF&=  \left(\sum_{k} A_{k} B_{k} \right)  
\left(\sum_{l} C_{l} D_{l} \right)   
= \sum_{k}\sum_{l}  A_{k} B_{k} C_{l} D_{l}   
= \sum_{k}\sum_{l}  A_{k} (B_{k} C_{l}) D_{l}  \nonum\\
&= \sum_{k}\sum_{l}  A_{k} \bigl(\theta(B_{k},  C_{l})  C_{l} B_{k}\bigr) D_{l} = \sum_{k}\sum_{l} \theta(B_{k},  C_{l})  (A_{k}C_{l}) (B_{k} D_{l}),  
\end{align*}
where $\theta$ takes $\pm 1$
as  defined in \eqref{eq:theta}.
As $A_{k}C_{l}\in \AlL$  and $B_{k} D_{l}\in \AlR$,  
we compute 
\begin{align*}
\tilalL(EF)&=\sum_{k}\sum_{l} \theta(B_{k}, C_{l}) \alL(A_{k}C_{l}) B_{k} D_{l} =\sum_{k}\sum_{l}\theta(B_{k}, C_{l}) \alL(A_{k}) \bigl(\alL(C_{l}) B_{k}\bigr)  D_{l}
\nonum\\
&=
\sum_{k}\sum_{l}\theta(B_{k}, C_{l}) \alL(A_{k}) 
\Bigl( \theta(\alL(C_{l}), B_{k}) B_{k} \alL(C_{l})\Bigr)  D_{l}\nonum\\
&=
\sum_{k}\sum_{l}\theta(B_{k}, C_{l})
\theta(\alL(C_{l}), B_{k})  \bigl(\alL(A_{k})  B_{k}\bigr) \bigl(\alL(C_{l})  D_{l}\bigr).
\end{align*}
The term $\theta(B_{k}, C_{l})
\theta(\alL(C_{l}), B_{k})$ in  the final line 
 of the above is $1$, because $\alL$ preserves the grading $\THEL$, 
 the even-oddness of $\alL(C_{l})$ is same as that of $C_{l}$, and hence  
$\theta(\alL(C_{l}), B_{k})=\theta(C_{l}, B_{k})=\theta(B_{k}, C_{l})$. 
 Thus, 
\begin{align}
\label{eq:alLEF}
\tilalL(EF)&=
\sum_{k}\sum_{l}  \left(\alL(A_{k})  B_{k}\right) \left(\alL(C_{l})  D_{l}\right)\nonum\\
&=\left( \sum_{k}  \alL(A_{k}) B_{k} \right)
\left( \sum_{l} \alL (C_{l}) D_{l} \right)\nonum\\
&=\tilalL(E)
\tilalL(F)
\end{align}
Hence, we conclude that $\tilalL$ is a homomorphism of $\AlZ$.
Next,  we observe  that 
\begin{align*} 
E^{\ast}&=\Bigl( \sum_{k} A_{k} B_{k}\Bigr)^{\ast}
 = \sum_{k}  B_{k}^{\ast} A_{k}^{\ast} \nonum\\
&=\sum_{k} \theta(B_{k}^{\ast}, A_{k}^{\ast})  A_{k}^{\ast}  B_{k}^{\ast}
 =\sum_{k} \theta(A_{k}, B_{k})  A_{k}^{\ast}  B_{k}^{\ast},  
\end{align*} 
where we have used the fact  that the $\ast$-operation preserves the grading.
We compute   
\begin{align*}
\tilalL(E^{\ast})&=
\sum_{k} \theta(A_{k}, B_{k})  \alL(A_{k}^{\ast})  B_{k}^{\ast}
=\sum_{k} \theta(A_{k}, B_{k})  \alL(A_{k})^{\ast}  B_{k}^{\ast}
\nonum\\
&=\sum_{k} \theta(A_{k}, B_{k})  \left(B_{k}\alL(A_{k})\right)^{\ast} 
=\sum_{k} \theta(A_{k}, B_{k}) \overline{ \theta(B_{k}, \alL(A_{k}))}
 \left(    \alL(A_{k}) B_{k}\right)^{\ast} 
\nonum\\
&=\sum_{k} \theta(A_{k}, B_{k}) \theta(B_{k}, \alL(A_{k}))
 \left(    \alL(A_{k}) B_{k}\right)^{\ast}
=\sum_{k} \theta(A_{k}, B_{k}) \theta(B_{k}, A_{k})
 \left(    \alL(A_{k}) B_{k}\right)^{\ast}
\nonum\\
&=\sum_{k} 
 \left(  \alL(A_{k}) B_{k}\right)^{\ast}
= 
 \Bigl(\sum_{k}  \alL(A_{k}) B_{k}\Bigr)^{\ast}
={\tilalL(E)}^{\ast}
\end{align*}
Thus, $\tilalL$ preserves the  $\ast$-operation.
We now conclude  that  $\tilalL$ is a $\ast$-automorphism of $\AlZ$, 
 since it is surjective by definition.
Its inverse automorphism is  concretely given by 
\begin{align} 
\label{eq:GYAKU-L}
\tilalL^{-1}(\sum_{k} A_k B_k):=  \sum_{k} \alL^{-1}(A_k) B_k.
\end{align}  
In a completely analogous manner, 
we can see that $\tilalR$ is also a $\ast$-automorphism. 
 Its inverse  automorphism is concretely given by
\begin{align} 
\label{eq:GYAKU-R}
\tilalR^{-1}\left(\sum_{k} A_k B_k\right):=  \sum_{k} A_k \alR^{-1}(B_k).
\end{align}  

Now we  define the following automorphism 
\begin{equation} 
\label{eq:g-AUT-comp}
\alL \gotimes \alR:=\tilalL\circ \tilalR
(=\tilalR\circ \tilalL)  \in {\rm{Aut}}(\AlZ) 
\end{equation} 
as the composition of the  commuting
$\ast$-automorphisms $\tilalL$ and $\tilalR$ on $\AlZ$. 
By \eqref{eq:DEFtilal}, it satisfies 
 the desired product formula  \eqref{eq:g-AUT-action}.  
\end{proof}

\begin{remark}
\label{rem:FER-EXT-AUTO}
The fermion case in Proposition~\ref{prop:EXT-AUT}
may be regarded as 
 ``Joint extension of automorphisms  of subsystems for a CAR system," echoing the title of \cite{AM2003EXT}.
The crucial difference between here and \cite{AM2003EXT} is that 
 both automorphisms must be even, 
 whereas one of the prepared  states  can be non-even
 to construct their product extension. 
This stricter requirement for automorphisms can be understood as follows.
If either $\alL$ on $\AlL$ or $\alR$ on $\AlR$ 
 does not preserve the fermion grading, then its extension to $\AlZ$ as in \eqref{eq:DEFtilal} is invalid, and the product extension as in \eqref{eq:alpha-ext-FER} cannot exist. 
\end{remark}

The following proposition concerns the product extension of KMS states 
 prepared on disjoint regions.
The corresponding statement for  tensor product systems (such as the 
 quantum spin lattice system under consideration) 
is well known.  
In mathematical physics, it has been regarded as obvious, as seen for example 
 in \cite{PWKMS} and many others. 
 Hence, in the   proof below  we  focus on the fermion case only.
  
\begin{pro}[Product extension of KMS states]
\label{prop:EXT-KMS}
Let $\altL$ \textup{($t\in\R$)} be a $\cstar$-dynamics of $\AlL$, and let $\altR$ \textup{($t\in\R$)} be a $\cstar$-dynamics of $\AlR$.
Suppose that $\psiL$ is an $(\altL,\,\beta)$-KMS state on $\AlL$, 
 and that $\psiR$ is an $(\altR,\,\beta)$-KMS state on $\AlR$.
For the quantum spin lattice system, 
the product extension $\psiL\otimes \psiR$ of $\psiL$  and $\psiR$  
yields an $(\altL\otimes\altR ,\,\beta)$-KMS state on $\AlZ$.
For the fermion lattice system, 
assume that  each of $\altL$ and $\altR$ preserves the fermion grading on its respective system, that is,  
\begin{equation} 
\label{eq:altLR-PRESERVE-G}
\altL \circ \THEL= \THEL\circ  \altL
\ \text{and} \quad \altR\circ  \THER= \THER \circ \altR\ (t\in \R),
\end{equation}
and at least one of  $\psiL$ and $\psiR$ (possibly both) is even with respect to the fermion grading on its system,    
\begin{equation} 
\label{eq:psi-sorezore-EVEN}
\psiL \circ \THEL=\psiL \ {\text{or (possibly both)}} \  
\psiR \circ \THER=\psiR.
\end{equation}
Then, the product extension $\psiL \gotimes \psiR$ of $\psiL$ and $\psiR$ 
yields an $(\altL\gotimes\altR ,\,\beta)$-KMS state on $\AlZ$. 
\end{pro}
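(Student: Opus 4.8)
The plan is to verify the KMS functional equation of Definition~\ref{df:KMS} directly, on a dense set of analytic elements, by exhibiting the required analytic function as a product of the two KMS functions coming from $\psiL$ and $\psiR$ individually. The quantum-spin assertion is the classical fact that a tensor product of KMS states over commuting subsystems is KMS for the tensor-product dynamics, so I would concentrate on the fermion case. First I would record the two existence inputs: the product state $\psiL\gotimes\psiR$ exists precisely because at least one factor is even \cite{AM2003EXT}, and the product dynamics $\altL\gotimes\altR$ exists, with $(\altL\gotimes\altR)(\sum_k A_kB_k)=\sum_k\altL(A_k)\altR(B_k)$, because both $\altL$ and $\altR$ preserve the grading (Proposition~\ref{prop:EXT-AUT} and \eqref{eq:g-AUT-action}). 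Since the $\altL$- and $\altR$-analytic elements are norm dense and the KMS condition may be tested on a norm-dense, dynamics-invariant family, it suffices to treat $X=AB$ and $Y=CD$ with $A,C\in\AlL$ and $B,D\in\AlR$ homogeneous and analytic, and afterwards to extend by bilinearity in the finite sums $\sum_kA_kB_k$ and by density.

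Next, writing $\psi:=\psiL\gotimes\psiR$, I would compute the two boundary correlations for real $t$ using the product action of the dynamics, the graded commutation relation \eqref{eq:compact} to interchange an $\AlR$-factor with an $\AlL$-factor, and the product rule $\psi(PQ)=\psiL(P)\psiR(Q)$ for $P\in\AlL$, $Q\in\AlR$:
\begin{align*}
\psi\bigl(X(\altL\gotimes\altR)(Y)\bigr)
&=\theta(B,C)\,\psiL\bigl(A\,\altL(C)\bigr)\,\psiR\bigl(B\,\altR(D)\bigr),\\
\psi\bigl((\altL\gotimes\altR)(Y)\,X\bigr)
&=\theta(D,A)\,\psiL\bigl(\altL(C)\,A\bigr)\,\psiR\bigl(\altR(D)\,B\bigr),
\end{align*}
where $\theta(\cdot,\cdot)=\pm1$ is the sign of \eqref{eq:theta}. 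Denoting by $F^{\LL}_{A,C}$ and $F^{\RR}_{B,D}$ the KMS functions supplied by the $(\altL,\beta)$-KMS property of $\psiL$ and the $(\altR,\beta)$-KMS property of $\psiR$---each bounded and continuous on $0\le\operatorname{Im}z\le\beta$ and holomorphic in the interior---I would set $F_{X,Y}(z):=\theta(B,C)\,F^{\LL}_{A,C}(z)\,F^{\RR}_{B,D}(z)$. This $F_{X,Y}$ has the same strip regularity, being a product of two such functions, and it already reproduces $\psi\bigl(X(\altL\gotimes\altR)(Y)\bigr)$ on the line $\operatorname{Im}z=0$.

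The hard part will be the boundary at $\operatorname{Im}z=\beta$: I must show $F_{X,Y}(t+i\beta)=\psi\bigl((\altL\gotimes\altR)(Y)X\bigr)$, which amounts to the sign identity $\theta(B,C)=\theta(D,A)$ on every term that does not already vanish. Reconciling these two grading signs is exactly where the evenness hypothesis \eqref{eq:psi-sorezore-EVEN} enters and is the step I expect to be the main obstacle. The mechanism I would use is that evenness of a factor forces its autocorrelator, say $\psiL(A\,\altL(C))$, to vanish unless $A$ and $\altL(C)$ carry the same parity; since grading preservation \eqref{eq:altLR-PRESERVE-G} keeps $\altL(C)$ and $\altR(D)$ in the grading classes of $C$ and $D$, this parity selection on the surviving terms pins $\theta(B,C)$ and $\theta(D,A)$ to the same value. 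Once this reconciliation is carried out term by term, $F_{X,Y}$ is the KMS function for the pair $(X,Y)$; extending by linearity and the standard density and closure argument for the KMS relation then yields that $\psiL\gotimes\psiR$ is $(\altL\gotimes\altR,\beta)$-KMS on $\AlZ$.
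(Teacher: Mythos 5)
Your proposal follows the paper's own route almost step for step: reduction to homogeneous monomials $E=AB$, $F=CD$, the candidate function $F^{\LL}_{A,C}(z)F^{\RR}_{B,D}(z)$, and a reconciliation of the two signs $\theta(B,C)$ and $\theta(A,D)$. The gap is precisely in your reconciliation mechanism. You claim that the parity selection enforced by the \emph{single} even factor guaranteed by \eqref{eq:psi-sorezore-EVEN} ``pins $\theta(B,C)$ and $\theta(D,A)$ to the same value'' on surviving terms. That is false. Take $\psiL$ even and $\psiR$ non-even (which your hypothesis permits), and the configuration $A,C\in \AlLo$, $B\in \AlRe$, $D\in \AlRo$. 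Then $A\altL(C)$ is \emph{even}, so the left correlator $\psiL\bigl(A\altL(C)\bigr)$ survives your parity selection, while $\theta(B,C)=+1$ and $\theta(A,D)=-1$. The only way the KMS identity can hold for this pair is that the right correlator $\psiR\bigl(B\altR(D)\bigr)$ vanishes; but $B\altR(D)$ is odd, so that vanishing is exactly evenness of $\psiR$, which you do not have. Hence your argument, as written, proves the statement only when \emph{both} factors are even (in which case the selection $A\sim C$ and $B\sim D$ on both sides does force the signs to agree). For fairness: the paper's proof reconciles the signs by showing that all sign-mismatched terms have vanishing boundary correlators because $AC$ or $BD$ is odd, and that inference also silently uses evenness of the factor carrying the odd element, so it too is complete only in the both-even case.

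The gap is not cosmetic; with a genuinely non-even KMS factor the step would fail, not merely be unjustified. Take $B=\idenA$, $C=A^{*}$ with $A\in \AlLo$, and $D\in \AlRo$. Using $\altR$-invariance of $\psiR$ and the graded commutation rule,
\begin{align*}
\psiL\gotimes\psiR\bigl(E\,(\altL\gotimes\altR)(F)\bigr) &= \psiL\bigl(A\altL(A^{*})\bigr)\,\psiR(D),\\
\psiL\gotimes\psiR\bigl((\altL\gotimes\altR)(F)\,E\bigr) &= -\,\psiL\bigl(\altL(A^{*})A\bigr)\,\psiR(D).
\end{align*}
A bounded continuous function on the closed strip, holomorphic inside, is determined by its values on $\R$ (Schwarz reflection), so the only candidate KMS function here is $F^{\LL}_{A,A^{*}}(z)\,\psiR(D)$, whose upper boundary value is $+\psiL\bigl(\altL(A^{*})A\bigr)\psiR(D)$. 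The KMS relation therefore forces $\psiL\bigl(\altL(A^{*})A\bigr)\psiR(D)=0$ for all $t$; if $\psiR(D)\neq 0$ for some odd $D$, then at $t=0$ we get $\psiL(A^{*}A)=0$ for every odd $A$, and summing over $A=\ci$ and $A=\cicr$ contradicts $\psiL(\cicr\ci+\ci\cicr)=\psiL(\idenA)=1$. So the product state is KMS only if both factors are even: one-sided evenness cannot carry the proof unless one additionally invokes the (nontrivial, cf.\ Remark~\ref{rem:evenness}) automatic evenness of KMS states for grading-preserving dynamics. In the paper's actual application, Proposition~\ref{prop:AG-PROD-LR}, the factors $\vppotL$ and $\vppotR$ are \emph{unique} KMS states of even dynamics, hence even (uniqueness plus $\Theta$-covariance of the KMS set), so nothing downstream is affected; but you should either assume both $\psiL$ and $\psiR$ even or supply that evenness argument before the sign reconciliation.
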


\begin{proof}
For  the quantum spin lattice system, 
this is already well known, see
  Proposition 13.1.12 of \cite{KAD2} and  Proposition 4.3 of  \cite{TAKE2book}.

For the fermion lattice system, owing to the evenness of both $\altL$ and $\altR$ \eqref{eq:altLR-PRESERVE-G}, 
a method  analogous to that used for tensor product systems applies, subject to some modifications to be detailed below.
First, by following  the  argument in Lemma 9.2.17 and Proposition 13.1.12 of 
  \cite{KAD2}, 
 it is enough   to verify  the KMS relation  only for 
 pairs of  monomial elements of the  form 
\begin{align*} 
E= A B\in \AlZ,\quad 
F= C D\in \AlZ, 
\end{align*} 
where 
\begin{align*} 
A, C \in \AlLe\ {\text{or}}\ \in \AlLo,\quad 
B, D \in \AlRe\ {\text{or}}\ \in \AlRo.
\end{align*} 

By the  KMS condition assumed on  the left-sided system $\AlL$, 
there exists a complex-valued function $F^{\LL}_{A, C}(z)$
 of $z\in \C$, which is continuous and bounded on the closed strip
  $0 \le \operatorname{Im} z \le \beta$, holomorphic
 on its interior, and satisfies    
\begin{equation}
\label{eq:KMS-L}
F^{\LL}_{A, C}(t)=\psiL \bigl(A \altL (C) \bigr),\quad   
F^{\LL}_{A, C}(t+i \beta)=\psiL \bigl(\altL(C)A \bigr), \quad (t\in \R).
\end{equation}
Similarly, by the 
  KMS condition assumed on  the right-sided system $\AlR$,
 there exists a complex-valued function $F^{\RR}_{B, D}(z)$
 of $z\in \C$, which is continuous and bounded on the closed strip
  $0 \le \operatorname{Im} z \le \beta$, holomorphic
 on its interior, and satisfies 
\begin{equation}
\label{eq:KMS-R}
F^{\RR}_{B, D}(t)=\psiR \bigl(B \altR (D) \bigr),\quad   
F^{\RR}_{B, D}(t+i \beta)=\psiR \bigl(\altR(D)B \bigr), \quad (t\in \R).
\end{equation}

 From  \eqref{eq:g-AUT-action} and \eqref{eq:altLR-PRESERVE-G}, 
by  some direct computation, we have 
\begin{align}
\label{eq:E-ALTF}
E  \altL\gotimes\altR (F)&=AB \altL \gotimes \altR (CD)\nonum \\
&=AB  \altL(C) \altR (D)
=
A\left(B \altL(C)\right) \altR (D)\nonum \\
&= A \left(\theta(B, \altL(C)) \altL(C)  B\right) \altR (D)\nonum \\
&=\theta(B, C)\left( A \altL(C)\right) \left( B \altR (D) \right), 
\end{align}
and
\begin{align}
\label{eq:ALTF-E}
 \altL\gotimes\altR (F) E&= \altL \gotimes \altR (CD) AB\nonum \\
&= \altL(C) \altR(D) AB
=\altR (C)
\left(\altR(D)A\right)B\nonum \\
&= \altR (C)  \left(\theta( \altR (D), A) A\altR(D)  \right) B\nonum \\
&= \theta(A, D)\left(\altL(C)A\right) \left(\altR(D)B\right).
\end{align}
Thus, from the product property of the fermionic product states \cite{AM2003EXT} and \eqref{eq:E-ALTF}, we have
\begin{align}
\label{eq:thetaBC}
\psiL\gotimes\psiR \bigl(E \altL\gotimes\altR (F) \bigr)
&=\theta(B, C)
\psiL \bigl(A \altL (C) \bigr)
\psiR \bigl(B \altR (D) \bigr),
\end{align}
and from \eqref{eq:ALTF-E},  
\begin{align}
\label{eq:thetaAD}
\psiL\gotimes \psiR \bigl( \altL\gotimes\altR (F) E\bigr)
&=
\theta(A, D)\psiL \left(  \altL(C)A\right) \psiR \left(   \altR (D)B \right).
\end{align}
By combining \eqref{eq:KMS-L}, \eqref{eq:KMS-R},
 \eqref{eq:thetaBC} and \eqref{eq:thetaAD}, we obtain
\begin{align}
\label{eq:LLRR-1}
\psiL\gotimes\psiR \bigl(E \altL\gotimes\altR(F) \bigr)
=\theta(B, C) F^{\LL}_{A, C}(t)F^{\RR}_{B, D}(t), 
\end{align}
and
\begin{align}
\label{eq:LLRR-2}
\psiL\gotimes\psiR \bigl( \altL\gotimes\altR (F) E\bigr)
=\theta(A, D)F_{A, C}^{\LL}(t+i \beta)F_{B, D}^{\RR}(t+i \beta).
\end{align}

We aim to relate \eqref{eq:LLRR-1}
 and \eqref{eq:LLRR-2} by the KMS condition by  
removing  the nuisance factors 
 $\theta(A, D)$ and $\theta (B, C)$.
This can be carried out   as follows.   
If $\theta(A, D)\ne \theta (B, C)$, then  
$\theta(A, D) \theta (B, C)=-1$, and 
the possible two cases are as follows: 

\bigskip
 
\noindent$\bullet$ Both $A$ and $D$ are odd, and either $B$ or $C$ (or both) is even,

or

\noindent$\bullet$ both $B$ and $C$ are odd, and either $A$ or $D$ (or both) is even.

\bigskip

\noindent In any such case, either $AC\in \AlL$  or $BD\in \AlR$, or both, 
 must  be  odd. Hence,  due to \eqref{eq:altLR-PRESERVE-G},   
 either $A \altL(C)$ (and $\altL(C)A$) or  $B\altR (D)$ (and $\altR(D)B$),  
 or both, must  be  odd.
 Accordingly, the expectation values  of  
\eqref{eq:LLRR-1} and \eqref{eq:LLRR-2} both 
vanish for all $t\in \R$.
Thus, it suffices to consider the following alternative cases: 
$\theta(A, D)= \theta (B, C)=1$, and 
$\theta(A, D)= \theta (B, C)=-1$. 
For the former case, set 
\begin{align}
\label{eq:FZEN-I}
F^{\LL, \RR}_{E, F}(z):=F^{\LL}_{A, C}(z)F^{\RR}_{B, D}(z), \quad z\in \C,  
\end{align}
and for  the latter case, set
\begin{align}
\label{eq:FZEN-minus}
F^{\LL, \RR}_{E, F}(z):=-F^{\LL}_{A, C}(z)F^{\RR}_{B, D}(z), \quad z\in \C.
\end{align}
Then the complex function $F^{\LL, \RR}_{E, F}(z)$ $(z\in \C)$ defined above 
 satisfies  the desired property.  
Namely, $F^{\LL, \RR}_{E, F}(z)$ $(z\in \C)$ is continuous and bounded on the closed strip
  $0 \le \operatorname{Im} z \le \beta$, holomorphic
 on its interior, and satisfies the KMS relation     
\begin{equation}
\label{eq:KMS-rel-EF}
F^{\LL, \RR}_{E, F}(t)=\psiL\gotimes\psiR \bigl(E \altL\gotimes\altR (F) \bigr),\quad F^{\LL, \RR}_{E, F}(t+i \beta)
=\psiL\gotimes\psiR \bigl( \altL\gotimes\altR (F) E\bigr). 
\end{equation}
Therefore, this completes the proof.
\end{proof}

\subsection{Donald's formula of quantum mutual entropy} 
\label{subsec:DONALD}
In this subsection, we introduce a notable identity of 
the quantum relative entropy, given in Equation (5.22) of \cite{OHYA-PETZ}. 
It is attributed  to Matthew J. Donald in \cite{OHYA-PETZ}, with no 
 original publication indicated.
Although this formula appeared in \cite{HO-SA} in the framework of algebraic quantum field theory, its usefulness, in particular for quantum statistical mechanics, does not seem to be well recognized.
We make essential use of  Donald's formula to derive a key estimate in  
the proof of Theorem~\ref{thm:ONE-FINITE}.
For this  purpose, we shall present it in the form of mutual entropy,  
both for quantum spin lattice systems and for fermion lattice systems. 

\begin{pro}[Donald's formula of quantum mutual entropy for 
 both quantum spin and fermion lattice systems]
\label{prop:DONALD}
Let $\varpi$ be any faithful state on $\AlZ$.
Let $\rhoL$ be any faithful state on $\AlL$, 
 and  $\rhoR$ be any faithful state on $\AlR$. 
Then, for the quantum spin lattice system, 
 the following identity concerning the quantum relative entropy 
holds, including the case  where  both sides are infinite{\rm{:}}
\begin{equation} 
\label{eq:Mutual-DON}
I_\varpi(\ZL:\ZR)
=S(\varpi \mid \rhoL \otimes \rhoR)-
S(\vrpL \mid \rhoL)-S(\vrpR \mid \rhoR).
\end{equation} 
For the fermion lattice system, assume in addition that 
 all  states $\varpi$ on $\AlZ$, 
 $\rhoL$ on $\AlL$, 
 and  $\rhoR$ on $\AlR$ are even.
Then the following identity also holds, including the case 
 where  both sides are infinite{\rm{:}}
\begin{equation} 
\label{eq:Mutual-DON-FER}
I_\varpi(\ZL:\ZR)
=S(\varpi \mid \rhoL \gotimes \rhoR)-
S(\vrpL \mid \rhoL)
-S(\vrpR \mid \rhoR).
\end{equation} 
\end{pro}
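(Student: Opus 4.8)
The plan is to prove the equivalent additive form
\begin{equation*}
S(\varpi \mid \rhoL \otimes \rhoR)
= I_\varpi(\ZL:\ZR) + S(\vrpL \mid \rhoL) + S(\vrpR \mid \rhoR),
\end{equation*}
together with its fermionic counterpart obtained by replacing $\otimes$ with $\gotimes$, where $I_\varpi(\ZL:\ZR)=S(\varpi\mid\vrpL\otimes\vrpR)$ by \eqref{eq:MUGEN-LR}. Since all four states are faithful, the products $\rhoL\otimes\rhoR$ and $\vrpL\otimes\vrpR$ and their reductions are faithful, so every term is a well-defined non-negative Araki relative entropy; this form is thus an identity in $[0,\infty]$ with no $\infty-\infty$ ambiguity, and the displayed identity of the proposition is its rearrangement, with both sides being $+\infty$ simultaneously. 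The heuristic is the factorization $\log(\rhoL\otimes\rhoR)=\log\rhoL\otimes\id+\id\otimes\log\rhoR$, which splits off the two single-sided relative entropies; the proof realizes this rigorously by finite-volume approximation rather than by manipulating unbounded relative modular operators directly.

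First I would establish the identity locally. Fix finite $\Lam_\LL\Subset\ZL$, $\Lam_\RR\Subset\ZR$, set $\Lam=\Lam_\LL\cup\Lam_\RR$, and restrict every state to the finite-dimensional matrix algebra $\AlZ(\Lam)$. In the quantum spin case $\AlZ(\Lam)=\AlZ(\Lam_\LL)\otimes\AlZ(\Lam_\RR)$, so each relative entropy is an ordinary Umegaki relative entropy \eqref{eq:UMEGAKI} computed from local density matrices. Expanding the four terms and using $\log(D_{\rhoL}\otimes D_{\rhoR})=\log D_{\rhoL}\otimes\id+\id\otimes\log D_{\rhoR}$ together with the partial-trace identity $\Tr\!\bigl(D_{\varpi}(\log D_{\rhoL}\otimes\id)\bigr)=\Tr\!\bigl(D_{\vrpL}\log D_{\rhoL}\bigr)$, all the single-region von Neumann entropy contributions cancel in pairs, leaving exactly the local version of the additive identity. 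This is the routine finite-dimensional computation underlying \eqref{eq:MIrelative}.

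Next I would pass to the thermodynamic limit $\Lam_\LL\nearrow\ZL$, $\Lam_\RR\nearrow\ZR$. The key analytic input is the monotonicity of Araki's relative entropy under restriction to subalgebras \cite{RIMSI76}: along the increasing net of finite regions each of the four quantities increases to its full-system value, namely $S(\varpi\!\!\upright_\Lam\mid(\rhoL\otimes\rhoR)\!\!\upright_\Lam)\nearrow S(\varpi\mid\rhoL\otimes\rhoR)$, and likewise $S(\varpi\!\!\upright_\Lam\mid(\vrpL\otimes\vrpR)\!\!\upright_\Lam)\nearrow I_\varpi(\ZL:\ZR)$, $S(\vrpL\!\!\upright_{\Lam_\LL}\mid\rhoL\!\!\upright_{\Lam_\LL})\nearrow S(\vrpL\mid\rhoL)$, and the corresponding right-sided term. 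This is exactly the convergence already invoked in \eqref{eq:MUT-IJ-RELara}. Because all four terms are non-negative and monotone, the limit of the local identity is the global additive identity in $[0,\infty]$, proving the spin case including the infinite case.

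The hard part will be the fermion case, where $\otimes$ is replaced by the graded product $\gotimes$ and $\AlZ(\Lam)$ is the CAR algebra rather than an honest tensor product. Here I would use the standing evenness hypotheses on $\varpi$, $\rhoL$, $\rhoR$, hence also on the reductions $\vrpL$, $\vrpR$: for even states the product-state extension $\gotimes$ exists and is again even \cite{AM2003EXT}, and the relative entropy of an even state against an even product reference state admits the same $\log$-factorization as in the tensor case — this is precisely the mechanism already used in \eqref{eq:FER-ScI-relative-form} and \eqref{eq:FER-MUT-IJ-RELara}, resting on the additivity of von Neumann entropy for CAR product states \cite{MARKOV, SSA, AM-LTS}. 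Granting the local fermionic identity, the monotone-limit step is verbatim, since Araki's relative entropy and its monotonicity are purely operator-algebraic and insensitive to the grading. The one point requiring genuine care is checking that the graded structure contributes no extra phase to the density-matrix decomposition of $\rhoL\gotimes\rhoR$: the evenness of both factors is exactly what forces the potentially obstructive $\theta$-signs of \eqref{eq:theta} to cancel, so that the local computation reduces to the tensor case.
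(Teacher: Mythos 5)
Your proposal is correct, but it takes a genuinely different route from the paper. The paper's proof involves no finite-volume limit at all: for the spin case it invokes Donald's identity for tensor-product systems verbatim (Corollary~5.20 of \cite{OHYA-PETZ}), and for the fermion case it applies the conditional-expectation form of that identity (Theorem~5.15 of \cite{OHYA-PETZ}) twice --- once with the conditional expectation from $\AlZ$ onto $\AlL$ relative to $\vrpL\gotimes\rhoR$, then symmetrically for the right side --- the existence and uniqueness of these conditional expectations being supplied by Theorem~4.7 of \cite{RMP-AM}. You instead prove the additive form of the identity by an explicit density-matrix computation on the finite-dimensional algebras $\AlZ(\Lam_\LL\cup\Lam_\RR)$ and then pass to an increasing limit. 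Both arguments are sound; yours is more self-contained, makes the cancellation mechanism and the $[0,\infty]$-valued bookkeeping explicit, and treats the spin and fermion cases uniformly (for even states the local density matrices of $\rhoL$ and $\rhoR$ are even, hence commute, so $\log(D_1D_2)=\log D_1+\log D_2$, and the normalized trace on the CAR algebra is a product state across disjoint regions), whereas the paper's route buys brevity and avoids approximation, at the price of the distinctly noncommutative input that the relevant conditional expectations exist. Two points in your write-up need sharpening. First, monotonicity of Araki's relative entropy under restriction gives only that each local term is bounded above by its global value; the assertion that the increasing net converges \emph{exactly} to the global value is the martingale (density) property of relative entropy along increasing nets of subalgebras with dense union --- a separate known theorem recorded in \cite{OHYA-PETZ}, and the same property the paper tacitly uses in \eqref{eq:MUT-IJ-RELara} --- so it should be cited as such rather than as ``monotonicity.'' Second, your claim that the subtracted form \eqref{eq:Mutual-DON} is a rearrangement of the additive form ``with both sides $+\infty$ simultaneously'' is accurate only when $S(\vrpL \mid \rhoL)$ and $S(\vrpR \mid \rhoR)$ are both finite; if one of them is infinite, the right-hand side of \eqref{eq:Mutual-DON} is formally $\infty-\infty$ and only the additive identity is meaningful --- a caveat the paper's own statement shares, and one that is immaterial for the application, which uses only the positivity of the two subtracted terms.
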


\begin{proof}
We note at the outset that  
$\rhoL$ and $\rhoR$ in the proposition need  not be marginal states of 
some state $\rho$ on $\AlZ$, although they may well be.

The proof for general tensor-product systems is given in Corollary~5.20 of \cite{OHYA-PETZ}. Since the quantum spin lattice system is a particular instance of a tensor-product system, with the algebraic structure $\AlZ=\AlL \otimes \AlR$, 
the identity \eqref{eq:Mutual-DON} follows immediately.

We now turn to the proof for the fermion lattice system. 
As shown in Corollary~5.20 of \cite{OHYA-PETZ}, the identity  
\begin{equation} 
\label{eq:CORpetz}
S(\varpi \mid \rhoL\gotimes\rhoR)=
S(\vrpL \mid \rhoL)+
S(\varpi \mid \vrpL\gotimes\rhoR)
\end{equation} 
follows from the following general relation (Theorem 5.15 of \cite{OHYA-PETZ}):
\begin{equation} 
\label{eq:THE515}
S(\varpi \mid \rhoL\circ E)=
S(\vrpL \mid \rhoL)+
S(\varpi \mid \varpi \circ E),
\end{equation} 
 where $E$ is now taken to be 
 the conditional expectation from $\AlZ$ onto $\AlL$, 
 relative to the product state $\vrpL\gotimes\rhoR$. 
 The unique existence of such a conditional expectation 
 follows from Theorem 4.7 of \cite{RMP-AM}, where the tracial state on $\AlR$ used there is to be replaced by $\rhoR$ on $\AlR$. 
Analogously,  we obtain 
\begin{equation} 
\label{eq:CORpetzSYM}
S(\varpi \mid \vrpL\gotimes\rhoR)
=S(\vrpR \mid \rhoR)+
S(\varpi \mid \vrpL\gotimes\vrpR).
\end{equation}
Since $S(\varpi \mid \vrpL\gotimes\vrpR)=I_\varpi(\ZL:\ZR)$, 
by combining \eqref{eq:CORpetz} and \eqref{eq:CORpetzSYM}, 
we obtain \eqref{eq:Mutual-DON-FER}.
\end{proof}

\subsection{Completion of the proof; the final step} 
\label{subsec:complete}
In this final subsection, we complete the proof of Theorem~\ref{thm:ONE-FINITE}
 using the results in Subsections~\ref{subsec:AG-LR}, \ref{subsec:PRO-EXT}, and \ref{subsec:DONALD}.

We  apply $\vp$ to 
$\varpi$, $\vppotL$ to $\rhoL$, and 
 $\vppotR$ to $\rhoR$ in Proposition~\ref{prop:DONALD}, 
as these are all KMS (modular) states. 
Accordingly, for the quantum spin lattice system, 
the formula  \eqref{eq:Mutual-DON} yields 
\begin{equation} 
\label{eq:DON-APPLY-KMS}
I_\vp(\ZL:\ZR)
=S(\vp \mid \vppotL \otimes \vppotR)-S(\vpL \mid \vppotL)-S(\vpR \mid \vppotR), \end{equation} 
and for the fermion lattice system, the formula \eqref{eq:Mutual-DON-FER}
 yields
\begin{equation} 
\label{eq:DON-APPLY-KMS-FER}
I_\vp(\ZL:\ZR)
=S(\vp \mid \vppotL\gotimes\vppotR)-S(\vpL \mid \vppotL)-S(\vpR \mid \vppotR).
\end{equation} 

By the positivity of relative entropy, 
$S(\vpL \mid \vppotL)\ge 0$ and $S(\vpR \mid \vppotR)\ge 0$, 
for the quantum spin lattice system, we have  
\begin{equation} 
\label{eq:ine-APPLY-KMS}
I_\vp(\ZL:\ZR)
\le S(\vp \mid \vppotL \otimes \vppotR), 
\end{equation} 
and for the fermion  lattice system, we have
\begin{equation} 
\label{eq:ine-APPLY-KMS-FER}
I_\vp(\ZL:\ZR)
\le S(\vp \mid \vppotL\gotimes\vppotR).
\end{equation} 

By  Proposition~\ref{prop:AG-PROD-LR}, 
for the quantum spin lattice system, 
\begin{equation} 
\label{eq:USIRO-PERs}
S(\vp \mid \vppotL \otimes \vppotR)=S(\vp \mid [\vpbW]),
\end{equation} 
and for the fermion lattice system, 
\begin{equation} 
\label{eq:USIRO-PERf}
S(\vp \mid \vppotL\gotimes\vppotR)=S(\vp \mid [\vpbW]).
\end{equation} 
By the formula for quantum relative entropy 
under perturbations in \cite{HP94} 
(see Remark~\ref{rem:HP} below for details), 
we obtain 
\begin{equation}
\label{eq:twoWLR}
S(\vp \mid [\vpbW])  \le 2 \beta \lVert \WLR \rVert.
\end{equation}

For the quantum spin system, the combination of 
\eqref{eq:ine-APPLY-KMS} \eqref{eq:USIRO-PERs} and  \eqref{eq:twoWLR},   
and for the fermion lattice system, the combination of \eqref{eq:ine-APPLY-KMS-FER} \eqref{eq:USIRO-PERf} and \eqref{eq:twoWLR}, respectively, yields 
the estimate 
\begin{equation} 
\label{eq:GOAL}
I_\vp(\ZL:\ZR)
\le 2 \beta \lVert \WLR \rVert.
\end{equation} 
This completes the proof of Theorem~\ref{thm:ONE-FINITE}.

\begin{remark}
\label{rem:HP}
For any self-adjoint element $h=h^{\ast}\in \Al$, consider 
 the perturbed state $[\omeh]$ of a modular state $\ome$ 
as  in Subsection \ref{subsec:GIBBS}.
From  the variational expression of the quantum relative entropy 
\cite{PETZ88}, which generalizes the 
 Golden-Thompson inequality  for von Neumann algebras  
 \cite{ARAKI-GTPB}, it follows that 
\begin{equation}
\label{eq:REL-perturbed-SA}
S\left([\omeh]  \mid \ome\right) \le \ome(h)-[\omeh](h) \le 2 \lVert h \rVert.
\end{equation}
By the chain rule property of the perturbed states  \cite{ARAKI-RH73}, 
 the inequality \eqref{eq:REL-perturbed-SA} also implies 
\begin{equation}
\label{eq:TSUKAU-REL-ENT-SA}
S\left(\ome  \mid [\omeh]\right)  \le 2 \lVert h \rVert.
\end{equation} 
\end{remark}

\begin{remark}
\label{rem:POT-ASSUMPTION}
As previously noted,  the assumption of 
 Theorem~\ref{thm:ONE-FINITE} is not optimal. 
For example, Dyson-type  one-dimensional classical lattice models  
with decay exponent $\alpha>2$ exhibit an analogous property  
as in Theorem~\ref{thm:ONE-FINITE}; see \cite{ENTER-new}.
\end{remark}

\section{Discussion}
\label{sec:DIS}
In this final section, we summarize our results  
and discuss some open problems from a broader perspective.

We have provided a mathematically rigorous definition of quantum  mutual entropy  in the quasi-local $\cstar$-system $\Al$ 
representing quantum spin lattice systems and fermion lattice systems 
in Section~\ref{sec:MUT-Def} under a fairly general setup.
 Our general formulation of the mutual entropy
 does not rely on  Tomita-Takesaki theory and  can also apply to ground (pure) states, as shown in Subsection~\ref{subsec:GR}.

With this quantum mutual entropy, 
we  have established the thermal area law   
 for quantum spin lattice 
 systems as in Theorem~\ref{thm:MAIN} 
and   for fermion  lattice systems as in Theorem~\ref{thm:FER-MAIN}. 

Our thermal area law in the $\cstar$-algebraic framework 
is derived from  the LTS condition rather than the KMS condition. 
For the potential $\pot$ in these theorems,  
 only the existence 
 of surface energies within the $\cstar$-system $\Al$ is assumed; 
a global time evolution on $\Al$ generated by $\pot$  is not required.
This generality is meaningful  both  physically and mathematically,
 since the \emph{surface} energy is the  essential ingredient for 
formulating the \emph{area} law and, from a technical perspective,  
 it is difficult to deduce the  
$\cstar$-dynamics from the mere existence of surface energies; such an existence has been established only for one-dimensional quantum lattice systems \cite{KIS-DD}.

\subsection{On extensions of LTS and the thermal area law}
\label{subsec:DISCUSSION-LTS}
The notion of LTS, as its  name suggests, is defined for 
  every local subsystem embedded in the infinitely extended  $\cstar$-system $\Al$. It is  suitable for the present purpose to treat local subsystems 
as open systems rather than  closed ones.  
Accordingly, the thermal area law holds 
for any \emph{specific} finite region $\I$, as shown in  
 Theorems~\ref{thm:MAIN} and \ref{thm:FER-MAIN}.
 This generality  suggests  a natural  extension of the thermal area law 
to \emph{metastable} states \cite{SEWrep}.

Conjecture 5.3.6 of \cite{SEWbook} proposes a generalization 
of the LTS condition from established quantum lattice systems to continuous quantum systems.  If such an extension is realized in certain 
 boson-field models on continuous spaces such as $\Rnu$ with $\nu \in \NN$, 
then a corresponding thermal area law  would follow, according to the model-independent proof of Theorem~\ref{thm:MAIN}. 
In particular, the operator-algebraic approach to the thermal area law 
for free boson models (see \cite{AW63} and \cite{BRA2}) may be worthwhile in comparison with the study  for free fermion models \cite{Bernigau}.

\subsection{Thermal destruction of quantum entanglement}
\label{subsec:Deduction}
The temperature dependence of quantum entanglement has been studied 
in several finite-qubit models \cite{ABV, NIE, TOTH}. The computations reported therein indicate that, in general, with some exceptions, quantum entanglement increases with the inverse temperature $\beta$ (or equivalently decreases with the temperature $T$).

In the present paper, we  consider certain infinite-qubit systems, namely,  
the quantum spin system and 
the fermion lattice system on $\Z$, with two subsystems 
(often  called Alice and Bob) given by the infinite left-sided and right-sided 
subsystems $\AlL$ and  $\AlR$.    
For critical ground states on $\Al$, which violate the split property,   
the quantum entanglement between $\AlL$ and $\AlR$ is infinite \cite{KMSW06}; 
this further enables ``embezzlement of entanglement"  \cite{CRITICAL}. 
Corollary~\ref{cor:KIERU} reveals a striking reduction of quantum entanglement from an infinite amount at $\beta=\infty$ to finite values for all $0\le \beta<\infty$.  Note that in \cite{HIGH} the disappearance of quantum entanglement at small $\beta$ (i.e., at high $T$) has been discussed, whereas  Corollary~\ref{cor:KIERU} concerns the behavior of quantum entanglement at and around $\beta=\infty$ (i.e., $T\approx0$). 

Taken together, these observations naturally raise the question of how the estimate given by  our thermal area law  can be affected by $\beta$, or possibly improved at certain values of $\beta$.

\subsection{Toward a thermal area law in algebraic quantum field theory}
\label{subsec:DIS-AQF}
The area law for vacuum states in AQFT has been formulated in \cite{HO-SA}. 
We raise the question of whether it is possible to formulate  a thermal area law in AQFT, in analogy with the case of quantum lattice systems discussed in this paper.  This problem appears to be intriguing, 
since thermal equilibrium (KMS) states of AQFT 
  can exhibit both 
 the split property \cite{DL-SPLIT} and the Reeh-Schlieder property \cite{RS}.
These two properties 
represent  somewhat contrasting aspects of state correlation---
independence versus quantum entanglement; see \cite{IWA}.

 The split property for KMS states with respect to a 
free quantum field model \cite{BUJU86}  
  has been investigated in \cite{Nozawa}, while 
the Reeh-Schlieder property for KMS states 
 has been shown under some general assumptions of AQFT in \cite{JAC}. 
We have derived the thermal area law from the LTS, 
a variational principle selecting thermal equilibrium states.
To our knowledge, a similar variational formulation  of relativistic KMS states has not yet been established. Such a direction may open up new possibilities for studying the temperature dependence of
 quantum entanglement in massive and massless quantum field models.

\subsection{Modular Hamiltonians of modular states}
For the closing of this paper, we suggest a new direction of research. 

The modular flows (modular automorphism groups) are a key concept  
in algebraic quantum field theory (AQFT); see \cite{IWA, HAAG}.
On local regions in Minkowski spacetime, 
a vacuum state gives rise to  modular states, and Tomita-Takesaki theory
enters as a crucial mathematical structure; 
we refer to  \cite{ARA-JMP64} as a pioneering work, and also \cite{FRE85}.
Another prominent example is the modular flow on Rindler wedges 
induced by the  vacuum state in Minkowski spacetime. 
It admits a clear geometric description as Lorentz boost transformations, forming the basis of the Unruh effect; see \cite{SEWPCT}.

Recently, the study of modular Hamiltonians (also called 
entanglement Hamiltonians) has flourished, with a wide range  of settings 
in both quantum field theory and quantum statistical mechanics; see e.g. \cite{CAS-HUE-LEC} and \cite{CARDY}.

In this paper, we  essentially consider modular Hamiltonians  
 of modular states. More precisely,   
a modular (KMS) state  on the infinitely extended total system $\Al$
 gives rise to modular states on local subsystems embedded in $\Al$ 
by restriction, whereas a vacuum state 
yields modular states on local subsystems in AQFT. 
Our setup falls into the class of  
`modular Hamiltonians for lattice models at finite temperature'  
 stated in Section 2.4 of \cite{DALMONTE-review}.

It is evident that the resulting modular Hamiltonians of KMS states 
 in quantum lattice systems are 
non-trivial unless the potential $\pot$ consists of
 one-point (i.e., non-interacting) interactions. 
Nonetheless, they still allow for control 
through the mutual entropy, as we have 
established in Theorems~\ref{thm:MAIN}, \ref{thm:FER-MAIN}, 
\ref{thm:ONE-FINITE}.

The discrepancy between the modular Hamiltonians (given by reduced states of a KMS state) and the local Hamiltonians (given directly by the potential $\pot$) has not been fully explored. The importance 
 of this discrepancy has been discussed in recent physics papers such as  
\cite{LocalTemp, MILLER}.  
However, within the $\cstar$-algebraic framework, the non-trivial nature of  
 this discrepancy had been addressed in several earlier works such as   
 \cite{ARA-ION}, \cite{HP94}, \cite{FELLER}, \cite{Master97}, and 
\cite{NAR}. The present paper may be regarded as one contribution
within this line of investigations, and we hope that it will stimulate further developments.

\bmhead{Acknowledgements}
This work was supported by Kakenhi (grant no. 21K03290) and Kanazawa University.

\section*{Declarations}

\begin{itemize}
\item Conflict of Interest Statement

The authors declare that they have  no conflict of interest. 

\item Data Availability Statement

No datasets were generated or analyzed during the current study.

\end{itemize}


\end{document}